\documentclass[11pt]{article}

\newif\ifiwoca
\iwocafalse

\usepackage[english]{babel}
\usepackage{Macros}

\title{Exploiting Low-Rank Objective Structure \\ in Discrete Quadratic Optimization}

\author{
    Ria Stevens \\ Rice University \\ ria.stevens@rice.edu
    \and
    Fangshuo Liao \\ Rice University \\ fangshuo.liao@rice.edu
    \and
    Barbara Su \\ Rice University \\ barbara.su@rice.edu
    \and
    Thanasis Hadjidimoulas \\ Rice University \\ th81@rice.edu
    \and
    Jianqiang Li \\ Rice University \\ jl567@rice.edu
    \and
    Anastasios Kyrillidis \\ Rice University \\ anastasios@rice.edu
}
\date{}

\begin{document}
\maketitle

\begin{abstract}
    
We study the problem of maximizing a complex-valued quadratic form over the $K^{\text{th}}$ roots of unity.
We show that when the objective matrix $\bfQ^\star \in \bbC^{n \times n}$ of the quadratic has rank~$r$, the global maximizer belongs to a candidate set of size $\OO(rn^{2r-1})$. This set can be constructed deterministically in $\OO(rn^{2r+1})$ time by enumerating the vertices of a hyperplane arrangement in $\bbR^{2r}.$ 
The algorithm is embarrassingly parallel; with~$P$ processors, the time complexity drops to $\OO(r n^{2r+1}/P)$.
For approximately low-rank settings, where the objective matrix is a noise-perturbed variant of a rank-$r$ matrix, we prove that applying our framework to a spectral truncation yields a multiplicative $(1 - \OO(\norm{\bfH}_2 / \delta^{\star}))$-approximation guarantee, where $\delta^{\star}$ denotes the eigengap of the underlying rank-$r$ matrix and $\bfH$ represents the perturbation. To scale to high-dimensional problems, we establish a randomized sampling variant. We prove that uniformly sampling $S \geq \OO(1/\varepsilon^{r-1})$ candidates achieves a $(1-\varepsilon)\cos^2(\pi / K)$-approximation of the optimal rank-$r$ solution with high probability. Crucially, this sample size is entirely independent of $n$, reducing the overall runtime to $\OO(S \cdot n^2)$. Computational experiments on synthetic benchmarks and large-scale graphs for \textsc{Max-3-Cut} confirm that our algorithms match or exceed semi-definite programming solution quality on structured instances while enabling massive parallelization across heterogeneous hardware and scaling seamlessly to problems where $n \geq 10^6$.

\end{abstract}

\clearpage

\section{Introduction}

Discrete quadratic optimization captures a wide class of fundamental problems spanning combinatorial optimization, signal processing and statistical physics. The goal is to maximize a quadratic objective over a set of discrete variables, such as the Boolean hypercube or the discrete roots of unity. Because this framework includes NP-hard problems such as \textsc{Max-Cut}, a substantial body of work has focused on polynomial-time approximation algorithms. In this work, we show that low-rank structure in the objective matrix enables exact polynomial-time algorithms.

Given a Hermitian positive semi-definite (PSD) matrix $\bfQ^\star \in \bbC^{n \times n}$, we consider the discrete complex quadratic maximization problem:
\begin{equation}
\label{eq:max_quad}
 \optstar := \max_{\bfz \in \AK^n} \bfz^\dagger \bfQ^\star \bfz,
\end{equation}
where for an integer $K \geq 2$, $\AK$ represents the discrete alphabet of $K$-th roots of unity:
\begin{equation}
\label{eq:AK_def}
    \AK := \bigl\{e^{2\pi\imag k/K} : k = 0, 1, \ldots, K-1\bigr\} \subset \bbC.
\end{equation}

When $K=2$ and $\bfQ^\star$ is a real matrix, this problem is known as quadratic unconstrained binary optimization (QUBO) and has applications in graph partitioning \cite{goemans1995improved}, computational biology \cite{forrester2008quadratic}, and machine learning \cite{date2021qubo}, among others \cite{punnen2022quadratic}. Instances of \eqref{eq:max_quad} of particular interest in theoretical computer science are \textsc{Max-Cut} and \textsc{Max-3-Cut}. Specifically, when $K=2$ or $3$ and $\bfQ^\star$ is a graph Laplacian, solving \eqref{eq:max_quad} is equivalent to solving \textsc{Max-Cut} and \textsc{Max-3-Cut}, respectively. The formulation also captures phase estimation problems in signal processing. In $K$-ary Phase Shift Keying Multiple-Input Multiple-Output (MIMO) detection, decoding a vector of transmitted phases $\bfz \in \mathcal{A}_K^n$ under maximum-likelihood criteria reduces to a complex discrete quadratic maximization. Similarly, in angular synchronization, one aims to estimate $n$ unknown phases given noisy pairwise relative phase measurements, which amounts to maximizing \eqref{eq:max_quad} when $\bfQ^\star$ is the complex Hermitian measurement matrix \cite{singer2011angular}.

Because finding an exact solution is NP-hard, standard approaches to \eqref{eq:max_quad} often rely on semidefinite programming (SDP) relaxations. In a seminal work, Goemans and Williamson~\cite{goemans1995improved} proposed an SDP relaxation of \textsc{Max-Cut} which, together with a random-hyperplane rounding technique, guarantees a $0.878$ approximation ratio. Under the Unique Games Conjecture, this ratio is optimal \cite{khot2007optimal}. Frieze and Jerrum~\cite{frieze1997improved} generalized the SDP-based approach to Max-$K$-Cut, proving a $0.800217$ ratio for $K=3$. A few years later, Goemans and Williamson~\cite{goemans2001approximation} introduced a complex SDP relaxation achieving $0.836008$ for $K=3$.
So, Zhang, and Ye~\cite{so2007approximating} extended this approach to general formulations of \eqref{eq:max_quad} involving any Hermitian, PSD $\bfQ^\star$, achieving an approximation ratio of $K^2 \sin^2\left(\pi / K\right) / {4\pi}$. While these SDP formulations offer strong worst-case guarantees, their $\OO(n^{3.5})$ computational complexity limits their scalability in high-dimensional settings. 
Other methods include heuristic algorithms, which often scale better than SDPs but are sensitive to initialization and may get stuck in local optima \cite{panxing2016genetic,ma2017moh,gui2018bqp}, and learning-based algorithms, which scale well but fail to provide rigorous approximation guarantees \cite{abe2019solving,tonshoff2022one,schuetz2022combinatorial,barrett2022learning,qiu2024ros}. Recent work has developed exact solvers for \eqref{eq:max_quad} via sum-of-squares hierarchies, though these approaches remain exponentially expensive for large networks~\cite{al2025exact}.

In this work, we take a different approach and show that \emph{low-rank structure in the objective matrix} can be exploited to construct exact polynomial-time algorithms. To motivate the use of low-rank structure, consider binary ($K=2$) quadratic optimization with a rank-$1$ objective matrix $\bfQ^{\star} = \lambda \bfq\bfq^{\top}$. In this case, the objective in \eqref{eq:max_quad} can be rewritten as $\bfz^{\top}\bfQ^{\star}\bfz = \lambda(\bfz^{\top}\bfq)^2$ and the maximizer is simply $\bfz^\star = \mathrm{sign}(\bfq)$. Given $\bfq$, finding this solution requires only linear time. 

Unfortunately, this technique does not readily extend to complex $\bfQ^{\star}$ or $K > 2$. For $K = 3$, for example, each $z_i$ has three choices instead of two, and the sign function has no natural ternary analogue. Our algorithms overcome this barrier for general $K$ and arbitrary rank~$r$, at the cost of increasing the set of maximizers to search over from $1$ to $\OO( rn^{2r-1})$. Our key observation is geometric: for a rank-$r$ objective, the assignment space can be partitioned by an arrangement of $\OO(nK)$ hyperplanes in $\RR^{2r}$, with the global maximizer corresponding to a vertex of this arrangement.
Enumerating these vertices yields a candidate set of polynomial size that is guaranteed to contain the optimal solution.

This geometric approach allows us to handle objective matrices exhibiting either \emph{exact} or \emph{approximate} low-rank structure. Specifically, we assume that either $\bfQ^\star$ has exact rank $r \ll n$, or that the objective matrix is a perturbation $\bfQ = \bfQ^{\star} + \bfH$, where $\bfQ^{\star}$ is strictly rank-$r$ and $\bfH$ represents an unstructured additive noise or residual matrix. These exact and approximate low-rank objective matrices appear in problems such as limited-feedback MIMO beamforming problems \cite{zheng2007mimo, leung2010optimal} and noncoherent sequence detection \cite{papailiopoulos2010maximum}. 

\subsection{Contributions}

Our contributions are as follows.
\begin{enumerate}
    \item \emph{Exact algorithms for rank-$r$ matrices.} We provide a deterministic algorithm (\Cref{alg:rankr}) relying on candidate enumeration over a hyperplane arrangement that \textit{exactly} solves \eqref{eq:max_quad} for any $K \geq 2$ in $\OO\left( rn^{2r+1
    }\right)$ time when $\bfQ^\star$ has rank $r \geq 2$ (\Cref{thm:rankr}). When $\bfQ^\star$ has rank $1$, this algorithm runs in $\OO\left(n^2 \right)$ time (\Cref{thm:rank1}). These algorithms are embarrassingly parallel, with an $\OO\left( rn^{2r+1
    } / P \right)$ runtime across $P$ processors.
    \item \emph{Approximation guarantees under perturbation.} Real-world matrices are rarely exactly low-rank. To address this, we prove in \Cref{thm:mult} under a mild incoherence assumption that, when the observed matrix $\bfQ = \bfQ^\star + \bfH$ is the perturbation of a rank-$r$ signal, applying our algorithm to a rank-$r$ truncation of $\bfQ$ achieves a multiplicative approximation ratio of $1-\OO\left(\norm{\bfH}_2 / \delta^\star \right)$, where $\delta^\star$ is the eigengap of $\bfQ^\star$.
    \item \emph{Randomized candidate sampling.}
        Since exhaustive enumeration costs $n^{\Theta(r)}$ time, we analyze a randomized
        alternative: draw $S$ uniform directions on the complex unit sphere and keep the
        best rounded assignment. In \Cref{thm:rr_tail}, we prove that this strategy yields a $(1-\varepsilon)\cos^2(\pi/K)$-approximation with high probability under \Cref{asm:general_position}, where the required sample size $S$ is entirely independent of $n$.
    \item \emph{Empirical validation.} We evaluate the practical performance of our algorithms in solving \textsc{Max-3-Cut}. Through extensive experiments on structured and unstructured graphs, we show that our low-rank algorithms achieve performance competitive with state-of-the-art heuristic algorithms. Further, we demonstrate that the randomized alternatives to our deterministic algorithms can find near-optimal solutions with little to no dependence on $n$ in the size of the sample set.
\end{enumerate}

\subsection{Related Works}

In addition to the aforementioned literature, a rich body of work has explored semidefinite programming (SDP) relaxations for discrete quadratic optimization problems~\cite{andersson2001new,goemans2001approximation,burer2002rank,burer2003nonlinear,zhang2006complex,newman2018complex}. 
The concept of low-rank structure in discrete optimization has traditionally been leveraged via the Burer-Monteiro matrix factorization technique~\cite{alizadeh1997complementarity,barvinok1995problems,pataki1998rank,burer2002rank,burer2003nonlinear,park2017non}. This method enforces low-rankness by factorizing the full-dimensional SDP matrix variable, optimizing over a significantly smaller non-convex search space.
Another parallel line of research analyzes \textsc{Max-Cut} algorithms specifically on low-threshold rank graphs, which are characterized by adjacency matrices possessing only a small number of large eigenvalues~\cite{barak2011rounding, arora2015subexponential,oveis2015regularity}.
Our framework differs fundamentally from these paradigms: instead of modifying the optimization variable or constraining the adjacency spectrum, we directly exploit instances with low-rank \emph{Laplacians}.

More closely aligned with our efforts, discrete quadratic maximization involving structured low-rank objective matrices has been investigated in~\cite{ferrez2005solving,leung2010optimal,kyrillidis2011rank,kyrillidis2014fixed}. Among these, the combinatorial framework proposed by Kyrillidis and Karystinos~\cite{kyrillidis2014fixed} is the most conceptually related to our own. However, their approach lacks generality as it is restricted to specific settings and does not scale to arbitrary alphabet sizes $K$. Furthermore, they do not characterize the theoretical approximation guarantees of their algorithm when subjected to approximately low-rank or noise-perturbed matrices, nor do they consider accelerating search space exploration via randomized candidate sampling. Consequently, our work provides both a significant algorithmic generalization and the first robust theoretical framework for handling approximate low-rank systems.

\section{Preliminaries}

\subsection{Notation}
Vectors are bold lowercase ($\bfz$), matrices bold uppercase ($\bfQ$).
The conjugate transpose is $(\cdot)^{\dagger}$; for real vectors, $(\cdot)^{\dagger} = (\cdot)^{\top}$.
The $i$-th row and $j$-th column of a matrix $\mathbf{A}$ are denoted $\mathbf{A}_{i,:}$ and $\mathbf{A}_{:,j}$, respectively; the $i$-th entry of a vector $\bfz$ is~$z_i$.
For a positive integer $n$, we write $[n] = \{1, \ldots, n\}$.
$\Repart(\cdot)$ and $\Impart(\cdot)$ denote real and imaginary parts; $\arg(\cdot)$ denotes the complex argument (computed via the two-argument arctangent). We write $\imag \coloneqq \sqrt{-1}$. We denote by $\langle \cdot, \cdot \rangle$ the standard Hermitian
inner product on $\bbC^r$:
for $\mathbf{u}, \mathbf{v} \in \bbC^r$, $\langle \mathbf{u}, \mathbf{v} \rangle = \mathbf{u}^\dagger \mathbf{v}$.
$\norm{\cdot}_2$ is the spectral norm for matrices and the Euclidean norm for vectors.
$\norm{\cdot}_{\infty}$ is the element-wise maximum absolute value.


\subsection{Reformulation as a double maximization}
\label{sec:reformDoubleMax}

For a rank-$r$, Hermitian PSD objective matrix $\bfQ_r$, we draw from \cite{kyrillidis2014fixed} and reformulate \eqref{eq:max_quad} as a double maximization problem over auxiliary angular variables and coordinate assignments. 
Using the low-rank factorization $\bfQ_r = \bfV \bfV^\dagger$ for $\bfV \in \bbC^{n \times r}$, the quadratic form can be rewritten as the maximization of the squared norm of a complex vector--matrix product:
\begin{equation}
\label{eqn:quad_max-norm}
    \max_{\bfz \in \AK^n} \bfz^{\dagger}\bfQ_r\bfz = \max_{\bfz \in \AK^n} \norm{\bfV^{\dagger}\bfz}_2^2 \propto \max_{\bfz \in \AK^n} \norm{\bfV^{\dagger}\bfz}_2.
\end{equation}
The norm of $\bfV^{\dagger}\bfz$ can be recovered by selecting the direction that aligns it with the real axis. Specifically, by Cauchy--Schwarz,
for any unit vector $\bfc \in \bbC^r$, $\Repart(\bfz^{\dagger}\bfV\bfc) \leq \norm{\bfV^{\dagger}\bfz}_2$,
with equality when $\bfc = \bfV^{\dagger}\bfz / \norm{\bfV^{\dagger}\bfz}_2$. Thus, we can rewrite \eqref{eqn:quad_max-norm} as:
 \begin{equation}
    \max_{\bfz \in \AK^n} \norm{\bfV^{\dagger}\bfz}_2 = \max_{\bfz \in \AK^n}\max_{\bfc:\norm{\bfc}_2 = 1} \Repart(\bfz^{\dagger}\bfV\bfc).
\end{equation}
To search over all possible $\bfc$, we parametrize the complex unit sphere with {hyperpolar coordinates}. In particular, we define the hypercube 
\begin{equation}
\label{eq:hypercube}
    \HH_r := \left(-\frac{\pi}{2}, \frac{\pi}{2}\right]^{2(r-1)} \times \left(-\frac{\pi}{K}, \frac{\pi}{K}\right].
\end{equation}
Then, every unit vector $\bfc \in \bbC^r$ may be uniquely expressed as a function of some $\bfphi \in \HH_r$ as: 
\begin{equation}
\label{eq:hyperpolar_def}
    c_j(\bfphi) = P_{2(j-1)}\bigl(\cos\phi_{2j-1}\sin\phi_{2j} + \imag\sin\phi_{2j-1}\bigr), \quad j = 1, \ldots, r{-}1, \quad {c_r(\bfphi) = P_{2(r-1)}\,e^{\imag\phi_{2r-1}},}
\end{equation}
where $P_k = \prod_{i=1}^{k}\cos\phi_i$ and $P_0 = 1$.
For $r = 1$, this reduces to a single phase $\varphi \in (-\pi/K, \pi/K]$ with $c(\varphi) = e^{\imag\varphi}$. Given this parametrization, we can solve \eqref{eq:max_quad} by solving:
\begin{equation}
\label{eq:rankr_double}
    \max_{\bfphi \in \HH_r}  \sum_{i=1}^{n}  \max_{z_i \in \AK}  \Repart \bigl( \bar{z}_i \cdot \bfV_{i,:}\,\bfc(\bfphi) \bigr).
\end{equation}
We also write $\tilde{\bfc}(\bfphi) \in \bbR^{2r}$ for the real representation of $\bfc$, $(\Repart\{\bfc\}^{\top}, \Impart\{\bfc\}^{\top})^{\top}$. By construction, $\norm{\tilde{\bfc}(\bfphi)}_2 = 1$.


\subsection{Decision boundaries}
Our algorithms work by introducing auxiliary angular variables and analyzing when the optimal assignment of each coordinate $z_i$ changes as these angles vary.
The elements of $\AK$ partition the unit circle into $K$ sectors; the \emph{decision boundaries} are the $K$ rays that bisect adjacent sectors, located at angles $\vartheta_m = \frac{\pi(2m+1)}{K}$ for $m = 0, 1, \ldots, K{-}1$.
The assignment $z_i$ changes precisely when the phase of the corresponding spectral projection crosses one of these rays.
\textcolor{black}{However, not all $K$ boundary rays are geometrically distinct in the auxiliary parameter space. The following definition names the reduced count; the justification follows immediately after and is formalized in \Cref{lem:antipodal}.}
\begin{definition}[Effective boundary count]
\label{def:BK}
For alphabet size $K \geq 2$, the number of geometrically distinct decision boundaries per coordinate is:
\begin{equation}
\label{eq:BK}
B_K = \begin{cases} K/2 & \text{if } K \text{ is even,} \\ K & \text{if } K \text{ is odd.} \end{cases}
\end{equation}
\end{definition}

\noindent
The distinction arises from antipodal symmetry: for even~$K$, the element $-1 \in \AK$, causing boundary rays at angles $\vartheta_m$ and $\vartheta_m + \pi$ to define the same hyperplane in the auxiliary parameter space (\Cref{lem:antipodal}).
For odd $K$ (including the primary case $K = 3$), no two boundary angles differ by exactly $\pi$, so all $K$ boundaries are distinct.


\medskip
\noindent \textbf{Augmented matrix.}
The decision boundary conditions for all $n$ coordinates and all $B_K$ boundaries can be expressed as a single real linear system.
Define rotation factors $\rho_m = e^{-\imag \pi(2m+1)/K}$ for $m = 0, \ldots, B_K{-}1$ and construct the augmented real matrix $\tilde{\bfV} \in \bbR^{nB_K \times 2r}$ by stacking $B_K$ rotated copies of the factor matrix:
\begin{equation}
\label{eq:tildeV}
\tilde{\bfV} = \begin{bmatrix}
\Repart\{\rho_0 \bfV\} & \Impart\{\rho_0 \bfV\} \\
\Repart\{\rho_1 \bfV\} & \Impart\{\rho_1 \bfV\} \\
\vdots & \vdots \\
\Repart\{\rho_{B_K-1} \bfV\} & \Impart\{\rho_{B_K-1} \bfV\}
\end{bmatrix}.
\end{equation}
Each row $\tilde{\bfV}_{j,:}$ defines a hyperplane $\{\bfphi \in \HH_r : \tilde{\bfV}_{j,:}\,\tilde{\bfc}(\bfphi) = 0\}$ in the auxiliary parameter space.
The $j$-th row corresponds to the $m$-th decision boundary of coordinate $i$, where $j = (m-1)n + i$.
The full set of $nB_K$ hyperplanes forms a \emph{hyperplane arrangement} that partitions $\HH_r$ into cells; within each cell, the optimal assignment vector $\bfz$ is constant.
We organize the hyperplanes into $n$ \emph{coordinate groups} $\GG^{(1)}, \ldots, \GG^{(n)}$, where $\GG^{(i)}$ contains the $B_K$ hyperplanes arising from vertex~$i$.

\subsection{Assumptions}


\begin{assumption}[General Position of $\bfV$ and $\Vtilde$]
\label{asm:general_position}
The factor matrix $\bfV \in \bbC^{n \times r}$ and its augmented real embedding $\Vtilde \in \bbR^{nB_K \times 2r}$ (defined in \eqref{eq:tildeV}) satisfy the following conditions:
\begin{enumerate}
    \item The rows of $\bfV$ are in general linear position over $\bbC$. That is, every $r$-subset of rows is linearly independent over $\bbC$, and no row is the zero vector.
    \item Every $(2r-1)$-subset of rows of $\Vtilde$ that contains \emph{at most two} rows from any single group $i \in [n]$ is linearly independent over $\bbR$.
\end{enumerate}
\end{assumption}



\begin{assumption}[$\mu$-incoherence]
\label{asm:incoherence}
For all $i \in [n]$,
\begin{equation}
\label{eq:incoherence}
 \norm{\bfV_{i,:}}_2^2 \leq \frac{\mu^2}{n}\sum_{j=1}^{r}\lambda_j,
\end{equation}
where $\mu \in [1, \sqrt{n}]$ is known as the {incoherence parameter}.
\end{assumption}

\subsection{Warm-Up: Exact Rank-1 Algorithm}
\label{sec:rank1}

In this section, we focus on the case where $\bfQ^\star$ is rank-$1$ and detail the algorithm for solving~\eqref{eq:max_quad} as a warm-up for the general case. We begin by reformulating the problem. As derived in \Cref{sec:reformDoubleMax}, for a rank-$1$ PSD matrix $\bfQ^\star = \lambda \bfq \bfq^\dagger$, our objective~\eqref{eq:max_quad} can be expressed as the double maximization problem~\eqref{eq:rank1_double}. This new formulation requires only a single auxiliary angular variable $\varphi \in (-\pi/K, \pi/K]$ and relies on the unit vector $c(\varphi) = e^{\imag \varphi}$:

\begin{equation}
    \label{eq:rank1_double}
    \max_{\bfz \in \AK^n}  \abs{\bfz^{\dagger}\bfq}^2
    = \max_{\varphi \in \left(-\pi/K, \pi/K \right]}  \sum_{i=1}^{n} \max_{z_i \in \AK}  \Repart \left( \bar{z}_i \cdot q_i \cdot c(\varphi) \right).
\end{equation}


The above expression can be optimized by choosing each $z_i$ independently for a fixed $\varphi$: the optimal $z_i$ is the element of $\AK$ closest in phase to $q_i \cdot c(\varphi)$.
Then, as $\varphi$ varies continuously, the optimal $z_i$ remains piecewise constant. It changes only when the phase of $q_i \cdot c(\varphi)$ crosses a decision boundary between two adjacent sectors of $\AK$.
Over the interval $\varphi \in \left(-\pi/K, \pi/K \right]$, each coordinate $q_i$ crosses exactly one boundary point $\varphi_i$.

\medskip
\noindent \textbf{Computing the boundary points.}
Writing $\theta_i = \arg(q_i)$ for the phase of each eigenvector component, the optimal assignment of coordinate~$i$ at angle $\varphi$ is $z_i = \exp(2\pi\imag k_i / K)$ where $k_i = \floor{K(\theta_i + \varphi)/(2\pi) \rev{{}+ 1/2}}$.
This assignment changes when $\theta_i + \varphi$ crosses a sector boundary, which occurs at
$\varphi_i = \frac{2\pi}{K}\bigl(\frac{1}{2} + \floor{K\theta_i/(2\pi)} - K\theta_i/(2\pi)\bigr)$.
Sorting the $n$ boundary points $\varphi_{i_1} \leq \cdots \leq \varphi_{i_n}$ partitions the sweep into $n+1$ cells, each with a constant assignment vector.
\Cref{alg:rank1} constructs all $n+1$ candidates by starting with the assignment at $\varphi = -\pi/K$ and flipping one coordinate at each boundary point.

\begin{algorithm}[t]
\caption{Exact solver of~\eqref{eq:max_quad} for a rank-1 matrix.}
\label{alg:rank1}
\begin{algorithmic}[1]
\STATE \textbf{Input:} PSD matrix $\bfQ^{\star} \in \bbC^{n \times n}$ with eigenvector $\bfq \in \bbC^n$; alphabet size~$K$.
\STATE \textbf{Output:} Optimal $\bfz^{\star} \in \AK^n$.
\STATE $\boldsymbol{\theta} \leftarrow \arg(\bfq) \in \bbR^n$ \hfill\COMMENT{phase of each $q_i$}
\STATE $\boldsymbol{\varphi} \leftarrow \frac{2\pi}{K} \paren{\frac{1}{2} + \floor{\frac{K\boldsymbol{\theta}}{2\pi}} - \frac{K\boldsymbol{\theta}}{2\pi}} \in \bbR^n$ \hfill\COMMENT{boundary points}
\STATE $i_1, \ldots, i_n \leftarrow \text{Argsort}(\boldsymbol{\varphi})$
\STATE $\bfk \leftarrow \floor{K\boldsymbol{\theta}/(2\pi)}$
\STATE $\mathcal{Z} \leftarrow \bigl\{ \exp(2\pi\imag\,\bfk / K) \bigr\}$
\FOR{$\ell = 1, \ldots, n$}
  \STATE $\bfk \leftarrow \bfk + \bfe_{i_\ell} \bmod K$ \hfill\COMMENT{flip node $i_\ell$ across its boundary}
  \STATE $\mathcal{Z} \leftarrow \mathcal{Z} \cup \bigl\{ \exp(2\pi\imag\,\bfk / K) \bigr\}$
\ENDFOR
\RETURN $\argmax_{\bfz \in \mathcal{Z}}  \bfz^{\dagger}\bfQ^{\star}\bfz$
\end{algorithmic}
\end{algorithm}

\begin{theorem}[Rank-1 exactness]
\label{thm:rank1}
Let $\bfQ^{\star} = \lambda \bfq\bfq^{\dagger} \in \bbC^{n \times n}$ be rank-1 PSD.
\Cref{alg:rank1} solves~\eqref{eq:max_quad} exactly in $\OO(n^2)$ time for any $K \geq 2$.
\end{theorem}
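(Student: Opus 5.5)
The proof has two parts: correctness, via the double-maximization reformulation \eqref{eq:rank1_double}, and the running time, by counting operations in \Cref{alg:rank1}.

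\textbf{Correctness.} I would first check \eqref{eq:rank1_double}. For a fixed $\bfz$,
\[
\max_{\varphi\in\bbR}\Repart\bigl(\bfz^\dagger\bfq\,e^{\imag\varphi}\bigr)=\abs{\bfz^\dagger\bfq}.
\]
The maximizing phase lies in $(-\pi/K,\pi/K]$ only modulo multiplication of $\bfz$ by a common root of unity. This causes no loss: replacing $\bfz$ by $\omega\bfz$ with $\omega\in\AK$ leaves $\abs{\bfz^\dagger\bfq}$ unchanged and shifts the optimal phase by $2\pi k/K$. So restricting $\varphi$ to a window of length $2\pi/K$ is harmless. Swapping the two maxima then shows that the optimum equals
\[
\max_{\varphi}F(\varphi),\qquad F(\varphi)=\sum_i\max_{z_i\in\AK}\Repart\bigl(\bar z_i q_i e^{\imag\varphi}\bigr).
\]
Let $\bfz(\varphi)$ denote the coordinatewise maximizer, obtained by rounding the phase of $q_ie^{\imag\varphi}$ to the nearest $K$-th root of unity. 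If $\varphi^\star$ maximizes $F$ and $\bfz^\star=\bfz(\varphi^\star)$, then
\[
\lambda\abs{\bfz^{\star\dagger}\bfq}^2\ \ge\ \lambda F(\varphi^\star)^2=\optstar .
\]
Hence some $\bfz(\varphi)$ is optimal, and it suffices to show that the candidate set $\mathcal{Z}$ contains $\bfz(\varphi)$ for every $\varphi\in(-\pi/K,\pi/K]$.

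\textbf{Coverage of the sweep.} Each coordinate's assignment $k_i(\varphi)=\lfloor K(\theta_i+\varphi)/(2\pi)+1/2\rfloor$ is a nondecreasing step function of $\varphi$. As $\varphi$ increases through an interval of length exactly $2\pi/K$, it jumps exactly once, by $+1$, at the point $\varphi_i$ computed in line 4. This formula should be checked to lie in the window, with the half-open endpoint convention chosen consistently. Sorting the $\varphi_i$ and processing them in order, the assignment vector at $\varphi=-\pi/K$ (or just after) is $\bfk$ from line 6, up to the floor offset that must be verified. Each crossing increments one coordinate modulo $K$, exactly as the loop does. Ties $\varphi_{i_\ell}=\varphi_{i_{\ell+1}}$ only add intermediate vectors to $\mathcal{Z}$; the vector after all tied flips is still present, so ties cause no loss. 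By induction on $\ell$, every piecewise-constant value of $\bfz(\varphi)$ appears in $\mathcal{Z}$, and $\abs{\mathcal{Z}}\le n+1$.

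\textbf{Running time and main obstacle.} Computing the eigenvector $\bfq$ of a rank-one matrix costs $\OO(n^2)$: take any nonzero column and normalize it. The phases and boundary points cost $\OO(n)$, and sorting costs $\OO(n\log n)$. Evaluating $\bfz^\dagger\bfQ^\star\bfz$ naively for $n+1$ candidates would cost $\OO(n^3)$. I would avoid this in one of two ways. One is to use $\bfz^\dagger\bfQ^\star\bfz=\lambda\abs{\bfq^\dagger\bfz}^2$, which takes $\OO(n)$ per candidate. The other is to update the inner product incrementally in $\OO(1)$ per flip. Either gives at most $\OO(n^2)$ total; storing the candidates explicitly also costs $\OO(n^2)$.

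The main obstacle is the careful bookkeeping of the initial assignment and the boundary formula. One must verify that $\lfloor K\theta_i/(2\pi)\rfloor$ is indeed the rounded phase at the left endpoint, given the $+1/2$ shift. One must also verify that each $\varphi_i$ falls in $(-\pi/K,\pi/K]$, so that each coordinate flips exactly once. I would handle this by writing $K\theta_i/(2\pi)=a_i+f_i$ with $a_i$ an integer and $f_i\in[0,1)$, and tracking the cases $f_i<1/2$ and $f_i\ge 1/2$ explicitly. If a mismatch appears, the fix is that it changes the candidates only by a global rotation by an element of $\AK$, which preserves the objective.
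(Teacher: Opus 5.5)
Your proposal is correct and follows the paper's proof essentially step for step. It uses the same double-maximization reformulation with $\varphi$ restricted to $(-\pi/K,\pi/K]$ by the global-rotation symmetry, the same single boundary crossing per coordinate giving $n+1$ constant cells, and the same $\OO(n)$-per-candidate evaluation via $\lambda\abs{\bfq^{\dagger}\bfz}^2$. The bookkeeping you flag closes directly, with no case split and no rotation fallback. Writing $K\theta_i/(2\pi)=a_i+f_i$ with $f_i\in[0,1)$ gives $\varphi_i=\frac{2\pi}{K}\bigl(\frac{1}{2}-f_i\bigr)\in(-\pi/K,\pi/K]$, $k_i(\varphi)=a_i$ for $\varphi$ just above $-\pi/K$, and a jump to $a_i+1$ exactly at $\varphi_i$.
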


\begin{proof}[Proof sketch]
As $\varphi$ sweeps $(-\pi/K, \pi/K]$, each coordinate's optimal assignment is piecewise constant with exactly one transition.
The $n$ boundary points $\varphi_{i_1} \leq \cdots \leq \varphi_{i_n}$ partition the domain into $n+1$ cells such that within each cell the assignment vector~$\bfz(\varphi)$ is constant.
Since the maximizer lies in some cell, it belongs to the candidate set~$\mathcal{Z}$.
Computing the eigenvector costs $\OO(n^2)$, sorting costs $\OO(n \log n)$, and evaluating all $n+1$ candidates costs $\OO(n^2)$.
The full proof is provided in {\Cref{app:rank1_proof}}. 
\end{proof}


\section{Exact Rank-$r$ Algorithm}
\label{sec:rankr}
We now generalize the rank-1 approach to an arbitrary rank~$r$.
The key idea is the same---we consider the double maximization over auxiliary angles, then enumerate cells of the resulting partition---but the geometry is richer: the search space is an $(2r{-}1)$-dimensional hypercube $\HH_r$, and the cells are defined by an arrangement of $nB_K$ hyperplanes in $\bbR^{2r}$.
Within each cell, the decision function $\bfd(\bfV;\bfphi) \in \AK^n$ is constant.
The candidate set $\mathcal{S}(\bfV)$ is defined as the set of all distinct assignment vectors arising from these cells. Recall the double maximization problem~\eqref{eq:rankr_double}:
\begin{equation}
    \max_{\bfz \in \AK^n} \norm{\bfV^{\dagger}\bfz}_2
    = \max_{\bfphi \in \HH_r}  \sum_{i=1}^{n}  \max_{z_i \in \AK}  \Repart \bigl( \bar{z}_i \cdot \bfV_ {i,:}\,\bfc(\bfphi) \bigr).
\end{equation}

\medskip
\noindent \textbf{Decision boundaries and hyperplane arrangement.}
The optimal assignment $z_i$ changes when the complex number $\alpha_i(\bfphi) := \bfV_{i,:}\,\bfc(\bfphi)$ crosses a decision boundary.
For each coordinate $i \in [n]$, there are $B_K$ geometrically distinct boundaries (see \Cref{def:BK}), one for each bisector between adjacent roots of unity.
In total, $nB_K$ hyperplanes
\begin{equation}
\label{eq:hyperplanes}
\bigl\{\bfphi \in \HH_r : \tilde{\bfV}_{j,:}\,\tilde{\bfc}(\bfphi) = 0\bigr\}, \quad j = 1, \ldots, nB_K,
\end{equation}
partition $\HH_r$ into cells.

\begin{figure}[t]
\centering
\includegraphics[width=0.42\textwidth]{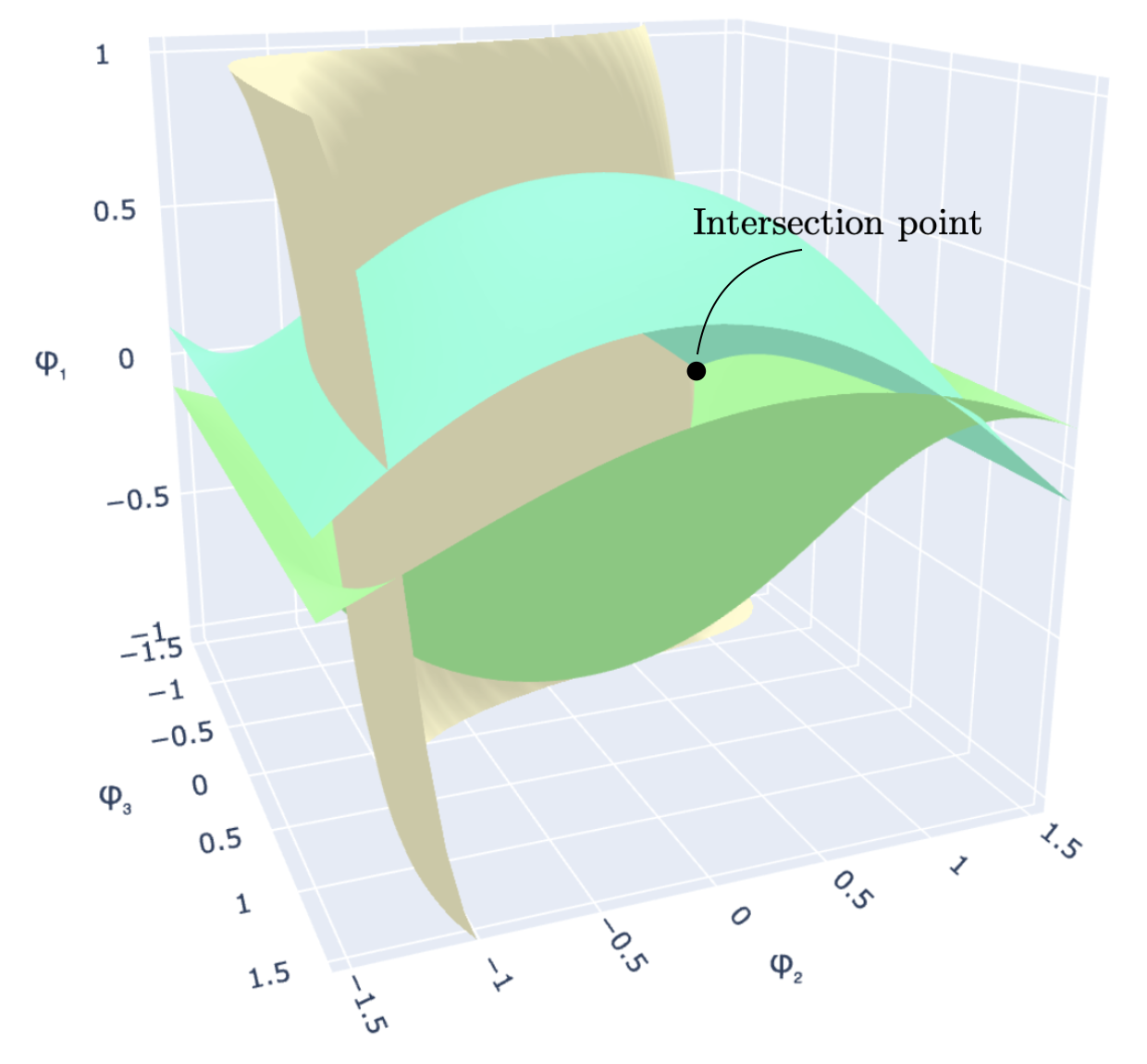}
\hfill
\includegraphics[width=0.42\textwidth]{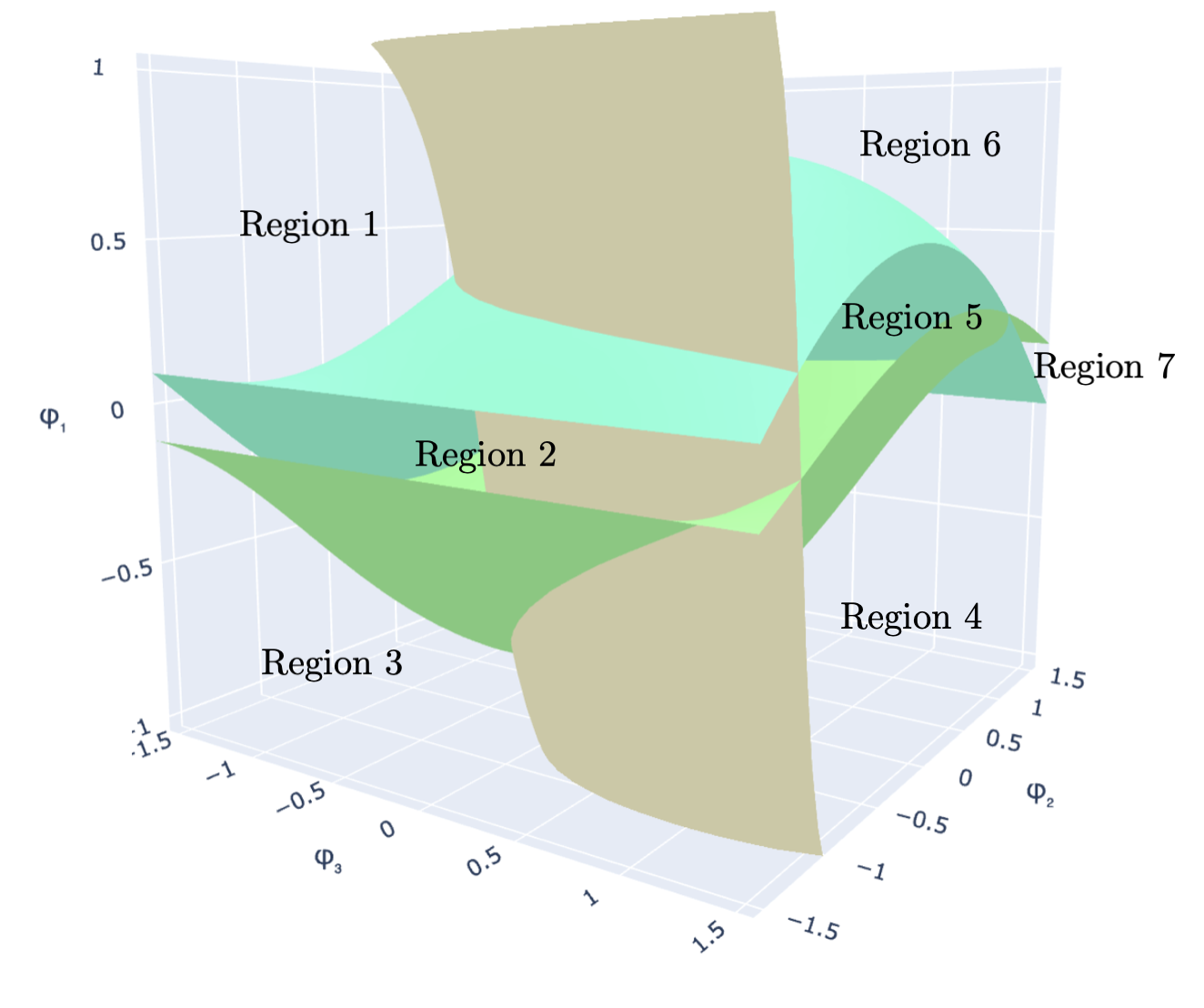}
\caption{Hyperplane arrangement for rank $r=2$ with $K=3$.
\emph{Left:} Three decision-boundary surfaces in the auxiliary space $(\varphi_1, \varphi_2, \varphi_3)$; their intersection defines a vertex.
\emph{Right:} The arrangement partitions the space into seven cells, each corresponding to a distinct candidate assignment vector.}
\label{fig:hypersurfaces}
\end{figure}

\medskip
\noindent \textbf{Structural properties.}
Two key properties of the hyperplane arrangement simplify the enumeration and govern which hyperplane combinations yield valid vertices:
\begin{enumerate}
\item \emph{Common axis.}
All $B_K$ hyperplanes from the same coordinate group $\GG^{(i)}$ (i.e., originating from the $i$-th row of~$\bfV$) intersect along a common $(2r{-}3)$-dimensional manifold, namely the locus where $\bfV_{i,:}\,\bfc(\bfphi) = 0$.
Geometrically, this means that choosing any two of the $B_K$ hyperplanes from group $\GG^{(i)}$ forces all $B_K$ to pass through the same point (\Cref{cor:two_implies_all} in the appendix).
\item \emph{Rank deficiency.}
Since each coordinate group spans a 2-dimensional subspace of $\bbR^{2r}$ (spanned by $\Repart\{\bfV_{i,:}\}$ and $\Impart\{\bfV_{i,:}\}$), selecting three or more hyperplanes from the same group makes the defining system rank-deficient (\Cref{lem:rank_deficiency}).
Hence, a \emph{valid} index set $I \subset [nB_K]$ of size $2r{-}1$ contains at most two indices from each coordinate group.
This constraint dramatically reduces the number of candidates: from $\binom{nB_K}{2r-1}$ (all subsets) to $\OO( rn^{2r-1})$ (valid subsets only).
\end{enumerate}

\medskip
\noindent \textbf{Candidate set size.} The algorithm relies on enumerating solutions from the candidate size. Thus, the computation complexity depends on the candidate size, which is given by Theorem~\ref{thm:candidate_size_simp}.

\begin{restatable}[Candidate set cardinality]{theorem}{card}
\label{thm:candidate_size_simp}
    The candidate set satisfies
    \begin{equation}
    \label{eq:card}
    \abs{\mathcal{S}(\bfV)} = \sum_{d=1}^{r}  \sum_{i=0}^{d-1} \binom{n}{i}\binom{n-i}{2(d-i)-1}\,B_K^{2(d-i)-2}\,(B_K - 1)^i = \OO( rn^{2r-1}).
    \end{equation}
    The sum over~$d$ arises from the recursive structure: the rank-$r$ candidate set is the union of interior vertices at rank~$r$ and the full candidate set at rank~$r{-}1$.
\end{restatable}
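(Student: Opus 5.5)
We will prove the formula by counting the vertices of the arrangement, one rank level at a time. Following the remark after the statement, we take the recursion as the organizing principle: $\mathcal{S}(\bfV)$ is the disjoint union of the candidates produced by \emph{interior} vertices of the rank-$d$ arrangement, for $d = r, r-1, \ldots, 1$. The boundary faces of $\HH_r$ (where a hyperpolar angle reaches $\pm\pi/2$) correspond to setting $c_1=0$, which reduces the problem to the rank-$(r-1)$ arrangement on the remaining columns. Under \Cref{asm:general_position}, distinct valid index sets give distinct vertices and hence distinct candidates. So it suffices to show that the number of interior vertices at rank $d$ equals
\[
  \sum_{i=0}^{d-1}\binom{n}{i}\binom{n-i}{2(d-i)-1}B_K^{2(d-i)-2}(B_K-1)^i .
\]

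At rank $d$, a vertex is determined by a valid index set of size $2d-1$. By \Cref{lem:rank_deficiency}, such a set contains at most two hyperplanes from each coordinate group. Suppose $i$ groups contribute two hyperplanes and $2d-1-2i$ groups contribute one. Then $i \le d-1$, and choosing the groups gives $\binom{n}{i}\binom{n-i}{2d-1-2i}$ options. Note that $2d-1-2i = 2(d-i)-1$, which matches the stated binomial.

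Next we count the hyperplane choices inside the groups. By \Cref{cor:two_implies_all}, picking two hyperplanes from group $\GG^{(i)}$ forces the whole axis $\bfV_{i,:}\bfc=0$. Those two hyperplanes therefore contribute only the intersection point itself. The remaining multiplicity comes from which sector pair the adjacent cell assigns, giving a factor $B_K-1$ per doubled group. For singly-used groups, the $B_K$ hyperplane choices are constrained by the $\HH_r$ normalization: fixing the last phase in $(-\pi/K,\pi/K]$ removes one global rotation symmetry. I will try to show that this quotient leaves a factor $B_K^{2(d-i)-2}$, that is, one factor of $B_K$ is lost overall. The key step here is to identify each vertex with the tuple of its adjacent-cell assignments modulo the rotation $\bfz\mapsto e^{2\pi\imag/K}\bfz$. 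Since $\abs{\bfz^\dagger\bfQ\bfz}$ is invariant under this rotation, the candidates are counted up to it.

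The main obstacle is exactly this per-group multiplicity bookkeeping. We must check that each valid index set, after the phase quotient, yields exactly $B_K^{2(d-i)-2}(B_K-1)^i$ new candidates, with no coincidences between different index sets and none with lower-rank levels. I would first verify the formula for $r=1$: with $d=1$ and $i=0$ the count is $\binom{n}{1}B_K^0 = n$. This matches \Cref{thm:rank1} up to the single initial candidate, which suggests the counting convention is one candidate per vertex. I would then verify $r=2$, $K=3$ against \Cref{fig:hypersurfaces} before doing the general induction on $d$.

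The asymptotic bound $\OO(rn^{2r-1})$ then follows routinely. Each term satisfies $\binom{n}{i}\binom{n-i}{2(d-i)-1} \le n^{2d-1-i}$, and the powers of $B_K$ are constants in $K$. So the $d$-th level is $\OO(n^{2d-1})$, dominated by $i=0$. Summing over $d \le r$ gives at most $r$ levels of size $\OO(n^{2r-1})$, hence $\OO(rn^{2r-1})$.
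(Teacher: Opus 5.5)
Your outline follows the paper's proof step for step: the recursion $\mathcal{S}(\bfV_{:,1:d})=\mathcal{J}(\bfV_{:,1:d})\cup\mathcal{S}(\bfV_{:,1:d-1})$, the split by the number $i$ of doubled groups with $\binom{n}{i}\binom{n-i}{2(d-i)-1}$ group choices, the factor $(B_K-1)^i$ from \Cref{lem:adjacent_cells}, the factor $B_K^{2(d-i)-2}$ attributed to the restriction on $\phi_{2r-1}$, and domination by the $i=0$ term. Your asymptotic paragraph is fine. However, the step you call the main obstacle is only announced (``I will try to show''). In addition, one claim in your setup is false: distinct valid index sets do not give distinct vertices, let alone distinct candidates. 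By \Cref{cor:two_implies_all}, any two hyperplanes of $\GG^{(g)}$ intersect in the same set $\{\bfV_{g,:}\bfc=0\}$. So for $B_K\ge 3$, the $\binom{B_K}{2}$ index sets that differ only in the pair taken from a doubled group define the same vertex. Such a vertex carries $(B_K-1)^i$ candidates, not one. Your own rotation quotient also identifies vertices whose single-group labels differ by a common shift. (Note too that \Cref{asm:general_position} only constrains $(2r-1)$-subsets, so it does not by itself stop two index sets from sharing a null vector.) The count must therefore be organized by (doubled groups, single groups, single-group labels modulo rotation), as in the paper, not by index sets. A minor slip: in \eqref{eq:hyperpolar_def}, the faces $\phi_k=\pm\pi/2$ with $k\le 2r-2$ force $P_{2(r-1)}=0$, i.e.\ $c_r=0$. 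The recursion is therefore on $\bfV_{:,1:r-1}$, not on the columns left after deleting $c_1$.

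The announced step can be made rigorous along your line, and more explicitly than in the paper, which only cites \cite{kyrillidis2014fixed}. The rotation $\bfc\mapsto e^{2\pi\imag/K}\bfc$ multiplies every $\alpha_j$ by $e^{2\pi\imag/K}$. It therefore sends the $m$-th boundary hyperplane of each group to the $(m+1)$-th, indices mod $B_K$ (for even $K$ via \Cref{lem:antipodal}), and it fixes the axis of each doubled group. Hence it acts on the single-group labels $(m_1,\dots,m_{2(d-i)-1})\in\mathbb{Z}_{B_K}^{2(d-i)-1}$ as the diagonal shift. This shift is free because $2(d-i)-1\ge 1$, so every orbit has exactly $B_K$ elements. \Cref{lem:angle_reduction} lets you keep one representative per orbit, which gives $B_K^{2(d-i)-2}$. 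The group has order $K$ but acts through $\mathbb{Z}_{B_K}$, which is why exactly one factor $B_K$, not $K$, is lost. Taking \Cref{lem:adjacent_cells} for granted, this shows that the right-hand side counts, with multiplicity, the rotation classes of (vertex line, adjacent cell) pairs. That already suffices for the $\OO(rn^{2r-1})$ bound and for the running time in \Cref{thm:rankr}. The stated \emph{equality}, however, also requires that distinct classes yield distinct assignments, across index sets and across levels $d$; neither you nor the paper proves this. Be especially careful for odd $K$. Since $-1\notin\AK$, the two points $\pm\tilde{\bfc}$ of one vertex line lie in different rotation orbits and in general round to assignments that are not rotations of each other. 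So ``one candidate per vertex line'' is itself a convention that must be justified there.
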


\begin{proof}[Proof sketch]
At rank level~$d$, interior vertices are formed by choosing $2d{-}1$ hyperplanes with at most two from each coordinate group.
Let $i$ groups contribute two hyperplanes each; the remaining $2d{-}1{-}2i$ come from distinct groups.
Each such configuration yields $\binom{n}{i}\binom{n-i}{2d-1-2i}$ choices of groups, $B_K^{2(d-i)-2}$ choices of boundaries from single groups (accounting for the $(-\pi/K, \pi/K]$ restriction on the last angle), and $(B_K - 1)^i$ adjacent cells per vertex (\Cref{lem:adjacent_cells}).
Summing over $i$ and recursing over $d$ gives~\eqref{eq:card}.
The full proof is provided in \Cref{app:candidate_set_proof}.
\end{proof}

\medskip
\noindent \textbf{Vertex enumeration and ambiguity resolution.}
To systematically identify all candidates, \Cref{alg:rankr} examines the vertices of the cell partition.
In the $(2r{-}1)$-dimensional hypercube $\HH_r$, each vertex is defined by the intersection of $2r{-}1$ hypersurfaces.
The algorithm iterates over all valid index sets $I \subset [nB_K]$ of size $2r{-}1$ (i.e., at most two indices from any single coordinate group) and, for each, solves $\tilde{\bfV}_{I,:}\,\tilde{\bfc} = \mathbf{0}$ for the null vector $\tilde{\bfc} \in \bbR^{2r}$.
If the system has rank $2r{-}1$, the null space is one-dimensional, yielding a unique direction that is converted to the complex vector $\bfc = \tilde{\bfc}_{1:r} + \imag\,\tilde{\bfc}_{r+1:2r}$ and evaluated via the decision rule~\eqref{eq:decision_rule}.
This decision rule maps a direction $\bfc$ to an assignment $\bfz$ as:
\begin{equation}
\label{eq:decision_rule}
z_i = \exp \Bigl(\frac{2\pi\imag}{K} k_i\Bigr), \quad k_i = \argmin_{k \in \{0,\ldots,K{-}1\}} \Bigl|\frac{2\pi k}{K} - \arg\bigl(\bfV_{i,:}\,\bfc\bigr)\Bigr|_{\bmod\,2\pi}.
\end{equation}

A subtlety arises when one or more coordinate groups contribute two hyperplanes to~$I$: at the vertex, all $B_K$ hyperplanes from such a group pass through (by the common-axis property), and the decision rule cannot determine $z_i$ uniquely because $\bfV_{i,:}\,\bfc(\bfphi) = 0$.
We resolve this by a \emph{fixed-angle intersection}, wherein we remove the ambiguous hyperplanes from~$I$, fix $\phi_{2r-1} = \pi/K$, and solve the reduced system.
The resulting intersection determines $z_i$ on one side of each boundary. The $B_K - 1$ distinct values of $z_i$ across adjacent cells are recovered by repeating with consecutive boundary pairs. See Algorithm~\ref{alg:rankr} for the complete procedure.

In addition to interior vertices, the algorithm must handle candidates on the boundary of $\HH_r$.
These boundary cases occur when one angular component is fixed at an extreme value, effectively reducing the search to a lower-dimensional subproblem.
They are handled by recursive calls to the rank-$(r{-}1)$ algorithm~\cite{kyrillidis2011rank}. In these calls, we fix one angular coordinate, thereby eliminating one degree of freedom and reducing the effective rank by one.

\begin{algorithm}[t]
\caption{Exact solver of~\eqref{eq:max_quad} for a rank-$r$ matrix.}
\label{alg:rankr}
\begin{algorithmic}[1]
\STATE \textbf{Input:} PSD matrix $\bfQ \in \bbC^{n \times n}$, factor $\bfV \in \bbC^{n \times r}$, rank~$r$, alphabet size~$K$.
\STATE \textbf{Output:} Optimal $\bfz^{\star} \in \AK^n$ (exact when $\mathrm{rank}(\bfQ) = r$).
\STATE Construct $\tilde{\bfV} \in \bbR^{nB_K \times 2r}$ via~\eqref{eq:tildeV}.
\STATE $\mathrm{opt} \leftarrow -\infty$;  $\bfz^{\star} \leftarrow \mathbf{0}$
\STATE $\II \leftarrow \{I \subset [nB_K] : |I| = 2r{-}1,  \text{at most 2 per coordinate group}\}$
\FOR{$I \in \II$}
  \IF{$\mathrm{rank}(\tilde{\bfV}_{I,:}) = 2r{-}1$}
    \STATE Solve $\tilde{\bfV}_{I,:}\,\tilde{\bfc} = \mathbf{0}$ for unit $\tilde{\bfc} \in \bbR^{2r}$
    \STATE $\bfc \leftarrow \tilde{\bfc}_{1:r} + \imag\,\tilde{\bfc}_{r+1:2r}$
    \STATE $\bfz \leftarrow \text{DecisionRule}(\bfV, \bfc)$ via~\eqref{eq:decision_rule}
    \STATE $\bfz \leftarrow \text{ResolveAmbiguity}(\tilde{\bfV}, I, \bfc, \bfV, \bfz, K)$ \hfill\COMMENT{fix used-vertex assignments}
    \IF{$\Repart(\bfz^{\dagger}\bfQ\bfz) > \mathrm{opt}$}
      \STATE $\mathrm{opt} \leftarrow \Repart(\bfz^{\dagger}\bfQ\bfz)$;  $\bfz^{\star} \leftarrow \bfz$
    \ENDIF
  \ENDIF
\ENDFOR
\IF{$r > 1$}
  \STATE $\bfz_{r-1} \leftarrow \text{Exact-Low-Rank}(\bfQ, \bfV_{:,1:r-1}, r{-}1, K)$ \hfill\COMMENT{recurse}
  \IF{$\Repart(\bfz_{r-1}^{\dagger}\bfQ\bfz_{r-1}) > \mathrm{opt}$}
    \STATE $\bfz^{\star} \leftarrow \bfz_{r-1}$
  \ENDIF
\ENDIF
\RETURN $\bfz^{\star}$
\end{algorithmic}
\end{algorithm}

\begin{restatable}[Rank-$r$ exactness]{theorem}{rankr}
\label{thm:rankr}
Let $\bfQ_r \in \bbC^{n \times n}$ be rank-$r$ PSD.
\Cref{alg:rankr} solves~\eqref{eq:max_quad} with objective~$\bfQ_r$ exactly in $\OO( rn^{2r+1})$ time.
With $P$ processors, the parallel time is $\OO( rn^{2r+1}/P)$.
\end{restatable}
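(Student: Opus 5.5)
We split the proof into correctness, meaning the optimal assignment lies in the set of candidates Algorithm~\ref{alg:rankr} evaluates, and complexity.

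\textbf{Correctness.} We argue by induction on $r$, with Theorem~\ref{thm:rank1} as the base case.

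First, by the reformulation~\eqref{eq:rankr_double}, the optimum $\optstar$ equals $\max_{\bfphi\in\HH_r} f(\bfphi)$, where $f(\bfphi)=\sum_i \max_{z_i}\Repart(\bar z_i\,\bfV_{i,:}\bfc(\bfphi))$. Moreover, the maximizer $\bfz^\star$ equals the decision-rule output $\bfd(\bfV;\bfphi^\star)$ at the aligning direction $\bfc^\star=\bfV^\dagger\bfz^\star/\norm{\bfV^\dagger\bfz^\star}_2$. Hence it suffices to show that every cell of the arrangement~\eqref{eq:hyperplanes} contributes its constant assignment vector to the candidate set. On the interior of a cell, $\bfd$ is constant, since no $\alpha_i(\bfphi)$ crosses a bisector ray.

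Next, we show that every cell is reached. Take the closure of a cell in the $(2r-1)$-dimensional domain. Either it contains a vertex, which is an intersection of $2r-1$ hypersurfaces lying in the interior of $\HH_r$, or it touches the boundary of $\HH_r$.
\begin{itemize}
\item \emph{Interior vertex.} The vertex is cut out by an index set $I$. By Lemma~\ref{lem:rank_deficiency}, such an $I$ may be taken valid, meaning at most two hyperplanes per group. Under Assumption~\ref{asm:general_position}, $\tilde\bfV_{I,:}$ has rank $2r-1$, so its one-dimensional null space gives exactly the direction $\tilde\bfc$ that the algorithm computes.
\item \emph{Perturbing off the vertex.} For groups contributing a single hyperplane, the decision rule at the vertex is unambiguous up to the tie on that boundary. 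This tie is handled by the orientation convention $(-\pi/K,\pi/K]$ on the last angle, as in the rank-1 case. For groups contributing two hyperplanes, Corollary~\ref{cor:two_implies_all} implies that all $B_K$ boundaries pass through the vertex. Lemma~\ref{lem:adjacent_cells} then says the adjacent cells realize exactly the $B_K-1$ values produced by the fixed-angle ResolveAmbiguity step.
\item \emph{Cells touching $\partial\HH_r$.} On the boundary, some angle is at an extreme value. This forces $P_{2j}$-type factors to vanish or fixes a coordinate of $\bfc$, so that $\bfc$ ranges over the unit sphere of $\bbC^{r-1}$ embedded in the first $r-1$ coordinates. These cells are therefore covered by the recursive call on $\bfV_{:,1:r-1}$, by the induction hypothesis.
\end{itemize}
This also matches the recursive count in Theorem~\ref{thm:candidate_size_simp}.

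\textbf{Main obstacle.} The hardest step is the claim that every cell has a vertex in its closure or meets the boundary. The danger is a bounded cell containing no such vertex, or an unbounded cell escaping the chart. To handle this, we will use the fact that the hyperplanes are central hyperplanes in $\bbR^{2r}$ restricted to the sphere. Each region of a central arrangement on the sphere is a spherical polytope. If the arrangement has full rank, which Assumption~\ref{asm:general_position} guarantees, every such polytope has a vertex, namely a ray in the intersection of $2r-1$ hyperplanes, or else it meets the coordinate boundary of the parametrization. The quotient by the phase $\phi_{2r-1}\in(-\pi/K,\pi/K]$ uses the $K$-fold symmetry $\bfz\mapsto e^{2\pi\imag/K}\bfz$, which preserves the objective. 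This is why restricting to one fundamental domain loses nothing.

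\textbf{Complexity.} By Theorem~\ref{thm:candidate_size_simp}, at most $\OO(rn^{2r-1})$ candidates are generated across all recursion levels, with at most $\OO(n^{2r-1})$ valid index sets at the top level for constant $K$. For each index set:
\begin{itemize}
\item Solving the $(2r-1)\times 2r$ null-space system costs $\OO(r^3)$, which is $\OO(1)$ for fixed $r$.
\item The decision rule costs $\OO(nr)$.
\item Evaluating $\bfz^\dagger\bfQ\bfz$ costs $\OO(n^2)$; using $\norm{\bfV^\dagger\bfz}^2$ instead costs $\OO(nr)$.
\end{itemize}
The per-candidate cost is thus $\OO(n^2)$, and the total is $\OO(rn^{2r-1}\cdot n^2)=\OO(rn^{2r+1})$. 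The recursive levels sum geometrically and are dominated by the top level.

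\textbf{Parallelism.} The valid index sets are independent. Distributing them evenly over $P$ processors, followed by a final max-reduction costing $\OO(P)$, gives parallel time $\OO(rn^{2r+1}/P)$.
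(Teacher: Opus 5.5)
Your outline has the same structure as the paper's proof: reformulation, piecewise-constant cells, interior vertices from valid index sets, fixed-angle resolution of doubled groups, the face $\{c_r=0\}$ via the rank-$(r-1)$ call, and candidate count $\times\,\OO(n^2)$. The gap is the step from ``$\bar C$ contains an enumerated vertex'' to ``$C$'s assignment is a candidate''; that step fails.

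\textbf{Why the vertex-in-closure step fails.} For each valid $I$ the loop outputs one candidate, or $(B_K-1)^p$ when $p$ groups are doubled. That candidate is the decision rule at the vertex. There, each single-group coordinate in $I$ lies on one of its boundary lines, so on a genuine bisector its value is set by the tie-break in~\eqref{eq:decision_rule}. \Cref{alg:resolve} only touches coordinates with $\bfV_{i,:}\bfc=0$. Yet a vertex cut out by $2r-1$ hyperplanes from distinct groups lies in the closure of up to $2^{2r-1}$ cells with different assignments. So knowing that some vertex of $\bar C$ is enumerated says nothing about whether $C$ itself is produced.

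The convention $\phi_{2r-1}\in(-\pi/K,\pi/K]$ does not fix this. It picks a representative of each rotation orbit, not a side of each of the $2r-1$ hypersurfaces. Rank~1 works for a different reason: \Cref{alg:rank1} charges each arc to its left endpoint, and the first arc to the domain boundary.

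\textbf{The missing idea: charging each cell to one vertex.} The paper's proof appeals to exactly this when it says the loop covers the cells whose \emph{leading vertex} lies in the interior of $\HH_r$; the construction is taken from~\cite{kyrillidis2014fixed}. You would need to:
\begin{enumerate}
\item fix a sweep direction and charge each cell to its extreme (leading) vertex;
\item show that the candidate computed at a vertex is the assignment of exactly the cell it leads;
\item show that every cell without an interior leading vertex meets $\{c_r=0\}$ in a full-dimensional cell of the restricted arrangement, so the recursive call produces it.
\end{enumerate}
Consequently, your inductive statement should be candidate coverage, not exactness of the rank-$(r-1)$ call, since that call ranks its candidates under the full $\bfQ$.

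\textbf{Your dichotomy cannot replace this.} On the sphere, every region of an essential central arrangement has vertices, and these are generically off $\{c_r=0\}$. Your interior-vertex branch would therefore absorb essentially every cell and make the recursion redundant. But cells outnumber vertex candidates, which is why \Cref{thm:candidate_size_simp} carries the separate term $\abs{\mathcal{S}(\bfV_{:,1:r-1})}$. For example, with $K=2$ and generic complex $\bfV$ there are $\binom{n}{2r-1}$ vertices modulo sign, against $\binom{n}{2r-1}+\sum_{i=0}^{2r-3}\binom{n-1}{i}$ cells.

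For the same reason, ``touches the boundary'' is too weak for your recursive bullet. A cell meeting $\{c_r=0\}$ only in a lower-dimensional set need not share its assignment with any cell of the restricted arrangement.

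Your complexity and parallelism accounting is fine.
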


\begin{proof}[Proof sketch]
The $nB_K$ hyperplanes partition $\HH_r$ into cells within each of which $\bfd(\bfV;\bfphi)$ is constant.
The global maximizer of~\eqref{eq:rankr_double} must lie in one of these cells, so and must therefore be attained either at an interior vertex or on the boundary of~$\HH_r$.
The main loop enumerates these interior vertices, and the recursive rank-$(r-1)$ calls looks at the boundary faces. 
By \Cref{thm:candidate_size_simp}, the total candidate count is $\OO( rn^{2r-1})$ and the evaluation of each candidate costs $\OO(n^2)$. The algorithm can be parallelized by processing each index set $I \in \II$ independently. 
The complete proof appears in~\Cref{app:rankr_proof}.
\end{proof}



\section{Approximation Guarantees}

In practice, the objective matrix $\bfQ$ in~\eqref{eq:max_quad} can rarely be strictly low-rank. Without a low-rank constraint, Algorithm~\ref{alg:rankr}'s computation complexity blows up exponentially with the rank $r$. Fortunately, many matrices encountered in real-world are \textit{approximately} low-rank, allowing us to perform a low-rank truncation on $\bfQ$ first before applying Algorithm~\ref{alg:rankr}. In Section~\ref{sec:approx}, we model the setup by considering $\bfQ$ as a perturbed version of a low-rank matrix $\bfQ^\star$ with $\bfQ = \bfQ^\star + \bfH$, where $\bfH$ is a noise matrix, and study the theoretical guarantee of solving~\eqref{eq:max_quad} with the low-rank truncation. Section~\ref{sec:sampling} takes a further step into reducing the computational complexity by adopting a randomized approach to sample and check elements from the candidate set. In Section~\ref{sec:sampling}, we provide high-probability bounds on the optimization objective value to justify its effectiveness.

\subsection{Perturbed Low-Rank Objectives}\label{sec:approx}
Here we consider the objective in~\eqref{eq:max_quad} that involves the matrix $\bfQ = \bfQ^{\star} + \bfH$ as a perturbation of a rank-$r^{\star}$ signal $\bfQ^{\star}$. In particular, our goal is to characterize the quality of the resulting solution in terms of the noiseless quadratic objective $\bfz^\dagger\bfQ^\star\bfz$.

Assume the observed matrix decomposes as $\bfQ = \bfQ^{\star} + \bfH$,
where $\bfQ^{\star} \succeq 0$ is Hermitian with rank $r^{\star} \ll n$ and eigendecomposition $\bfQ^\star = \sum_{j=1}^r\lambda_j^\star\bfu_j^\star\bfu_j^{\star\dagger}$, and $\bfH \in \bbC^{n \times n}$ represents a perturbation (which need not be Hermitian or PSD).
Note that $\bfQ$ itself may not be Hermitian or PSD.
\Cref{alg:approx} computes the top-$r$ spectral truncation $\bfQ_r$ of~$\bfQ$ and runs the exact low-rank solver on~$\bfQ_r$.
Define the maximizers:
\begin{equation}
\label{eq:def_maximizers}
\begin{aligned}
\bfz^{\star}   \text{ attains } ~   \optQ &:= \max_{\bfz \in \AK^n}  \bfz^{\dagger}\bfQ^{\star}\bfz, \quad
\bfz_r   \text{ attains } ~  \texttt{OPT}\text{-}\bfQ_r &:= \max_{\bfz \in \AK^n}  \bfz^{\dagger}\bfQ_r\bfz.
\end{aligned}
\end{equation}
Our goal is to bound the quality of $\bfz_r$ evaluated on the signal matrix $\bfQ^{\star}$, i.e., to bound $|\optQ - \bfz_r^{\dagger}\bfQ^{\star}\bfz_r|$ and $\frac{\bfz_r^{\dagger}\bfQ^{\star}\bfz_r}{\optQ}$.
Our theory applies for general $K \geq 2$.

\begin{algorithm}[t]
\caption{Approximate solver of~\eqref{eq:max_quad} via low-rank truncation.}
\label{alg:approx}
\begin{algorithmic}[1]
\STATE \textbf{Input:} Matrix $\bfQ \in \bbC^{n \times n}$, target rank~$r$, alphabet size~$K$.
\STATE Compute top-$r$ singular values $\sigma_1 \geq \cdots \geq \sigma_r$ and left singular vectors $\bfu_1, \ldots, \bfu_r$ of $\bfQ$.
\STATE $\bfQ_r \leftarrow \sum_{j=1}^r \sigma_j \bfu_j \bfu_j^{\dagger}$;  $\bfV_r \leftarrow [\sqrt{\sigma_1}\bfu_1 \,|\, \cdots \,|\, \sqrt{\sigma_r}\bfu_r]$.
\RETURN Exact-Low-Rank$(\bfQ_r, \bfV_r, r, K)$.
\end{algorithmic}
\end{algorithm}

Recall that $\lambda_j^\star$s are the eigenvalues of $\bfQ^\star$, sorted in descending order.
Define the eigengap $\delta^{\star} = \min \bigl\{\min_{j \in [r-1]} |\lambda_j^{\star} - \lambda_{j+1}^{\star}|,  \lambda_r^{\star}\bigr\}$ and let $\bfz_r$ denote the output of \Cref{alg:approx}. We begin by establishing an bound on the additive approximation error. This bound has two components: $n\lambda_{r+1}^{\star}$ is the price of rank truncation (zero when $r = r^{\star}$), and $n\lambda_1^{\star}\norm{\bfH}_2/\delta^{\star}$ captures the effect of perturbation.

\begin{restatable}[Additive bound]{theorem}{additiveBound}
\label{thm:additive}
    If $r \leq r^{\star}$ and $\norm{\bfH}_2 \leq \delta^{\star}/2$, then
    \begin{equation}
    \label{eq:additive}
        \bigl|\optQ - \bfz_r^{\dagger}\bfQ^{\star}\bfz_r\bigr| \leq \OO \Bigl( n\Bigl(\lambda_{r+1}^{\star} + \frac{\lambda_1^{\star}}{\delta^{\star}}\norm{\bfH}_2\Bigr)\Bigr).
    \end{equation}
\end{restatable}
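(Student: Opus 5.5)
The plan is a standard three-term sandwich: compare the objectives $\bfz^\dagger\bfQ^\star\bfz$ and $\bfz^\dagger\bfQ_r\bfz$ uniformly over $\AK^n$, and then use that $\bfz_r$ is an exact maximizer for $\bfQ_r$ (\Cref{thm:rankr}). Write $f^\star(\bfz)=\bfz^\dagger\bfQ^\star\bfz$ and $f_r(\bfz)=\bfz^\dagger\bfQ_r\bfz$. Since $f^\star(\bfz^\star)=\optQ\ge f^\star(\bfz_r)$, only the upper bound is needed:
\[
0\le \optQ - f^\star(\bfz_r)
= \bigl[f^\star(\bfz^\star)-f_r(\bfz^\star)\bigr]+\bigl[f_r(\bfz^\star)-f_r(\bfz_r)\bigr]+\bigl[f_r(\bfz_r)-f^\star(\bfz_r)\bigr].
\]
The middle bracket is $\le 0$ by optimality of $\bfz_r$. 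Each outer bracket is at most $\sup_{\bfz\in\AK^n}|\bfz^\dagger(\bfQ^\star-\bfQ_r)\bfz|\le \norm{\bfz}_2^2\,\norm{\bfQ^\star-\bfQ_r}_2 = n\norm{\bfQ^\star-\bfQ_r}_2$, because $|z_i|=1$. So the theorem reduces to the spectral-norm estimate
\[
\norm{\bfQ^\star-\bfQ_r}_2 = \OO\Bigl(\lambda_{r+1}^\star + \tfrac{\lambda_1^\star}{\delta^\star}\norm{\bfH}_2\Bigr).
\]

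To prove this estimate, let $\bfQ^\star_r=\sum_{j\le r}\lambda_j^\star\bfu_j^\star\bfu_j^{\star\dagger}$ and split
\[
\norm{\bfQ^\star-\bfQ_r}_2\le \norm{\bfQ^\star-\bfQ^\star_r}_2+\norm{\bfQ^\star_r-\bfQ_r}_2 .
\]
The first term equals $\lambda_{r+1}^\star$, which is zero when $r=r^\star$. For the second term, I would compare the singular data of $\bfQ$ with the eigen-data of $\bfQ^\star$.
\begin{itemize}
\item \emph{Values.} Weyl's inequality for singular values gives $|\sigma_j-\lambda_j^\star|\le\norm{\bfH}_2$, since the singular values of the PSD matrix $\bfQ^\star$ are its eigenvalues.
\item \emph{Vectors.} Because $\norm{\bfH}_2\le\delta^\star/2$, every top-$r$ singular value $\sigma_j$ stays separated from its neighbours and from the $(r{+}1)$st value by at least $\delta^\star/2$. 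Wedin's $\sin\Theta$ theorem, applied to each individual singular vector since consecutive gaps are controlled, then gives $\min_{|\omega|=1}\norm{\bfu_j-\omega\bfu_j^\star}_2=\OO(\norm{\bfH}_2/\delta^\star)$.
\end{itemize}
The phase ambiguity $\omega$ cancels in the rank-one projector $\bfu_j\bfu_j^\dagger$. Hence
\[
\norm{\sigma_j\bfu_j\bfu_j^\dagger-\lambda_j^\star\bfu_j^\star\bfu_j^{\star\dagger}}_2\le |\sigma_j-\lambda_j^\star|+\lambda_j^\star\,\norm{\bfu_j\bfu_j^\dagger-\bfu_j^\star\bfu_j^{\star\dagger}}_2 = \OO\bigl(\norm{\bfH}_2+\lambda_1^\star\norm{\bfH}_2/\delta^\star\bigr).
\]
Since $\delta^\star\le\lambda_r^\star\le\lambda_1^\star$, the $\norm{\bfH}_2$ term is absorbed into the second one.

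The main obstacle is that summing these $r$ rank-one errors naively introduces a factor of $r$. I would try to avoid it by bounding $\norm{\bfQ^\star_r-\bfQ_r}_2$ through the subspace projectors instead. Writing $\bfU$ and $\bfU^\star$ for the top-$r$ bases, the difference splits into a term that rotates $\bfU^\star\Lambda^\star\bfU^{\star\dagger}$ by an amount of order $\norm{\sin\Theta}_2=\OO(\norm{\bfH}_2/\delta^\star)$ and a diagonal term of size $\max_j|\sigma_j-\lambda_j^\star|$. If that does not go through cleanly, I would accept the factor of $r$, treating $r$ as a constant hidden in the $\OO$.

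A second subtlety is that $\bfQ$ need not be Hermitian, so $\bfQ_r$ is built from left singular vectors only. Wedin's theorem still controls the left singular subspace of $\bfQ=\bfQ^\star+\bfH$ relative to $\bfQ^\star$, whose left and right singular vectors coincide, so I expect the argument to carry over with only constant-factor losses.
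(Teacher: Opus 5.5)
Up to the factor $r$ you flag, and one separation claim addressed below, your argument is sound, and its skeleton is the paper's. The sandwich is \Cref{lem:decomp_approx_err}, whose proof also yields $2n\norm{\bfQ^\star-\bfQ_r}_2$. Peeling off $\lambda_{r+1}^\star$ and applying Weyl to the eigenvalue part is how \Cref{lem:mat_est_err} begins.

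The routes part ways at the eigenvectors, where the paper takes exactly the subspace alternative you float. It bounds the rotation term by $2\lambda_1^\star\norm{\bfU^\star-\bfU}_2$ and then asserts $\norm{\bfU^\star-\bfU}_2\le 2\norm{\bfH}_2/\delta$ from top-$r$ Wedin alone; that is what makes its bound $r$-free. Your hesitation is justified, because this step is not valid as written:
\begin{itemize}
\item Wedin controls only $(\bfI-\bfU^\star\bfU^{\star\dagger})\bfU$, not the phases or the rotation of columns inside the subspace.
\item The paper's identity $\norm{\bfI-\bfU^{\star\dagger}\bfU}_2=1-\sigma_{\min}(\bfU^{\star\dagger}\bfU)$ tacitly assumes $\bfU^{\star\dagger}\bfU\succeq 0$. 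It already fails for $\bfH=0$ with the legitimate SVD choice $\bfU=-\bfU^\star$.
\item Since $\bm{\Lambda}^{\star}$ is not a multiple of the identity, mixing inside the subspace genuinely changes $\bfU\bm{\Lambda}^{\star}\bfU^{\dagger}$. Only the consecutive gaps built into $\delta^\star$ control that mixing.
\end{itemize}
Your per-vector argument uses exactly those gaps and handles the phase through $\min_{|\omega|=1}$, so it is the rigorous one. It pays a factor $r$, or $\sqrt r$ if you bound the spectral norm of the matrix of column errors by its Frobenius norm. That is a real weakening of the statement unless $r$ is treated as a constant.

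You can get an $r$-free bound rigorously by combining the two ideas. Let $\Pi_k^\star$ and $\Pi_k$ be the orthogonal projectors onto $\mathrm{span}(\bfu_1^\star,\dots,\bfu_k^\star)$ and $\mathrm{span}(\bfu_1,\dots,\bfu_k)$. Summation by parts gives
\[
\bfU^{\star}\bm{\Lambda}^{\star}\bfU^{\star\dagger}-\bfU\bm{\Lambda}^{\star}\bfU^{\dagger}=\sum_{k=1}^{r-1}(\lambda_k^\star-\lambda_{k+1}^\star)(\Pi_k^\star-\Pi_k)+\lambda_r^\star(\Pi_r^\star-\Pi_r).
\]
The coefficients are nonnegative and sum to $\lambda_1^\star$, and only projectors appear, so there is no phase ambiguity. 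For $k<r$, top-$k$ Wedin with gap at least $\lambda_k^\star-\lambda_{k+1}^\star-\norm{\bfH}_2\ge\delta^\star/2$ gives $\norm{\Pi_k^\star-\Pi_k}_2\le 2\norm{\bfH}_2/\delta^\star$.

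The last term needs a step that both you and the paper skip. When $r<r^\star$, $\delta^\star$ contains $\lambda_r^\star$ but not $\lambda_r^\star-\lambda_{r+1}^\star$. So your claim that $\sigma_r$ is $\delta^\star/2$-separated from the $(r+1)$st value does not follow; the paper likewise silently uses $|\lambda_{r+1}^\star-\lambda_r^\star|\ge\delta^\star$. The fix is to split $\lambda_r^\star=(\lambda_r^\star-\lambda_{r+1}^\star)+\lambda_{r+1}^\star$:
\begin{itemize}
\item The second piece costs at most $\lambda_{r+1}^\star$, which the bound already pays.
\item The first piece costs at most $2\norm{\bfH}_2$. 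Use Wedin when $\lambda_r^\star-\lambda_{r+1}^\star>2\norm{\bfH}_2$, and the trivial bound $\norm{\Pi_r^\star-\Pi_r}_2\le1$ otherwise.
\end{itemize}
Altogether $\norm{\bfQ^\star-\bfQ_r}_2\le 2\lambda_{r+1}^\star+2\lambda_1^\star\norm{\bfH}_2/\delta^\star+3\norm{\bfH}_2$. This is the required estimate with absolute constants.
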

The full proof of Theorem~\ref{thm:additive} is provided in~\Cref{app:additiveRatioProof}. Theorem~\ref{thm:additive} upper bounds the gap between the quadratic form $\bfz_r^\dagger\bfQ^{\star}\bfz_r$ and $\texttt{OPT-}\bfQ^{\star}$. The scaling factor of $p$ in the right-hand side of (\ref{eq:additive}) is due to the fact that both $\texttt{OPT-}\bfQ^{\star}$ and $\bfz_r^{\dagger}\bfQ^{\star}\bfz_r$ are quadratic in $\bfz^{\star}$ and $\bfz_r$, which has $\ell_2$-norm $\sqrt{p}$ by construction. In particular, the right-hand side of the upper bound in (\ref{eq:additive}) consists of two component. The first component $p\lambda_{r+1}^{\star}$ scales with the eigenvalue $\lambda_{r+1}^{\star}$, and can be regarded as the price one pay when using a smaller rank less than the rank of $\bfQ^{\star}$, i.e. $r\leq r^{\star}$. The second term depends on the noise scale $\norm{\bfH}_2$, showing the influence of the perturbation on the quality of the solution. One should notice that both the requirement $\norm{\bfH}_2\leq \frac{\delta^{\star}}{2}$ and the upper bound in (\ref{eq:additive}) depends on the eigengap $\delta^{\star}$. Such dependency appears because in the algorithm we need to explicitly construct the Hermitian matrix $\bfQ_r$ from the non-Hermitian $\bfQ$, which naturally introduced the eigenvector perturbation $\bfu_i$ from $\bfu_i^{\star}$ due to $\bfH$. It is common in prior work that such eigenvector perturnbation bound involves a dependency on the eigengap $\delta^{\star}$ \cite{chen2018asymmetry,wedin1972perturbation}. The corollary below shows that we can drop the dependency on the eigengap if the perturbation matrix $\bfH$ is Hermitian:
\begin{restatable}{corollary}{approxRatioHerm}
\label{cor:approx_ratio_perturbation_herm}
     Let $\bfz_r,\texttt{OPT-}\bfQ^{\star}$ be defined in (\ref{eq:def_maximizers}). Let $\lambda_j^{\star}$ be the $j$th eigenvalue of $\bfQ^{\star}$ with $\lambda_j^{\star} = 0$ if $j > r$. If $r\leq r^{\star}$ and $\norm{\bfH}_2$ is Hermitian, then we have that:
     \begin{equation}
        \label{eq:ub_e_by_h_cor1}
        \left|\texttt{OPT-}\bfQ^{\star} - \bfz_r^{\dagger}\bfQ^{\star}\bfz_r\right| \leq O\paren{p\paren{\lambda_{r+1}^\star + \norm{\bfH}_2}}.
    \end{equation}
\end{restatable}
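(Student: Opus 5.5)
The plan is the standard sandwich argument: compare both quadratic forms through the truncation $\bfQ_r$ and pay only for $\norm{\bfQ_r - \bfQ^\star}_2$ on vectors of squared norm $n$, where $n$ plays the role written $p$ in \eqref{eq:ub_e_by_h_cor1}.

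Since $\bfz_r$ maximizes $\bfz^\dagger\bfQ_r\bfz$, we have $\bfz_r^\dagger\bfQ_r\bfz_r \ge \bfz^{\star\dagger}\bfQ_r\bfz^\star$. Hence
\[
0 \le \optQ - \bfz_r^\dagger\bfQ^\star\bfz_r \le \bfz^{\star\dagger}(\bfQ^\star-\bfQ_r)\bfz^\star + \bfz_r^\dagger(\bfQ_r-\bfQ^\star)\bfz_r \le 2n\norm{\bfQ^\star-\bfQ_r}_2,
\]
using $\norm{\bfz}_2^2 = n$ for every $\bfz\in\AK^n$. This step is identical to the one underlying \Cref{thm:additive}. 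All the work is therefore in bounding $\norm{\bfQ_r-\bfQ^\star}_2$ without invoking eigenvector perturbation.

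When $\bfH$ is Hermitian, $\bfQ = \bfQ^\star + \bfH$ is Hermitian. If $\bfQ$ is also PSD, the construction in \Cref{alg:approx} (top singular values with left singular vectors) coincides with the best rank-$r$ approximation in spectral norm. By Eckart--Young--Mirsky, $\norm{\bfQ-\bfQ_r}_2 = \sigma_{r+1}(\bfQ)$. Weyl's inequality gives $\sigma_{r+1}(\bfQ) \le \lambda_{r+1}^\star + \norm{\bfH}_2$. Then the triangle inequality yields
\[
\norm{\bfQ_r-\bfQ^\star}_2 \le \norm{\bfQ_r-\bfQ}_2 + \norm{\bfH}_2 \le \lambda^\star_{r+1} + 2\norm{\bfH}_2.
\]
Plugging this into the sandwich gives the claim, with no dependence on $\delta^\star$. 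The point is that we never compare $\bfu_j$ with $\bfu_j^\star$ individually; we only use that $\bfQ_r$ is a good spectral approximation of $\bfQ$.

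The main obstacle is when $\bfQ$ is Hermitian but indefinite. In that case the left singular vectors of a negative eigenvalue are eigenvectors, but the algorithm assigns them the positive weight $\sigma_j$, so $\bfQ_r$ need not equal the best rank-$r$ approximation $\sum_j \lambda_j \bfu_j\bfu_j^\dagger$. To handle this, I would bound the signed quantity directly. First, $\sum_{j\le r}\sigma_j\bfu_j\bfu_j^\dagger$ differs from the Eckart--Young truncation only on indices with $\lambda_j<0$, and each such term contributes at most $2|\lambda_j|$ in norm. Second, $\lambda_j \ge \lambda_j^\star - \norm{\bfH}_2 \ge -\norm{\bfH}_2$ by Weyl and $\bfQ^\star\succeq 0$, so $|\lambda_j| \le \norm{\bfH}_2$ whenever $\lambda_j < 0$. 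Third, these indices carry mutually orthogonal eigenvectors. Together these give $\norm{\bfQ_r - \bfQ_r^{\mathrm{EY}}}_2 \le 2\norm{\bfH}_2$. This again gives $\norm{\bfQ_r-\bfQ^\star}_2 = \OO(\lambda^\star_{r+1}+\norm{\bfH}_2)$, which finishes the proof. If the paper intends $\bfQ_r$ to be the PSD part only, this step simplifies, since dropping negative eigenvalues bounded by $\norm{\bfH}_2$ costs the same order.
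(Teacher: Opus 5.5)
Your proposal is correct and is essentially the paper's proof. The common steps are the sandwich bound of \Cref{lem:decomp_approx_err} (using that every $\bfz$ has squared norm $n$), the split $\norm{\bfQ^\star-\bfQ_r}_2 \le \norm{\bfH}_2 + \norm{\bfQ-\bfQ_r}_2$, and Weyl's inequality, which together give $\lambda_{r+1}^\star + 2\norm{\bfH}_2$ with no eigenvector perturbation and hence no $\delta^\star$.

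Your extra step for indefinite $\bfQ$ fixes a gap in the paper, which simply asserts that the left and right singular vectors of a Hermitian $\bfQ$ coincide; this holds only on the nonnegative part of the spectrum. Weyl's inequality and $\bfQ^\star \succeq 0$ give $|\lambda_j(\bfQ)| \le \norm{\bfH}_2$ for every negative eigenvalue, so the sign mismatch costs only an additional $\OO(\norm{\bfH}_2)$.
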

In short, \Cref{cor:approx_ratio_perturbation_herm} saves a multiplicative factor of $\frac{\lambda_1^{\star}}{\delta^{\star}}$ compared with Theorem \ref{thm:additive} by assuming that $\bfH$ is Hermitian. The proof of \Cref{cor:approx_ratio_perturbation_herm} is given in \Cref{app:proof_cor_approx}.


Theorem~\ref{thm:additive} made no assumptions on the noise matrix $\bfH$. 
With different assumptions on $\bfH$, we can turn (\ref{eq:additive}) into different approximation error bounds. In this section, we give a simple result under the assumption that each entry of $\bfH$ is independently drawn from a complex Gaussian, $\NN_{\CC}(0, \alpha^2)$. We specifically consider the circularly-symmetric complex Gaussian distribution, defined as:
\begin{definition}
    A complex random variable $Z = X + \imag Y$ follows the complex Gaussian distribution $\NN_{\CC}(0, \alpha^2)$ if $X$ and $Y$ are independently distributed according to $\NN(0, \alpha^2/2)$.
\end{definition}

We use the following fact for our analysis:

\begin{restatable}[Proposition 3.1 of \cite{benhamou2019operatornormupperbound}]{fact}{vershyninGaussianMatrix}
\label{thm:vershynin_gaussian_matrix}
    Let $\bfM\in\bbC^{n\times n}$ be a square matrix with independently entries $\bfM_{ij}\in\texttt{subG}\paren{1}$. Then there exists constant $C, c > 0$ such that for all $A\geq C$, with probability at least $1 - C\exp\paren{-cAn}$, we have that $\norm{\bfM}_2\leq A\sqrt{n}$.
\end{restatable}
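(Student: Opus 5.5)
The statement is a cited fact (Proposition 3.1 of the Benhamou reference). I would nonetheless prove it directly with the standard $\varepsilon$-net argument for sub-Gaussian matrices, taking care over the complex field.

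\textbf{Reduction and pointwise bound.} Write $\norm{\bfM}_2 = \sup_{\bfx,\bfy} \Repart(\bfy^\dagger \bfM \bfx)$, where $\bfx,\bfy$ range over the unit sphere of $\bbC^n$. Identify $\bbC^n$ with $\bbR^{2n}$, so this sphere becomes $\mathbb{S}^{2n-1}$. For fixed unit $\bfx,\bfy$,
\[
\Repart(\bfy^\dagger\bfM\bfx)=\sum_{i,j}\Repart(\bar y_i M_{ij} x_j)
\]
is a sum of independent terms. Each term is sub-Gaussian with parameter $\OO(|y_i||x_j|)$, because the real and imaginary parts of $M_{ij}$ are $\texttt{subG}(1)$ and $|\Repart(w M_{ij})|\le |w|\,(|\Repart M_{ij}|+|\Impart M_{ij}|)$. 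By Hoeffding's inequality for sub-Gaussian sums (centered entries are implicit in the $\texttt{subG}$ notation), the total is sub-Gaussian with variance proxy
\[
\OO\Bigl(\sum_{i,j}|y_i|^2|x_j|^2\Bigr)=\OO(1).
\]
Hence $\Pr[\Repart(\bfy^\dagger\bfM\bfx)>t]\le 2e^{-c_1 t^2}$.

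\textbf{Net and union bound.} Take a $1/4$-net $\mathcal{N}$ of $\mathbb{S}^{2n-1}$ with $|\mathcal{N}|\le 9^{2n}$. The usual approximation step gives
\[
\norm{\bfM}_2\le 2\max_{\bfx,\bfy\in\mathcal{N}}\Repart(\bfy^\dagger\bfM\bfx).
\]
To see this, pick the maximizing pair, move each vector to its nearest net point, and absorb the resulting error terms, each at most $\tfrac14\norm{\bfM}_2$. A union bound over the $81^{2n}$ pairs with $t=A\sqrt n/2$ then gives failure probability at most
\[
2\cdot 81^{2n}e^{-c_1A^2n/4}.
\]
Choosing $C$ large enough that $c_1A^2/4\ge 4\ln 81+ c A$ for all $A\ge C$ yields the bound $C e^{-cAn}$. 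Since $A^2\ge CA$ once $A\ge C$, the tail is in fact $e^{-cA^2 n}$, which is stronger than $e^{-cAn}$.

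\textbf{Main obstacle.} The only delicate step is the complex sub-Gaussian bookkeeping. One must make sure the convention $\texttt{subG}(1)$ for complex entries, meaning the real and imaginary parts are each sub-Gaussian, gives a uniform constant in Hoeffding's inequality. One must also make sure that working with $\Repart$ over the real embedding loses only constant factors. Everything else is the textbook argument, so I would expect to simply cite the fact as the paper does.
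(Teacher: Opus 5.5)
Your $\varepsilon$-net argument is correct: the sub-Gaussian bound for fixed $\Repart(\bfy^\dagger\bfM\bfx)$, the reduction $\norm{\bfM}_2\le 2\max_{\mathcal N}\Repart(\bfy^\dagger\bfM\bfx)$, and the union bound all go through, and you are right to flag that the mean-zero convention implicit in $\texttt{subG}(1)$ is needed. The paper gives no proof and simply cites Proposition 3.1 of Benhamou et al., whose underlying proof is, as far as I recall, this same net-plus-concentration argument in the style of Tao; your version also gives the stronger $e^{-cA^2 n}$ tail.
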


Using \Cref{thm:additive} and \Cref{thm:vershynin_gaussian_matrix}, we establish more specific bounds, as in the following corollary.

\begin{corollary}
    \label{cor:noise_plug_in}
    Let $\bfz_r,\texttt{OPT-}\bfQ^{\star}$ be defined in (\ref{eq:def_maximizers}). Let $\lambda_j^{\star}$ be the $j$th eigenvalue of $\bfQ^{\star}$ with $\lambda_j^{\star} = 0$ if $j > r$, and define $\delta^{\star} = \min\left\{\min_{j\in[r-1]}\left|\lambda_j^{\star} - \lambda_{j+1}^{\star}\right|,\lambda_r^{\star}\right\}$. Let $\bfH \in \mathbb{C}^{p \times p}$ such that $H_{i, k} \sim \NN_\CC \left(0, \alpha^2 \right)$ for some $\alpha \in \mathbb{R}$. If $\alpha\leq \frac{c\delta^{\star}}{\sqrt{p}}$ for some small enough constant $c$, then, there exist constants $C, c' > 0$ such that with probability at least $1 - C\exp\paren{-c'n}$, we have that:
    \[
        \left|\texttt{OPT-}\bfQ^{\star} - \bfz_r\bfQ^{\star}\bfz_r\right| \leq O\paren{p\lambda_{r+1}^{\star} + \tfrac{p^{\frac{3}{2}}\lambda_1^{\star}\varepsilon}{\delta^{\star}}}.
    \]
\end{corollary}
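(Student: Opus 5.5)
The plan is to obtain \Cref{cor:noise_plug_in} by plugging a high-probability operator-norm bound for $\bfH$ into \Cref{thm:additive}. Throughout I identify $p$ with $n$ (the ambient dimension). I read the $\varepsilon$ in the displayed bound as the noise scale, i.e.\ $\varepsilon = \alpha$, up to constants; with this reading the claimed term $p^{3/2}\lambda_1^\star\alpha/\delta^\star$ is exactly $n\lambda_1^\star\norm{\bfH}_2/\delta^\star$ after substituting $\norm{\bfH}_2 = \OO(\alpha\sqrt{n})$.

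\emph{Step 1 (normalize the noise).} Write $\bfH = \alpha\bfM$ with $\bfM_{ik} \sim \NN_\CC(0,1)$ independent. Each entry has real and imaginary parts distributed as $\NN(0,1/2)$, so each $\bfM_{ik}$ is sub-Gaussian with an absolute constant parameter. After rescaling $\bfM$ by a constant, it satisfies the hypothesis $\bfM_{ik}\in\texttt{subG}(1)$ of \Cref{thm:vershynin_gaussian_matrix}.

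\emph{Step 2 (operator-norm bound).} Apply \Cref{thm:vershynin_gaussian_matrix} with a fixed $A = C$, absorbing the rescaling constant. With probability at least $1-C\exp(-c'n)$, where $c' = cC$, we get $\norm{\bfM}_2 \le A\sqrt{n}$ and hence $\norm{\bfH}_2 \le A\alpha\sqrt{n}$.

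\emph{Step 3 (verify the hypothesis of \Cref{thm:additive}).} On the event from Step 2, the assumption $\alpha \le c\,\delta^\star/\sqrt{n}$ gives $\norm{\bfH}_2 \le Ac\,\delta^\star$. Choose the constant $c \le 1/(2A)$. Then $\norm{\bfH}_2\le\delta^\star/2$, and since $r\le r^\star$ is assumed, \Cref{thm:additive} applies on this event.

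\emph{Step 4 (substitute).} \Cref{thm:additive} gives
\[
\bigl|\optQ-\bfz_r^\dagger\bfQ^\star\bfz_r\bigr| \le \OO\Bigl(n\lambda_{r+1}^\star + \frac{n\lambda_1^\star}{\delta^\star}A\alpha\sqrt{n}\Bigr) = \OO\Bigl(n\lambda_{r+1}^\star + \frac{n^{3/2}\lambda_1^\star\alpha}{\delta^\star}\Bigr),
\]
which is the stated bound with $p=n$. The failure probability is that of Step 2.

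There is no serious technical obstacle. The one delicate point is the order in which constants are chosen: $A$ must be fixed first, by \Cref{thm:vershynin_gaussian_matrix}, and only then can the ``small enough'' constant $c$ in the hypothesis be set so that $\norm{\bfH}_2\le\delta^\star/2$ holds on the high-probability event. A second bookkeeping point is confirming that the complex Gaussian entries fall under the sub-Gaussian hypothesis of the cited fact with an absolute constant. If the fact is meant only for real matrices, I would instead use the real $2n\times 2n$ embedding of $\bfH$, whose spectral norm equals $\norm{\bfH}_2$ and whose entries are real Gaussians. This changes only the constants.
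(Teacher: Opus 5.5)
Your proposal is correct and follows the paper's proof essentially step for step. You normalize $\bfH$ by $\alpha$, invoke \Cref{thm:vershynin_gaussian_matrix} to get $\norm{\bfH}_2 \le \OO(\alpha\sqrt{p})$ with high probability, use the smallness of $\alpha$ to meet the hypothesis $\norm{\bfH}_2\le\delta^\star/2$ of \Cref{thm:additive}, and substitute, reading $\varepsilon$ as the noise level $\alpha$. Your explicit ordering of constants (fix $A$ first, then choose $c\le 1/(2A)$) simply spells out what the paper compresses into ``notice the choice of $\alpha$.''
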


\begin{proof}
    By definition, since $H_{i,k}\sim\mathcal{N}_{\mathcal{C}}\paren{0,\alpha^2}$, we have that $\text{Re}\paren{H_{i,k}},\text{Im}\paren{H_{i,k}}\sim\mathcal{N}\paren{0,\frac{\alpha^2}{2}}$ independently. Thus, we have that $H_{i,k}$ is sub-Gaussian in the complex space, which allows us to use Theorem~\ref{thm:vershynin_gaussian_matrix} to obtain that, with probability at least $1 - C\exp\paren{-c'p}$, we have $ \norm{\alpha^{-1}\bfH}_2\leq O\paren{\sqrt{p}} \Rightarrow \norm{\bfH}_2\leq O\paren{\alpha\sqrt{p}}$. 
    Notice the choice of $\alpha$; plugging that upper bound into (\ref{eq:additive}) gives the desired result.
\end{proof}
One could notice that the upper bound on the operator norm of $\bfH$ matches the standard result in the real case \cite{vershynin2018high}, both with a scaling of $\sqrt{p}$. In addition, Corollary \ref{cor:noise_plug_in} implies that the noise term dominates only when $\varepsilon\geq \Omega\paren{\frac{\delta^{\star}\lambda_{r+1}^{\star}}{\lambda_1^{\star}\sqrt{p}}}$, showcasing a relationship between the approximation error, the noise scale, and the threshold rank $r$.

Using this additive bound, we can bound the approximation ratio incurred by the perturbation and rank-$r$ approximation. The detailed proof of Theorem~\ref{thm:mult} is available in \Cref{app:multRatioProof}.


\begin{restatable}[Multiplicative bound]{theorem}{multBound}
\label{thm:mult}
Let $\bfQ^{\star}$ be rank-$r$ Hermitian with $\norm{\bfu_1^{\star}}_{\infty} \leq \mu/\sqrt{n}$.
If $\norm{\bfH}_2 \leq \delta^{\star}/2$, then
\begin{equation}
\label{eq:mult}
\frac{\bfz_r^{\dagger}\bfQ^{\star}\bfz_r}{\optQ}
\geq 1 - \OO \left(\frac{\norm{\bfH}_2}{\delta^{\star}}\right).
\end{equation}
\end{restatable}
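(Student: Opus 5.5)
Our plan is to derive the multiplicative bound from the additive bound of \Cref{thm:additive} combined with a lower bound on $\optQ$ that is linear in $n$. Since $\bfQ^{\star}$ has rank exactly $r$, we take $r = r^{\star}$ in \Cref{thm:additive}. Then $\lambda_{r+1}^{\star} = 0$, and under $\norm{\bfH}_2 \leq \delta^{\star}/2$ we obtain
\begin{equation*}
\optQ - \bfz_r^{\dagger}\bfQ^{\star}\bfz_r \leq C\, n\, \frac{\lambda_1^{\star}}{\delta^{\star}}\norm{\bfH}_2 .
\end{equation*}
Dividing by $\optQ$ gives $1 - \bfz_r^{\dagger}\bfQ^{\star}\bfz_r/\optQ \leq C n \lambda_1^{\star}\norm{\bfH}_2/(\delta^{\star}\optQ)$. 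It therefore suffices to show $\optQ \geq c\, n\lambda_1^{\star}$ for a constant $c>0$. Here $c$ may depend on $\mu$ and $K$, and these are absorbed in the $\OO(\cdot)$.

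For the lower bound on $\optQ$, we use only the top eigencomponent. Since $\bfQ^{\star} \succeq \lambda_1^{\star}\bfu_1^{\star}\bfu_1^{\star\dagger}$, we have $\optQ \geq \lambda_1^{\star}\max_{\bfz\in\AK^n}|\bfz^{\dagger}\bfu_1^{\star}|^2$. We then round coordinatewise: set $z_i$ to the element of $\AK$ nearest in phase to $u^{\star}_{1,i}$, exactly as in the rank-1 analysis of \Cref{thm:rank1}. The phase error is then at most $\pi/K$, so $\Repart(\bar z_i u^{\star}_{1,i}) \geq \cos(\pi/K)|u^{\star}_{1,i}|$. This gives $|\bfz^{\dagger}\bfu_1^{\star}| \geq \cos(\pi/K)\norm{\bfu_1^{\star}}_1$.

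Incoherence converts the $\ell_1$ norm into a $\sqrt{n}$ factor. From $1 = \norm{\bfu_1^{\star}}_2^2 \leq \norm{\bfu_1^{\star}}_\infty\norm{\bfu_1^{\star}}_1 \leq (\mu/\sqrt n)\norm{\bfu_1^{\star}}_1$ we get $\norm{\bfu_1^{\star}}_1 \geq \sqrt n/\mu$. Hence $\optQ \geq \cos^2(\pi/K)\, n\lambda_1^{\star}/\mu^2$. For $K=2$ we have $\cos(\pi/2)=0$, so this rounding bound is vacuous. In that case we instead use the sign rounding of the real and imaginary parts: choose $z_i=\pm1$ to maximize the larger of $|\sum_i z_i\Repart u_{1,i}^{\star}|$ and $|\sum_i z_i\Impart u_{1,i}^{\star}|$. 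This yields $|\bfz^{\dagger}\bfu_1^{\star}| \geq \norm{\bfu_1^{\star}}_1/2$, so a constant $c_K>0$ holds for all $K\geq2$.

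Substituting, we get $1 - \bfz_r^{\dagger}\bfQ^{\star}\bfz_r/\optQ \leq C\mu^2 \norm{\bfH}_2/(c_K^2\delta^{\star})$, which is \eqref{eq:mult}. The main obstacle is bookkeeping: we must check that the hidden constant in \Cref{thm:additive} is independent of $n$ and of $\lambda_1^{\star}/\delta^{\star}$ beyond the displayed factor, so that $\lambda_1^{\star}$ cancels cleanly. We also must confirm that the theorem applies with $r = r^{\star}$. If \Cref{thm:additive} only gives $\lambda_1^{\star}/\delta^{\star}$ with an extra spectral factor, we would return to its proof and bound $\bfz^{\dagger}(\bfQ^{\star}-\bfQ_r)\bfz$ directly by $n\norm{\bfQ^{\star}-\bfQ_r}_2$, using a Davis--Kahan/Wedin estimate $\norm{\bfQ^{\star}-\bfQ_r}_2 \lesssim \lambda_1^{\star}\norm{\bfH}_2/\delta^{\star}$. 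Combined with $\bfz_r^{\dagger}\bfQ_r\bfz_r \geq \bfz^{\star\dagger}\bfQ_r\bfz^{\star}$, this is exactly the structure the additive bound encodes.
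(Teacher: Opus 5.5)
Your proposal is correct and follows the paper's proof essentially step for step. The paper also divides the additive bound of \Cref{thm:additive} (where $\lambda_{r+1}^{\star}=0$ because $\bfQ^{\star}$ has rank $r$) by the lower bound $\optQ \geq \cos^2(\pi/K)\,n\lambda_1^{\star}/\mu^2$, obtained by phase-rounding $\bfu_1^{\star}$ and using $1\le\norm{\bfu_1^{\star}}_{\infty}\norm{\bfu_1^{\star}}_1$. Your separate sign-rounding argument for $K=2$ is a small genuine improvement, since the paper's step $\cos^2(\pi/K)\geq 1/4$ only holds for $K\geq 3$.
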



\subsection{Sampling Candidates} \label{sec:sampling}

In this section, we consider a randomized alternative to \Cref{alg:rankr} based on sampling random candidates. Let $\SC{r} := \{\bfc \in \bbC^r : \norm{\bfc}_2 = 1\}$ denote the
complex unit sphere and write $\bfc \sim \mathrm{Unif}(\SC{r})$ for the uniform
distribution on it. For $\bfc \in \SC{r}$, let $\bfz(\bfc) \in \AK^n$ be the
assignment produced by the decision rule~\eqref{eq:decision_rule}, and define
the sampled objective value
\begin{equation}
\label{eq:def_f}
  f(\bfc) := \norm{\bfV^{\dagger}\bfz(\bfc)}_2^2 = \bfz(\bfc)^{\dagger}\bfQ_r\,\bfz(\bfc).
\end{equation}
At rank~$1$ the complex sphere is the unit circle: $\bfc = e^{\imag\varphi}$,
and $\bfc \sim \mathrm{Unif}(\SC{1})$ is the same as
$\varphi \sim \mathrm{Unif}[0, 2\pi)$; we abbreviate
$f(\varphi) := f(e^{\imag\varphi})$.

Instead of iterating over all of the cells, we uniformly sample $S$ candidate
directions $\bfc_1, \ldots, \bfc_S \sim \mathrm{Unif}(\SC{r})$ and return the
best of the rounded assignments $\bfz(\bfc_1), \ldots, \bfz(\bfc_S)$. The
difficulty in analyzing this scheme is that no off-the-shelf guarantee
describes how much probability mass the optimal or near-optimal directions carry. The
results of this section establish exactly that: we prove upper bounds on the
sample complexity, showing that a near-optimal objective value is reached
with a number of samples independent of $n$.

\subsubsection{Rank-1}
\label{sec:thm1a}

We give bounds on the sample complexity required to obtain a near-optimum by randomly sampling $S$ directions and choosing the maximum objective.
We introduce the rounding margin, which will be used in our sample complexity bound for rank-$1$ matrices. 

\begin{definition}[Rounding margin of the rank-1 optimum]
\label{def:rank1_margin}

Assume $r=1$. Consider a matrix $\bfQ = \lambda \bfq \bfq^\dagger$. Let $\varphi^\star$ maximize $f$, i.e., $f(\varphi^\star)=\mathrm{OPT}_1$. Let $\vartheta_m$ be a decision boundary and let $\theta_i = \mathrm{arg}(\bfq_i)$.
The {rounding margin}
\[
  \eta \;:=\; \frac{K}{\pi}\,\min_{i \in [n]}\
  \operatorname{dist}\!\bigl(\theta_i+\varphi^\star,\ \{\vartheta_m\}_{m=0}^{K-1}\bigr)
  \;\in\; (0,1]
\]
is the normalized distance from the closest coordinate to a decision boundary
at the optimum. 

\end{definition}

We remark that the rounding margin is positive under \Cref{asm:general_position}, which guarantees no angle $\theta_i+\varphi^\star$ lies exactly on a boundary $\vartheta_m=\pi(2m{+}1)/K$. Our first bound, making use of the rounding margin, holds only for rank-$1$ objectives $\bfQ_1$, and establishes conditions under which we can, with high probability, obtain the \textit{exact optimum}.

\begin{restatable}{theorem}{tailRankOne}
\label{thm:tail_rank1}
Let $\bfQ_1=\bfV\bfV^\dagger\in\CC^{n\times n}$ be rank-1 PSD with
$V_{i,1}=r_ie^{\imag\theta_i}$, $r_i\ge0$, and $\bfc=e^{\imag\varphi}$ with
$\varphi\sim\mathrm{Unif}[0,2\pi)$. Let $K\ge2$ and suppose
\Cref{asm:general_position} holds. Let $\eta\in(0,1]$ be the rounding margin
of \Cref{def:rank1_margin}. Then,
\begin{equation}
\label{eq:tail_rank1_incoh}
   \Pr[\varphi]{f(\varphi)=\mathrm{OPT}_1}\ \ge\ \eta .
\end{equation}
Consequently, drawing $S\ge \log(1/\delta)/\eta$ i.i.d. directions and returning the
best candidate among them recovers the exact rank-1 optimum with probability at least $1-\delta$.
\end{restatable}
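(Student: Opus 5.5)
The plan is to show that the set of angles $\varphi$ whose rounded assignment equals the optimal assignment contains an arc of length at least $2\pi\eta$ (out of the circle of length $2\pi$). Write $\bfz^\star := \bfz(\varphi^\star)$. By the reformulation \eqref{eq:rank1_double}, $f(\varphi^\star)=\mathrm{OPT}_1$. The decision rule \eqref{eq:decision_rule} at angle $\varphi$ assigns $z_i$ according to which sector (of angular width $2\pi/K$, delimited by the rays $\vartheta_m$) contains $\theta_i+\varphi$. Hence $\bfz(\varphi)=\bfz^\star$ whenever no coordinate's phase $\theta_i+\varphi$ crosses a boundary ray as $\varphi$ moves away from $\varphi^\star$.

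\textbf{Step 1 (the arc).} By \Cref{def:rank1_margin}, every $\theta_i+\varphi^\star$ lies at distance at least $\eta\pi/K$ from every boundary $\vartheta_m$. Therefore, for $|\varphi-\varphi^\star|<\eta\pi/K$, each $\theta_i+\varphi$ stays in the same open sector as $\theta_i+\varphi^\star$. It follows that $\bfz(\varphi)=\bfz^\star$ and $f(\varphi)=\mathrm{OPT}_1$ on an arc of length $2\pi\eta/K$.

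\textbf{Step 2 (the $K$ rotated copies).} To recover the missing factor $K$, use the rotational symmetry of $\AK$. Replacing $\varphi$ by $\varphi+2\pi k/K$ multiplies every $q_i c(\varphi)$ by $\omega^k$, where $\omega=e^{2\pi\imag/K}\in\AK$. The sectors are invariant under this rotation, so the decision rule gives $\bfz(\varphi+2\pi k/K)=\omega^k\bfz(\varphi)$. Since $\abs{(\omega^k\bfz)^\dagger\bfq}^2=\abs{\bfz^\dagger\bfq}^2$, the value of $f$ is unchanged. Consequently $f=\mathrm{OPT}_1$ on the $K$ arcs centered at $\varphi^\star+2\pi k/K$, $k=0,\dots,K-1$, each of length $2\pi\eta/K$. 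These arcs are pairwise disjoint because $\eta\le1$ gives half-width $\eta\pi/K\le \pi/K$, so arcs with centers $2\pi/K$ apart overlap at most at endpoints. Their total measure is $2\pi\eta$, and dividing by $2\pi$ gives $\Pr[\varphi]{f(\varphi)=\mathrm{OPT}_1}\ge\eta$.

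\textbf{Step 3 (sampling).} With $S$ i.i.d. draws, the probability that no draw lands in the good set is at most $(1-\eta)^S\le e^{-\eta S}$. This is at most $\delta$ once $S\ge\log(1/\delta)/\eta$, and the best of the sampled candidates then attains $\mathrm{OPT}_1$.

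\textbf{Main obstacle.} The subtle point is Step 1's use of $\varphi^\star$: the margin must be measured at a maximizer where no coordinate sits on a boundary. \Cref{asm:general_position} (no zero rows, hence each $\theta_i$ well defined) together with the stated positivity of $\eta$ guarantees this. I would also remark that the maximizer is not unique modulo the $K$-fold symmetry; the arcs are measured around a fixed representative $\varphi^\star$, which suffices for the lower bound.
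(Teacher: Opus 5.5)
Your proposal is correct and follows essentially the same route as the paper. You take an arc of length at least $2\pi\eta/K$ around $\varphi^\star$ from the margin, form its $K$ rotated copies under $\varphi\mapsto\varphi+2\pi/K$ (which sends $\bfz$ to $\omega\bfz$ and leaves $f$ unchanged), use their disjointness to get total measure $2\pi\eta$, and finish with the geometric tail $(1-\eta)^S\le e^{-\eta S}$. The only difference is cosmetic: you work with the explicit arc of half-width $\eta\pi/K$ and get disjointness directly from $\eta\le 1$, whereas the paper works with the whole optimal cell $C^\star$ and argues its rotated copies are disjoint because the rotation changes the assignment.
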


\begin{proof}
Under \Cref{asm:general_position}, $f$ is piecewise constant on $\mathbb{S}^1$ with
$nK$ open cells separated by the rounding discontinuities, and
$f\equiv\mathrm{OPT}_1$ on the cell $C^\star\ni\varphi^\star$. The assignment is
$\tfrac{2\pi}{K}$-periodic up to a global relabel: $k_i(\varphi+\tfrac{2\pi}{K})
=k_i(\varphi)+1$, so $\bfz(\varphi+\tfrac{2\pi}{K})=\omega\,\bfz(\varphi)$ with
$\omega=e^{2\pi\imag/K}$, and $f(\varphi+\tfrac{2\pi}{K})=|\omega|^2f(\varphi)
=f(\varphi)$. 
Hence all $K$ rotated copies $C^\star+\tfrac{2\pi j}{K}$, $j=0,\dots,K{-}1$,
of the optimal cell satisfy $f\equiv\mathrm{OPT}_1$. These copies are pairwise
disjoint: rotating by $\tfrac{2\pi}{K}$ changes the assignment (a global
relabel by $\omega\neq1$), so no cell contains both $\varphi$ and
$\varphi+\tfrac{2\pi}{K}$, i.e., every cell has width less than $\tfrac{2\pi}{K}$. By
\Cref{def:rank1_margin} the nearest coordinate lies at distance $\ge\eta\pi/K$ from a
boundary on each side of $\varphi^\star$, so $|C^\star|\ge 2\pi\eta/K$ and, as
$\varphi\sim\mathrm{Unif}[0,2\pi)$,
\[
   \Pr[\varphi]{f(\varphi)=\mathrm{OPT}_1}\ \ge\ \frac{K\,|C^\star|}{2\pi}\ \ge\ \eta.
\]
For the sample complexity, $\Pr{\text{all }S\text{ draws fail}}\le(1-\eta)^S\le
e^{-\eta S}\le\delta$ whenever $S\ge\log(1/\delta)/\eta$.
\end{proof}

\subsubsection{Rank-$r$}

Our next bound is sharpest for small $r$: the per-sample success probability
decays as $\varepsilon^{r-1}$, so at fixed accuracy $\varepsilon$ the sample
complexity grows exponentially in $r$. The proof relies on showing that the
squared inner product $|\bfu^\dagger\bfc|^2$ between a uniformly sampled direction $\bfc$
and a fixed extremal direction $\bfu$ follows a $\mathrm{Beta}(1,r{-}1)$
distribution.

\begin{restatable}{theorem}{betaTail}
\label{thm:rr_tail}
Let $K \geq 2$ and $r \geq 2$ and assume $\bfV \in \CC^{n \times r}$ is of full column rank
(so $\bfQ_r = \bfV \bfV^\dagger$ has rank $r$).
For any $\varepsilon \in (0, 1)$,
\begin{equation}
\label{eq:rr_tail_main}
 \Pr[\bfc \sim \mathrm{Unif}(\SC{r})]{
 f(\bfc) \geq \cos^2(\pi/K)(1 - \varepsilon) \mathrm{OPT}_r}
 \geq \varepsilon^{r-1}.
\end{equation}

Consequently, given a set of $S \geq \log(1/\delta)/\varepsilon^{r-1}$ i.i.d. samples $\{\bfc_i\}_{i=1}^S$ such that $\bfc_i \sim \mathrm{Unif}(\SC{r})$, it holds that with probability at least $1 - \delta$,
\begin{equation}
\label{}
 \max_{1 \leq i \leq S} f({\bfc_i}) \geq \cos^2(\pi/K)(1 - \varepsilon) \mathrm{OPT}_r.
\end{equation}
\end{restatable}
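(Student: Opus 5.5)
Let $\bfz^\star$ attain $\mathrm{OPT}_r = \norm{\bfV^\dagger\bfz^\star}_2^2$, and set $\bfu := \bfV^\dagger\bfz^\star/\norm{\bfV^\dagger\bfz^\star}_2 \in \SC{r}$. The plan is to show that whenever the sampled $\bfc$ is well aligned with $\bfu$, the rounded assignment $\bfz(\bfc)$ is good. Then we lower bound the probability of such alignment using the $\mathrm{Beta}(1,r-1)$ law of $|\bfu^\dagger\bfc|^2$.

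\textbf{Deterministic step.} Fix $\bfc$ and write $\alpha_i = \bfV_{i,:}\bfc$. By the decision rule~\eqref{eq:decision_rule}, $z_i(\bfc)$ is the root of unity closest in phase to $\alpha_i$, so the phase error is at most $\pi/K$. Hence $\Repart(\bar z_i(\bfc)\alpha_i) \ge \cos(\pi/K)|\alpha_i|$. Using that $z_i(\bfc)$ maximizes $\Repart(\bar z\alpha_i)$ over $\AK$ is not sufficient by itself; the stronger inequality
\[
\sum_i \Repart(\bar z_i(\bfc)\alpha_i) \ge \cos(\pi/K)\sum_i|\alpha_i| \ge \cos(\pi/K)\Bigl|\sum_i \bar z_i^\star \alpha_i\Bigr| = \cos(\pi/K)\,|\bfz^{\star\dagger}\bfV\bfc|
\]
is what we need. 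By Cauchy--Schwarz, the left-hand side equals $\Repart(\bfz(\bfc)^\dagger\bfV\bfc) \le \norm{\bfV^\dagger\bfz(\bfc)}_2$. Moreover, $|\bfz^{\star\dagger}\bfV\bfc| = \sqrt{\mathrm{OPT}_r}\,|\bfu^\dagger\bfc|$. Squaring gives $f(\bfc) \ge \cos^2(\pi/K)\,|\bfu^\dagger\bfc|^2\,\mathrm{OPT}_r$. Therefore the event $\{|\bfu^\dagger\bfc|^2 \ge 1-\varepsilon\}$ implies the target event in~\eqref{eq:rr_tail_main}.

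\textbf{Probabilistic step.} By unitary invariance of $\mathrm{Unif}(\SC{r})$, we may take $\bfu = \bfe_1$. We then realize $\bfc = \mathbf{g}/\norm{\mathbf{g}}_2$ with $\mathbf{g}$ a standard complex Gaussian vector. The variables $|g_j|^2$ are i.i.d.\ $\mathrm{Exp}(1)$, i.e.\ $\mathrm{Gamma}(1,1)$. Hence $|c_1|^2 = |g_1|^2/\sum_j|g_j|^2 \sim \mathrm{Beta}(1,r-1)$, with tail $\Pr{|c_1|^2 \ge t} = (1-t)^{r-1}$. Setting $t = 1-\varepsilon$ gives probability exactly $\varepsilon^{r-1}$, which proves~\eqref{eq:rr_tail_main}.

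\textbf{Amplification.} With $S$ i.i.d.\ samples, all of them fail with probability at most $(1-\varepsilon^{r-1})^S \le \exp(-S\varepsilon^{r-1}) \le \delta$ when $S \ge \log(1/\delta)/\varepsilon^{r-1}$.

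The main care point is the deterministic rounding inequality: the $\cos(\pi/K)$ factor must be obtained against $\sum_i|\alpha_i|$ rather than against the optimum directly, and the triangle inequality then bridges to $\bfz^\star$. The Beta computation is standard. Full column rank is used only to ensure that $\mathrm{OPT}_r > 0$, so that $\bfu$ is well defined.
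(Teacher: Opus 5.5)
Your proof is correct. It reaches the key inequality $f(\bfc)\ge\cos^2(\pi/K)\,|\bfu^\dagger\bfc|^2\,\OPTr$ by a more elementary route than the paper.

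\emph{The paper's route.} It introduces the norm $N(\bfc)=\norm{\bfV\bfc}_1$; this is where full column rank enters. It takes a maximizer $\bfc_0$ of $N$ on the sphere, with value $M$. It then invokes the complex Hahn--Banach (Bohnenblust--Sobczyk) theorem to get a functional $\bfa^\dagger\bfc$ dominated by $N$ with $\norm{\bfa}_2\ge M$. Finally it proves separately that $\OPTr\le M^2$.

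\emph{Your route.} You write the dominating functional down explicitly as $\bfc\mapsto\bfz^{\star\dagger}\bfV\bfc$. Its domination by $N$ is just the triangle inequality with $|z_i^\star|=1$, and its norm is exactly $\sqrt{\OPTr}$. This removes the compactness argument, the Hahn--Banach step, and the comparison $\OPTr\le M^2$. It also makes full column rank essentially superfluous, since the claim is trivial when $\OPTr=0$.

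\emph{What the paper's route buys.} It gives a marginally stronger intermediate statement: with probability at least $\varepsilon^{r-1}$, $f(\bfc)\ge\cos^2(\pi/K)(1-\varepsilon)M^2$. Here $M^2=\max_{\norm{\bfc}_2=1}\norm{\bfV\bfc}_1^2$ is the value of the continuous unit-modulus relaxation, which dominates $\OPTr$.

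Your trick recovers this as well. Replace $\bfz^\star$ by the unit-modulus vector $\mathbf{w}$ with $w_i=(\bfV\bfc_0)_i/|(\bfV\bfc_0)_i|$ (any unit-modulus value where this entry vanishes). Then $\bfc\mapsto\mathbf{w}^\dagger\bfV\bfc$ is dominated by $N$ and has norm exactly $M$, so this is precisely the functional that Hahn--Banach produces abstractly.

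The $\mathrm{Beta}(1,r-1)$ computation and the amplification step are the same as in the paper.
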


\begin{proof}[Proof sketch]
The proof rests on two deterministic facts: nearest-root rounding loses at
most the worst residual cosine per coordinate, giving
$f(\bfc)\ge\cos^2(\pi/K)\,N(\bfc)^2$ with $N(\bfc):=\norm{\bfV\bfc}_1$; and
the norm $N$ admits a linear minorant $|\bfa^\dagger\bfc|\le N(\bfc)$ (a
complex Hahn--Banach support functional at the maximizer of $N$ on the
sphere) with $\norm{\bfa}_2\ge M:=\max_{\norm{\bfc}_2=1}N(\bfc)$, while
duality gives $\OPTr\le M^2$. With $\bfu:=\bfa/\norm{\bfa}_2$, the alignment
$X=|\bfu^\dagger\bfc|^2$ has an explicit law on the complex sphere,
$\mathrm{Beta}(1,r-1)$, so the event $\{X\ge1-\varepsilon\}$ has probability
exactly $\varepsilon^{r-1}$. On this event, combining the two deterministic
facts gives $f(\bfc)\ge\cos^2(\pi/K)(1-\varepsilon)M^2
\ge\cos^2(\pi/K)(1-\varepsilon)\OPTr$. The exponent $r-1$ reflects the measure
of spherical caps of fixed relative size, which decays exponentially with the
dimension; this is why the bound is sharpest for small $r$. The sample
complexity follows from independence across draws:
$(1-\varepsilon^{r-1})^S\le e^{-S\varepsilon^{r-1}}$.

The detailed proof of Theorem~\ref{thm:rr_tail} is available in \Cref{app:proof-beta-tail}.
\end{proof}

As a consequence of \Cref{thm:rr_tail}, we remark that for a low-rank, Hermitian, PSD matrix $\bfQ$, we can approximate a discrete complex quadratic maximization with a sample size $S$ that is entirely independent of $n$. This reduces the computational complexity of our technique from $\OO(rn^{2r-1})$ to $\OO(S \cdot n^2) = \OO(n^2 / \varepsilon^{r-1})$, at the cost of an approximation factor.

Our final theoretical result improves upon the dependence on $r$ in the sample complexity of \Cref{thm:rr_tail}, at the expense of obtaining a weaker approximation factor and introducing a dependence on the incoherence parameter $\mu$, which may be dependent on $n$.

\begin{restatable}{theorem}{pzBasedTail}
\label{thm:tail_n_dep}

Let $K \geq 2$ and $r \geq 2$ and assume $\bfV \in \CC^{n \times r}$ is of full column rank.
Assume that \Cref{asm:general_position} and~\Cref{asm:incoherence}
with incoherence parameter $\mu$ hold. For any $\varepsilon \in (0, 1)$,
\begin{equation}
\label{eq:thm_tail_n_dep}
\Pr[\bfc \sim \mathrm{Unif}(\SC{r})] {
 f(\bfc) \geq \frac{\cos^2(\pi/K)}{r} (1-\varepsilon) \OPTr}
 \geq 
 \frac{\varepsilon^2}{1 + V_0\bigl(r, K, \mu\bigr)},
\end{equation}
where for some absolute constant $c$,
\begin{equation}
\label{eq:thm_tail_n_dep_V0}
 V_0\bigl(r, K, \mu \bigr)
 \leq 
 \frac{c \cdot r^6 \cdot \mu^4}{\cos^8(\pi/K)}.
\end{equation}
Consequently, given a set of $S \geq \OO \left(\frac{\log(1/\delta) \mu^4 r^6}{\varepsilon^2 \cos^8(\pi/K)}\right)$ i.i.d. samples $\{\bfc_i\}_{i=1}^S$ such that $\bfc_i \sim \mathrm{Unif}(\SC{r})$, it holds that with probability at least $1 - \delta$,
\begin{equation}
\label{eq:pz_tail_success_error}
    \max_{1 \leq i \leq S} f({\bfc_i}) \geq \frac{\cos^2(\pi/K)}{r} (1 - \varepsilon)\mathrm{OPT}_r.
\end{equation}
\end{restatable}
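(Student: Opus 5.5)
The plan is a Paley--Zygmund argument applied to a quadratic surrogate of $f$. It has three parts: a deterministic lower bound for $f$, a first-moment computation, and a second-moment computation.

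\textbf{Deterministic lower bound.} As in the proof sketch of \Cref{thm:rr_tail}, nearest-root rounding loses at most a factor $\cos(\pi/K)$ per coordinate. This gives
\[
f(\bfc)\ \ge\ \cos^2(\pi/K)\,N(\bfc)^2, \qquad N(\bfc)=\norm{\bfV\bfc}_1 .
\]
Since $\norm{\bfV\bfc}_1\ge\norm{\bfV\bfc}_2$ is too weak, I work directly with $Y:=N(\bfc)^2$. The target event is then implied by
\[
Y\ \ge\ (1-\varepsilon)\,\OPTr/r .
\]

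\textbf{First moment.} I use $\OPTr\le M^2:=\max_{\norm{\bfc}_2=1}N(\bfc)^2$ (duality, as in \Cref{thm:rr_tail}). The aim is $\mathbb{E}[Y]\ge M^2/r$, up to constants. The steps are as follows.
\begin{itemize}
\item Let $\bfc^\star$ attain $M$ and set $\bfa=\bfV^\dagger\bfs$, where $s_i$ is the phase of $(\bfV\bfc^\star)_i$. Then $N(\bfc)\ge|\bfa^\dagger\bfc|$ and $\norm{\bfa}_2\ge M$.
\item For $\bfc$ uniform on the complex sphere, $\mathbb{E}|\bfa^\dagger\bfc|^2=\norm{\bfa}_2^2/r$.
\item Hence $\mathbb{E}[Y]\ge\mathbb{E}|\bfa^\dagger\bfc|^2\ge M^2/r\ge\OPTr/r$.
\end{itemize}
This is where the $1/r$ in the statement comes from. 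Paley--Zygmund applied to $Y$ then gives
\[
\Pr[Y\ge(1-\varepsilon)\mathbb{E}Y]\ \ge\ \frac{\varepsilon^2(\mathbb{E}Y)^2}{\mathbb{E}[Y^2]}\ =\ \frac{\varepsilon^2}{1+\mathrm{Var}(Y)/(\mathbb{E}Y)^2}.
\]
So $V_0$ is defined as, or bounded by, the normalized variance $\mathrm{Var}(Y)/(\mathbb{E}Y)^2$.

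\textbf{Second moment (main obstacle).} It remains to show $\mathbb{E}[Y^2]/(\mathbb{E}Y)^2\lesssim r^6\mu^4/\cos^8(\pi/K)$.
\begin{itemize}
\item \emph{Upper bound on $\mathbb{E}[Y^2]$.} Write $Y^2=N^4$ and use $N(\bfc)\le\sum_i\norm{\bfV_{i,:}}_2$. By \Cref{asm:incoherence}, $\sum_i\norm{\bfV_{i,:}}_2\le\mu\sqrt{n\,\mathrm{tr}(\bfQ_r)}$. Therefore $\mathbb{E}[Y^2]\le\mu^4 n^2(\mathrm{tr}\,\bfQ_r)^2$.
\item \emph{Lower bound on $\mathbb{E}Y$.} Use $\mathbb{E}Y\ge\OPTr/r$ together with a lower bound on $\OPTr$. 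Choosing $\bfz$ by rounding the best of the eigen-directions gives $\OPTr\ge\cos^2(\pi/K)\max_j\lambda_j\norm{\bfu_j}_1^2$. To relate this to $n\,\mathrm{tr}(\bfQ_r)$, average over a random $\bfc$ and use $\norm{\bfV\bfc}_1\ge\norm{\bfV\bfc}_2^2/\norm{\bfV\bfc}_\infty$. Incoherence gives $\norm{\bfV\bfc}_\infty\le\mu\sqrt{\mathrm{tr}(\bfQ_r)/n}$, and $\mathbb{E}\norm{\bfV\bfc}_2^2=\mathrm{tr}(\bfQ_r)/r$.
\item Combining these, $(\mathbb{E}Y)^2\gtrsim n^2(\mathrm{tr}\,\bfQ_r)^2/(r^4\mu^2)$ up to $\cos(\pi/K)$ factors.
\end{itemize}
The ratio then comes out polynomial in $r$, $\mu$ and $1/\cos(\pi/K)$. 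Careful tracking of the constants, including the possible use of $\mathbb{E}\norm{\bfV\bfc}_2^4\le 2(\mathrm{tr}\,\bfQ_r)^2/r^2$, should yield the stated form $c\,r^6\mu^4/\cos^8(\pi/K)$. The powers of $\cos(\pi/K)$ enter when the rounding bound is used to convert between $f$, $Y$ and $\OPTr$.

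I expect the bookkeeping of these moment ratios to be the delicate step. In particular, I must ensure that the lower bound on $\mathbb{E}Y$ is not in terms of $\OPTr$ alone, but is comparable to the crude upper bound on $\mathbb{E}[Y^2]$. General position (\Cref{asm:general_position}) is used only to make the rounding a.s.\ well defined.

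\textbf{Sample complexity.} The per-sample success probability is $p\ge\varepsilon^2/(1+V_0)$. By independence across draws, $(1-p)^S\le e^{-pS}\le\delta$ once $S\ge\log(1/\delta)(1+V_0)/\varepsilon^2$. This gives the stated $S$.
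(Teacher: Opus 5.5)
Your route differs from the paper's, and it works once one step is repaired.

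\textbf{Comparison with the paper.} The paper applies Paley--Zygmund to $f$ itself. It combines three bounds:
\begin{itemize}
\item $\mathrm{Var}(f)\le n^2\Lambda^2$, the incoherence-free form of its variance lemma, which also follows from $0\le f\le n\lambda_1$;
\item $\mathbb{E}f\ge\cos^2(\pi/K)\,\mathrm{OPT}_r/r$, from the Hahn--Banach/Beta argument;
\item $\mathrm{OPT}_r\ge\cos^2(\pi/K)\,n\Lambda/(r^2\mu^2)$, by rounding the top eigenvector.
\end{itemize}
Chaining the last two is what produces the $r^6$ and the $\cos^{-8}(\pi/K)$.

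You instead apply Paley--Zygmund to the rounding-free surrogate $Y=\|\bfV\bfc\|_1^2$. Rounding enters only once, through $f\ge\cos^2(\pi/K)\,Y$. The bound $\mathbb{E}Y\ge M^2/r\ge\mathrm{OPT}_r/r$ is used only to place the threshold. Your explicit functional $\bfa=\bfV^\dagger\mathbf{s}$ also removes the need for the Bohnenblust--Sobczyk theorem. The payoff is that the moments of $Y$ are elementary and no factor of $\cos(\pi/K)$ enters the variance ratio. Consequently the detour through $\max_j\lambda_j\|\bfu_j\|_1^2$ is unnecessary.

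\textbf{The step that fails as written: the exponent of $\mu$.} Write $\Lambda=\mathrm{tr}(\bfQ_r)$. Your averaging argument, with Jensen, gives
$\mathbb{E}Y\ge\frac{n}{\mu^2\Lambda}\,\mathbb{E}\|\bfV\bfc\|_2^4\ge\frac{n}{\mu^2\Lambda}\bigl(\mathbb{E}\|\bfV\bfc\|_2^2\bigr)^2=\frac{n\Lambda}{r^2\mu^2}$.
So $(\mathbb{E}Y)^2\ge n^2\Lambda^2/(r^4\mu^4)$, not $n^2\Lambda^2/(r^4\mu^2)$. Paired with your incoherence-based bound $\mathbb{E}[Y^2]\le\mu^4n^2\Lambda^2$, the ratio becomes $r^4\mu^8$. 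That is not $O(r^6\mu^4/\cos^8(\pi/K))$ when $\mu$ grows with $n$, and no amount of constant-tracking repairs it.

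\textbf{The fix.} Keep incoherence out of the upper bound.
\begin{itemize}
\item Either use $Y\le n\|\bfV\bfc\|_2^2\le n\lambda_1$, which gives $\mathbb{E}[Y^2]\le n^2\Lambda^2$.
\item Or use your own fourth-moment bound: $\mathbb{E}[Y^2]\le n^2\,\mathbb{E}\|\bfV\bfc\|_2^4\le 2n^2\Lambda^2/r^2$.
\end{itemize}
With the second option, $\mathbb{E}[Y^2]/(\mathbb{E}Y)^2\le 2r^2\mu^4$. This is independent of $K$ and, for $K\ge3$, stronger than the stated bound on $V_0$. The paper has the same sensitivity: had it used the $c\mu^4n^2\Lambda^2$ form of its variance lemma instead of the Frobenius form, it too would have ended with $\mu^8$.
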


\begin{proof}[Proof sketch]
The proof applies the Paley--Zygmund inequality: a non-negative random
variable exceeds a $(1-\varepsilon)$-fraction of its mean with probability at
least $\varepsilon^2/(1+V_0)$, where $V_0=\Var{f}/(\E{f})^2$. The main point
is that $V_0$ does not grow with $n$: both the mean and the standard
deviation of $f$ scale as $\Lambda n$, where $\Lambda=\sum_j\lambda_j$. 
This term therefore cancels in the ratio. For the numerator, the covariance of
any two unit-modulus edge terms is at most one, which already gives
$\mathrm{Var}[f]\le n^2\Lambda^2$.
For the denominator, two bounds are chained: the in-expectation guarantee
$\E{f}\ge(\cos^2(\pi/K)/r)\,\OPTr$, proved in \Cref{app:expectationBound}, and a lower bound on
$\OPTr$ itself, obtained by rounding the top eigenvector. To prove this lower bound, we show that incoherence forces
the mass of $\bfu_1$ to spread over many coordinates. As a result, the candidate $\bfz$ obtained by rounding $\bfu$
already certifies $\OPTr \gtrsim \bfz^\dagger \bfQ \bfz \gtrsim\cos^2(\pi/K)\Lambda n/(r^2\mu^2)$. 
The detailed proof of Theorem~\ref{thm:tail_n_dep} is available in \Cref{app:proof-pz-tail}.
\end{proof}

\section{Experiments}
\label{sec:experiments}
We evaluate our low-rank algorithms (\Cref{alg:rankr} and \Cref{alg:approx}) on \textsc{Max-3-Cut}, the $K = 3$ instantiation of \eqref{eq:max_quad} with a graph Laplacian objective matrix, using synthetic graphs and established benchmarks. Additional details and results appear in \Cref{app:experiments}.

\medskip
\noindent \textbf{Setup.}
We test on four synthetic graph families and the GSet benchmark~\cite{gset_dataset}.
\emph{Erd\H{o}s--R\'enyi} $G(n,p)$ graphs include each edge independently with probability~$p$.
\emph{Random $d$-regular} graphs are sampled uniformly from the set of $d$-regular graphs on $n$ vertices; we use $d = 5$ in our experiments.
\emph{Toroidal lattices} arrange $n = k \times k$ vertices on a torus with nearest-neighbor edges; these graphs have highly structured Laplacian spectra.
\emph{Stochastic block model (SBM)} graphs have 3 balanced communities with intra-community edge probability $p_{\mathrm{in}} = 0.5$ and inter-community probability $p_{\mathrm{out}} = 0.1$.
For each random graph family, we average over 20 instances with fixed seeds.
All experiments were run on a single Intel Xeon machine with 128 GB RAM.

\medskip
\noindent \textbf{Baselines.}
We compare against: (i)~\textsc{Frieze--Jerrum}~\cite{frieze1997improved}: a complex SDP relaxation solved via CVXPY/SCS, followed by hyperplane rounding;
(ii)~\textsc{Greedy}~\cite{gui2018bqp}: a steepest-ascent local search that iteratively assigns each node to its best partition;
(iii)~\textsc{Genetic}~\cite{panxing2016genetic}: a population-based metaheuristic that evolves candidate cuts via crossover and mutation;
and (iv)~\textsc{Random}: the best of $n+1$ uniformly random assignments, matching the candidate count of our rank-1 algorithm for a fair comparison.
On GSet instances, we also compare to \textsc{MOH}~\cite{ma2017moh}, a breakout search heuristic, using cut values from the original publication.
\textcolor{black}{We also compare to \textsc{Simulated Annealing} (SA) \cite{johnson1991optimization} with geometric cooling schedule ($T_0 = 1$, cooling factor $0.995$, $10^4$ iterations per temperature) in the rank-1 scaling experiments.}

\medskip
\noindent \textbf{Parallel implementation.}
For rank-2 and rank-3 experiments, we parallelize \Cref{alg:rankr} by mapping disjoint batches of index sets to parallel workers.
Candidates are generated as a stream, so the implementation never materializes the full set of $\binom{nB_K}{2r-1}$ combinations in memory.
Each worker receives a slice of the candidate space, computes cut values via batched matrix operations, and returns only the best candidate found.
No inter-worker communication occurs during execution.
The full 15-GPU rank-2 campaign appears in \Cref{app:gpu}.

\medskip
\noindent \textbf{Small-scale results.}
\Cref{fig:ratios} shows empirical approximation ratios (score / best score) on 5-regular graphs at $n \in \{20, 50, 100\}$ and Erd\H{o}s--R\'enyi graphs at $n = 100$.
Our rank-3 algorithm achieves the highest average ratio on graphs with $n \leq 50$; rank-2 matches greedy and consistently outperforms SDP rounding across all densities.
On 5-regular graphs with 20 nodes, rank-3 achieves an average ratio of $0.997$, compared to $0.989$ for greedy and $0.976$ for SDP.
As $n$ increases to 100, rank-2 remains competitive with greedy (average ratio $0.985$ vs.\ $0.990$) while rank-3 becomes intractable due to its $\OO(n^7)$ cost.
On Erd\H{o}s--R\'enyi graphs, the advantage of higher rank is more pronounced at higher densities: at $p = 0.9$, rank-2 exceeds SDP by 3\%, reflecting the more concentrated spectrum of dense graphs.

\begin{figure}[t]
\centering
\includegraphics[width=\textwidth]{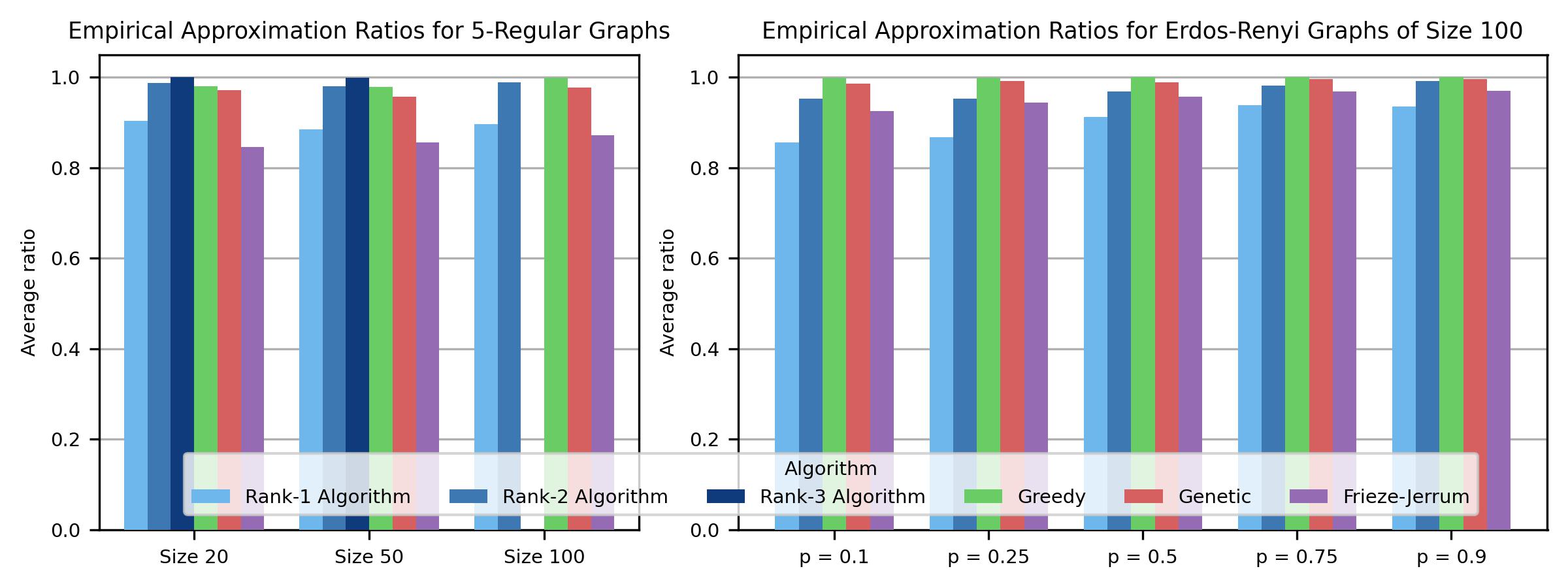}
\caption{Empirical approximation ratios on 5-regular graphs (left) and Erd\H{o}s--R\'enyi graphs with $n = 100$ (right).
Higher is better.
Our rank-2/3 algorithms match or exceed greedy and outperform Frieze--Jerrum SDP at all densities.}
\label{fig:ratios}
\end{figure}

\medskip
\noindent \textbf{GSet benchmark.}
\Cref{tab:gset} reports selected results on GSet instances.
On structured toroidal graphs (G48, G49), rank-1 finds the provably optimal cut value of 6000 in 5 seconds---57--74$\times$ faster than greedy, which requires 5--6~minutes.
On the largest instances ($n \geq 10,000$), greedy times out at 30~minutes while rank-1 completes in 76--352~seconds.
On less structured instances (Erd\H{o}s--R\'enyi, skew), rank-1 trades solution quality for speed: about 90\% of greedy's score at $15$--$23\times$ lower runtime.
These results align with our theory: on toroidal graphs, 
\textcolor{black}{the Laplacian spectrum is highly concentrated (the spectral energy ratio satisfies $\rho_1 < 20\%$ in our experiments), making rank-1 an excellent approximation; see \Cref{app:gpu} for the measured values.}
On random or skew graphs, the spectrum is flat, and a rank-1 approximation captures only a small fraction of the structure.
Extended results on all GSet instances appear in~\Cref{app:gset_full}.


\medskip
\noindent \textbf{Very large graphs.}
To test scalability limits, we generated random 3-regular graphs with $n = 50,000$ and $n = 100,000$ nodes.
On the $100,000$-node graph, rank-1 finishes in 17.1 hours and finds a cut value of $137,796$, while Random ($n+1$ random assignments) requires 2.3 hours and finds $100,781$.
Most of the runtime (14.5 hours) is spent computing the principal eigenvector via ARPACK; the sweep itself takes under 3 hours.
On the $50,000$-node graph, rank-1 finishes in 1.8 hours with cut $68,951$ vs.\ Random's $50,517$ in 22 minutes.
The incremental scoring and two-eigenvector rounding that reduce the million-node sweep to ${\sim}70$ minutes are detailed in \Cref{app:rank1scale}.

\begin{table}[t]
\centering
\caption{Selected GSet results for rank-1 vs.\ greedy and MOH~\cite{ma2017moh}.
Runtimes in seconds.
Bold: optimal cut value.
MOH runtimes omitted (different hardware).}
\label{tab:gset}
\setlength{\tabcolsep}{5pt}
\begin{tabular}{@{}lrcrrr@{}}
\toprule
Instance & $n$ & Type & Greedy & MOH & Rank-1 \\
\midrule
G1  & 800   & Erd\H{o}s--R. & 14859 (16s) & 15165 & 13331 (1.1s) \\
G14 & 800   & Skew    & 3914 (12s)  & 4012  & 3217 (0.6s) \\
G22 & 2000  & Erd\H{o}s--R. & 16669 (111s) & 17167 & 14145 (5.5s) \\
G43 & 1000  & Erd\H{o}s--R. & 8273 (48s) & 8578 & 6839 (2.1s) \\
G48 & 3000  & Torus   & 5998 (300s) & 6000  & \textbf{6000} (5.2s) \\
G49 & 3000  & Torus   & 5996 (394s) & 6000  & \textbf{6000} (5.3s) \\
G55 & 5000  & Erd\H{o}s--R. & timeout & 12429 & 9956 (65s) \\
G67 & 10000 & Torus   & timeout     & 8086  & 3117 (76s) \\
G81 & 20000 & Torus   & timeout     & 16321 & 4122 (352s) \\
\bottomrule
\end{tabular}
\end{table}


\medskip
\noindent \textbf{When does low-rank work?}
The \emph{spectral energy ratio} $\rho_r = 1 - \sum_{j=1}^r \lambda_j^{\star} / \sum_{j=1}^{r^{\star}} \lambda_j^{\star}$ measures the fraction of spectral energy not captured by the top $r$ eigenvalues.
When $\rho_r$ is small (e.g., $<5\%$), the rank-$r$ truncation is a good approximation and our bounds predict strong performance.
On torus graphs ($\rho_2 < 2\%$), rank-2 is exact and rank-1 is near-optimal.
On 5-regular graphs ($\rho_2 \approx 1.4\%$), rank-2 exceeds SDP at $n \geq 1000$.
On SBM and Erd\H{o}s--R\'enyi graphs ($\rho_2 > 98\%$), the low-rank approximation captures almost no structure, and greedy dominates.
This pattern is consistent with our perturbation bounds (\Cref{thm:additive,thm:mult}), which predict strong performance precisely when the spectral energy is concentrated.
In practice, computing $\rho_r$ (which requires only the top $r+1$ eigenvalues) provides a cheap diagnostic for whether the low-rank algorithm is likely to be effective on a given instance.

\subsection{Randomized Rounding}
\Cref{alg:rank1,alg:rankr}
deterministically enumerate all $\OO(n^{2r-1})$ cells of the arrangement and search over all the corresponding candidates to find the exact optimum.
In \Cref{sec:sampling}, we theoretically analyze a randomized alternative, wherein we find the optimum among a small subset of $S$ candidates by sampling uniformly from the unit sphere. 
In this section, we empirically validate this algorithm on \textsc{Max-3-Cut} over the random $5$-regular, Erd\H{o}s--R\'enyi, and toroidal families, sweeping $n$ from $10^3$ to $10^6$
and $S$ up to $10^6$. Values are normalized by the best cut found at the largest budget ($S = 10^6$; the exact optimum $\OPTr$ is out of reach at these sizes, where enumeration costs $\OO(n^{2r-1})$), reported as the median across the $3$ to $5$ seeds.

\medskip
\noindent \emph{The knee does not move with $n$.}
\Cref{fig:sampling} plots the best-of-$S$ value against $S$ across graphs of varying sizes ($n$). We observe that across three orders of
magnitude in $n$, the plateau is reached at the same $S$. Quantifying the knee, the budget to
reach $99.9\%$ of the plateau does not grow with $n$: it is flat at $S = 10$ on toroidal
graphs (where rank-$2$ is exact) and \emph{decreasing} in $n$ on
Erd\H{o}s--R\'enyi (median falling from ${\sim}10^5$ at $n = 10^3$ to $10^4$ at $n = 10^5$). 
This is the behavior predicted by \Cref{thm:rr_tail}, whose
per-sample success probability carries no $n$-dependence, and reinforces the idea that a fixed budget often suffices.

\medskip
\noindent \emph{One draw is already good; best-of-$S$ reaches $\OPTr$.}
On instances small enough to enumerate the candidate set exhaustively ($n \le 100$) we compute
the true rank-$r$ optimum $\OPTr$ and report best-of-$S$ as a fraction of it
(\Cref{fig:sampling_quality}). \Cref{thm:expected_rr} guarantees only
$\E{f(\ctilde)} \ge (\cos^2(\pi/K)/r) \OPTr = 1/8 \OPTr$ at $K = 3, r = 2$. Yet, a
\emph{single} draw attains, in expectation, $0.87$--$0.98$ of $\OPTr$ (empirical mean over $2{\times}10^5$ i.i.d.\ draws),
about $7\times$ the floor---the bound is loose because it worst-cases the rounding over all
coordinates at once, while the sphere's mass concentrates on near-optimal cells.
Best-of-$S$
then reaches $0.98$, $0.99$, and $0.996$ of $\OPTr$ at $S = 10, 100, 1000$. We remark that these ratios
do not worsen with $n$. Rather, they improve slightly with $n$: from $0.96$ at $n = 20$ (vs.\ the true $\OPTr$) to $0.99$ by $n = 10^4$ (vs.\ the best-of-$S$ reference, since $\OPTr$ is not enumerable there) on the sparse families.
Sampling and exhaustive enumeration agree on $\OPTr$ in $24$ of $27$
instances; in three small Erd\H{o}s--R\'enyi cases sampling finds a $2$--$5\%$ \emph{better}
optimum on a lower-rank great-sphere that fixed-rank enumeration misses---a case recovered only
by also enumerating the lower-rank ($r' < r$) cells.

\medskip
\noindent \emph{The analyzed sampler matches the one used in practice.}
Our theorems analyze the uniform-sphere sampler $\ctilde \sim \sigma_{2r-1}$; the natural
alternative---the randomized form of \Cref{alg:rankr}---instead draws the null vector of a
random $(2r-1)$-subset of rows of $\Vtilde$. On the sweep the two are indistinguishable in
cut quality (uniform-sphere is $\ge$ null-vector on every family$ \times $size cell up to
$n = 10^6$, by at most $0.15\%$; \Cref{tab:samplers}), but the null-vector draw rejects ${\approx}40\%$ of
subsets as degenerate---the $2r-1$ sampled rows of $\Vtilde$ are linearly dependent, admitting no unique null direction to round---whereas uniform-sphere sampling is rejection-free.
Analyzing the
uniform-sphere sampler therefore matches practice at no empirical cost while removing the
rejection step.

\begin{figure}[t]
\centering
\includegraphics[width=\textwidth]{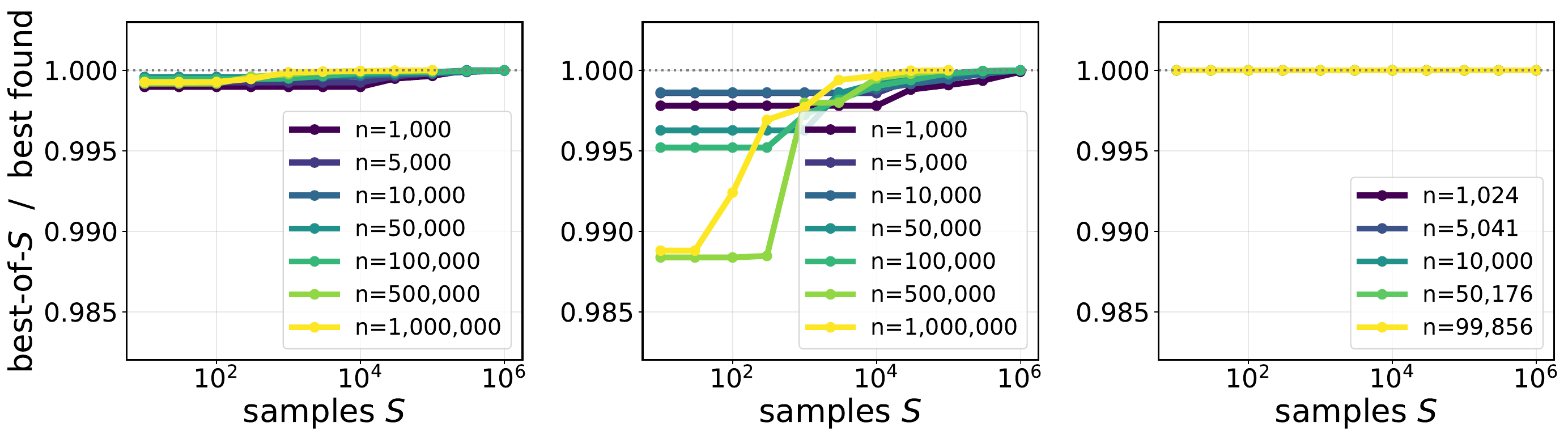}
\caption{Randomized uniform-sphere rounding on \textsc{Max-3-Cut} ($K = 3$, $r = 2$); panels,
left to right, are random $5$-regular, Erd\H{o}s--R\'enyi, and toroidal graphs.
Best-of-$S$ cut value (normalized by the best value found by either sampler at $10^6$ draws)
versus the number of samples $S$, one curve per size $n$ (up to $n = 10^6$). Curves for sizes
spanning three orders of magnitude overlap and plateau at the same $S$: the sample budget for a near-optimal cut
does not grow with $n$, as predicted by \Cref{thm:rr_tail}.}
\label{fig:sampling}
\end{figure}

\begin{figure}[t]
\centering
\includegraphics[width=0.62\textwidth]{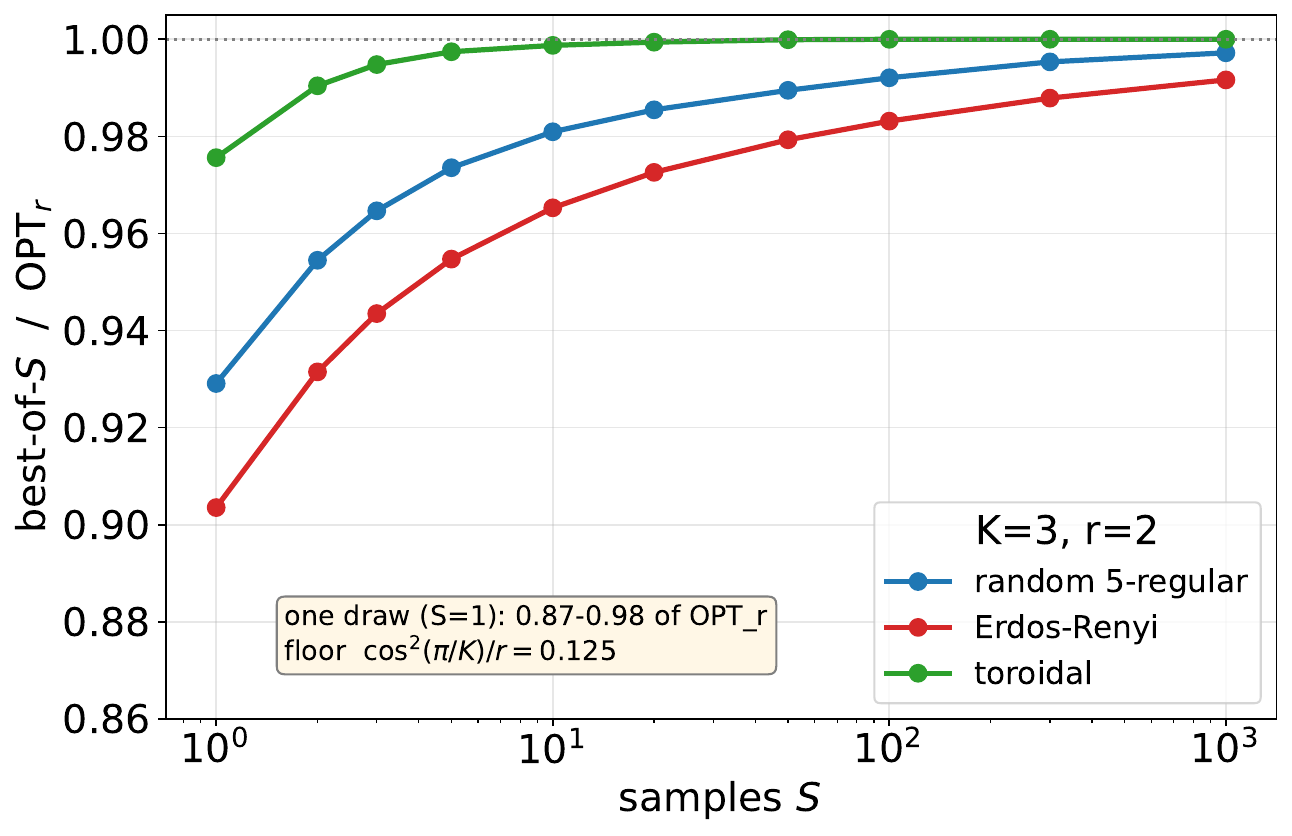}
\caption{Best-of-$S$ as a fraction of the \emph{true} rank-$r$ optimum $\OPTr$ (computed by
exhaustive cell enumeration, $n \le 100$; $K = 3$, $r = 2$), one curve per family, averaged
over sizes and seeds. A single draw ($S = 1$) attains, in expectation, $0.87$--$0.98$ of $\OPTr$---about
$7\times$ the guaranteed floor $\cos^2(\pi/K)/r = 0.125$ of \Cref{thm:expected_rr}---and
best-of-$S$ reaches $\OPTr$ within a fraction of a percent by $S \approx 100$.}
\label{fig:sampling_quality}
\end{figure}

\begin{table}[t]
\centering
\caption{Uniform-sphere (analyzed) versus null-vector (deployed) sampler at a matched
sample budget, largest size run per family, averaged over available seeds. Cut quality is
statistically identical, while uniform-sphere sampling has no infeasible draws.}
\label{tab:samplers}
\setlength{\tabcolsep}{6pt}
\begin{tabular}{@{}lrcc@{}}
\toprule
Family & $n$ & Uniform / null-vector & Null-vector feasible \\
\midrule
random $5$-regular & $1,000,000$ & $1.0000$ & $60.3\%$ \\
Erd\H{o}s--R\'enyi  & $1,000,000$ & $1.0008$ & $59.5\%$ \\
toroidal            & $99,856$     & $1.0000$ & $60.7\%$ \\
\bottomrule
\end{tabular}
\end{table}


\section{Acknowledgements}

\medskip 
\noindent \textbf{Disclosure of Generative AI Use.}
In accordance with SIAM's Policy on the Use of AI, we disclose the following.
The authors used a large language model assistant---Anthropic's Claude, via the
Claude Code CLI---in these assistive capacities during the preparation of this work:
\begin{itemize}[leftmargin=*, nosep]
  \item \textbf{Language editing:} improving the grammar, style, and clarity of
        author-written text.
  \item \textbf{Figure code:} generating and refining the Python/\textsf{matplotlib}
        scripts that render the experimental figures from the authors' own data.
  \item \textbf{Software testing:} writing unit tests and validation harnesses for
        the authors' algorithm implementations.
  \item \textbf{Proof checking:} independently verifying and stress-testing the
        mathematical proofs to surface potential errors and gaps for the authors
        to resolve.
\end{itemize}
No theorems, proofs, algorithms, experimental data, or numerical results were
generated by the AI tool; all scientific content was conceived, written, and
validated by the authors, and all AI-assisted output was reviewed and verified by
the authors. All listed authors are human and take full responsibility for the
integrity, accuracy, originality, and copyright of this submission.

\clearpage

\appendix

\crefalias{section}{appendix} 
\crefalias{subsection}{appendix} 
\crefalias{subsubsection}{appendix} 

\section{Auxiliary Lemmas}
\label{app:auxiliary_lemmas}




\begin{lemma}[Antipodal boundary equivalence]
\label{lem:antipodal}
The decision boundaries at angles $\vartheta_m$ and $\vartheta_m + \pi$ define the same hyperplane in $\HH_r$ if and only if $K$ is even.
\end{lemma}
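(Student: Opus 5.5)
The plan is to work directly with the real-linear description of a decision boundary and then separate the two directions of the equivalence.

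\emph{Step 1: the hyperplane of a boundary ray.} For coordinate $i$ and boundary angle $\vartheta$, the assignment of $z_i$ switches when $\alpha_i(\bfphi)=\bfV_{i,:}\bfc(\bfphi)$ lies on the ray $\{t e^{\imag\vartheta}: t\ge 0\}$. Rotating by $e^{-\imag\vartheta}$, this means $e^{-\imag\vartheta}\alpha_i$ is real and nonnegative. The hyperplane in \eqref{eq:tildeV} records only the linear part of this condition, namely
\[
\Impart\{e^{-\imag\vartheta}\alpha_i(\bfphi)\}=0 .
\]
Writing this in real coordinates, it becomes $\mathbf{w}_\vartheta^{\top}\tilde{\bfc}(\bfphi)=0$ for a vector $\mathbf{w}_\vartheta\in\bbR^{2r}$ built from $\Repart\{e^{-\imag\vartheta}\bfV_{i,:}\}$ and $\Impart\{e^{-\imag\vartheta}\bfV_{i,:}\}$. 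This matches the rows of $\tilde{\bfV}$, with the rotation factor $\rho_m$ playing the role of $e^{-\imag\vartheta}$ up to the sign and coordinate convention used there. I will fix that convention once so that ``the hyperplane of angle $\vartheta$'' is $\{\bfphi:\mathbf{w}_\vartheta^{\top}\tilde{\bfc}(\bfphi)=0\}$.

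\emph{Step 2: the shift by $\pi$ gives the same hyperplane.} Replacing $\vartheta$ by $\vartheta+\pi$ multiplies $e^{-\imag\vartheta}$ by $-1$, so $\mathbf{w}_{\vartheta+\pi}=-\mathbf{w}_\vartheta$. The zero sets therefore coincide for every $K$. The two rays differ only in the sign of the real part $\Repart\{e^{-\imag\vartheta}\alpha_i\}$, which selects which half of the common hyperplane is the actual decision surface.

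\emph{Step 3: when both angles are boundaries (the ``if'' direction).} The question is whether $\vartheta_m+\pi$ is itself one of the boundaries $\vartheta_{m'}=\pi(2m'+1)/K$ modulo $2\pi$. That happens exactly when $\pi(2m+1)/K+\pi\equiv\pi(2m'+1)/K \pmod{2\pi}$, i.e. $2(m'-m)\equiv K \pmod{2K}$. If $K$ is even, take $m'=m+K/2$. Then the two boundaries $\vartheta_m$ and $\vartheta_{m+K/2}$ produce rows $\pm\mathbf{w}$, which define the same hyperplane. Only $B_K=K/2$ of them are distinct, which also justifies \Cref{def:BK}.

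\emph{Step 4: the ``only if'' direction, which I expect to be the main obstacle.} If $K$ is odd, $2(m'-m)\equiv K \pmod{2K}$ has no solution by parity, so $\vartheta_m+\pi$ is not a boundary at all. One still has to rule out that two \emph{distinct} boundaries $\vartheta_m\neq\vartheta_{m'}$ (mod $\pi$) of the same coordinate define the same hyperplane. I plan to do this as follows:
\begin{enumerate}
\item Show that $\mathbf{w}_\vartheta$ and $\mathbf{w}_{\vartheta'}$ are not parallel unless $\vartheta-\vartheta'\in\pi\mathbb{Z}$.
\item For this, write $\mathbf{w}_\vartheta=\cos\vartheta\,\mathbf{a}-\sin\vartheta\,\mathbf{b}$, where $\mathbf{a},\mathbf{b}$ are the real-embedding vectors coming from $\Impart\{\bfV_{i,:}\cdot\}$ and $\Repart\{\bfV_{i,:}\cdot\}$.
\item These two vectors are linearly independent because $\bfV_{i,:}\neq 0$ by \Cref{asm:general_position}: multiplying $\bfc$ by $\imag$ maps one onto the other, a rotation that cannot fix a line.
\item So $\vartheta\mapsto\mathbf{w}_\vartheta$ is injective modulo sign and modulo $\pi$.
\item Finally, since $\tilde{\bfc}(\HH_r)$ spans $\bbR^{2r}$, distinct normals give distinct zero sets in $\HH_r$, which yields the ``only if'' direction.
\end{enumerate}
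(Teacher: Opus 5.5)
Your Steps 2--3, together with the parity observation at the start of Step 4, are exactly the paper's proof. The paper notes that $\Impart\{e^{-\imag(\vartheta_m+\pi)}\alpha_i\}=-\Impart\{e^{-\imag\vartheta_m}\alpha_i\}$, so the two zero sets coincide, and that this merges two actual decision boundaries only when $\vartheta_m+\pi\equiv\vartheta_{m'}$. That forces $K$ even, which is your congruence $2(m'-m)\equiv K \pmod{2K}$.

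That argument already settles the lemma as stated. The rest of Step 4 proves a different claim: that for odd $K$ no two of the $K$ boundaries of one coordinate coincide, i.e.\ $B_K=K$ in \Cref{def:BK}. The paper asserts this without proof, and this lemma does not need it.

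If you keep that extra argument, note that its last sub-step does not follow as written. That $\tilde{\bfc}(\HH_r)$ spans $\bbR^{2r}$ does not imply that non-parallel normals $\mathbf{w},\mathbf{w}'$ have different zero sets on $\tilde{\bfc}(\HH_r)$. What you need is that $\tilde{\bfc}(\HH_r)\cap\mathbf{w}^{\perp}$ spans $\mathbf{w}^{\perp}$. Because $\HH_r$ confines $\arg c_r$ to a sector of width $2\pi/K$, this can fail.

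For example, take $\bfV_{i,:}=V_{i,r}\,\mathbf{e}_r^{\top}$, which \Cref{asm:general_position} does not exclude. The boundary condition becomes $P_{2(r-1)}\sin(\phi_{2r-1}+\arg V_{i,r}-\vartheta)=0$. For odd $K$, exactly two of the $K$ values $\vartheta_m$ make the sine vanish somewhere on $(-\pi/K,\pi/K]$. So for $K\ge 5$ the remaining $K-2\ge 3$ boundaries all have the same zero set $\{P_{2(r-1)}=0\}$ in $\HH_r$, even though their normals are pairwise non-parallel.

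There are two ways to repair this. You can read ``hyperplane'' as the linear hyperplane in $\bbR^{2r}$, as the paper does when it speaks of an arrangement in $\bbR^{2r}$; then your sub-steps 1--4 already suffice. Alternatively, assume $\bfV_{i,:}\notin\bbC\,\mathbf{e}_r^{\top}$. Then $\tilde{\bfc}(\HH_r)\cap\mathbf{w}^{\perp}$ contains a relatively open piece of the unit sphere in $\mathbf{w}^{\perp}$, and the conclusion holds.
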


\begin{proof}
The boundary condition at angle $\vartheta_m + \pi$ is
$\Impart\{e^{-\imag(\vartheta_m + \pi)}\alpha_i\} = -\Impart\{e^{-\imag\vartheta_m}\alpha_i\} = 0$,
which is identical to the condition at $\vartheta_m$.
This yields a reduction only if $\vartheta_m + \pi = \vartheta_{m'}$ for some $m' \in \{0, \ldots, K{-}1\}$, which requires $K/2 \in \NN$, i.e., $K$ even.
\end{proof}



\begin{lemma}[Phase equivalence]
\label{lem:phase_equiv}
For any $\bfz \in \AK^n$ and $\theta \in \AK$, $\norm{\bfV^{\dagger}\bfz}_2 = \norm{\bfV^{\dagger}(\theta\bfz)}_2$.
\end{lemma}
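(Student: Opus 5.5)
The claim follows directly from linearity of $\bfV^{\dagger}$ together with the fact that every element of $\AK$ has unit modulus, so the plan is a one-line computation. Fix $\bfz \in \AK^n$ and $\theta \in \AK$. Since $\theta$ is a scalar, linearity gives
\begin{equation*}
\bfV^{\dagger}(\theta\bfz) = \theta\,\bfV^{\dagger}\bfz .
\end{equation*}

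Next we use absolute homogeneity of the Euclidean norm on $\bbC^r$, namely $\norm{\theta\mathbf{w}}_2 = \abs{\theta}\,\norm{\mathbf{w}}_2$ for any complex scalar $\theta$ and any $\mathbf{w} \in \bbC^r$. By the definition \eqref{eq:AK_def}, $\theta = e^{2\pi\imag k/K}$ for some $k$, so $\abs{\theta} = 1$, and therefore $\norm{\bfV^{\dagger}(\theta\bfz)}_2 = \norm{\bfV^{\dagger}\bfz}_2$.

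It is also worth recording that $\theta\bfz \in \AK^n$, because $\AK$ is a multiplicative group (the product of two $K$-th roots of unity is again one). This shows that the global rotation maps feasible points to feasible points. Consequently the maximizer of \eqref{eqn:quad_max-norm} is defined only up to such a rotation, and this is the fact the later lemmas exploit.

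No step here poses a genuine obstacle. The only point requiring care is that the scalar $\theta$ must be pulled out of $\bfV^{\dagger}\bfz$ itself, not conjugated; even if it were conjugated, $\abs{\bar\theta} = 1$, so the conclusion is unaffected.
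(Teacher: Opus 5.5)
Your proposal is correct and follows the paper's proof: pull $\theta$ out by linearity, then $\norm{\bfV^{\dagger}(\theta\bfz)}_2 = |\theta|\,\norm{\bfV^{\dagger}\bfz}_2 = \norm{\bfV^{\dagger}\bfz}_2$ since $|\theta| = 1$. Your added observation that $\theta\bfz$ stays in the feasible set, because the $K$-th roots of unity are closed under multiplication, is correct and is the fact the paper's angle-reduction lemma relies on implicitly.
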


\begin{proof}
$\norm{\bfV^{\dagger}(\theta\bfz)}_2 = |\theta|\cdot\norm{\bfV^{\dagger}\bfz}_2 = \norm{\bfV^{\dagger}\bfz}_2$ since $|\theta| = 1$.
\end{proof}

\begin{lemma}[Auxiliary angle reduction]
\label{lem:angle_reduction}
Restricting $\phi_{2r-1} \in (-\pi/K, \pi/K]$ does not exclude any essentially distinct candidate.
Candidates from $\phi_{2r-1} \notin (-\pi/K, \pi/K]$ are phase rotations of candidates from within this range.
\end{lemma}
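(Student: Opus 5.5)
The plan is to reduce the claim to the $\tfrac{2\pi}{K}$-equivariance of the decision rule~\eqref{eq:decision_rule} under a global phase rotation of $\bfc$, together with \Cref{lem:phase_equiv}.

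\textbf{Equivariance of the decision rule.} For $\omega = e^{2\pi\imag/K}$ and any unit $\bfc \in \bbC^r$, replacing $\bfc$ by $\omega\bfc$ multiplies every projection $\alpha_i = \bfV_{i,:}\bfc$ by $\omega$. This shifts each $\arg(\alpha_i)$ by exactly $2\pi/K$. Since the sector structure of $\AK$ is invariant under rotation by $2\pi/K$, the nearest-root index becomes $k_i+1 \bmod K$ for every $i$ simultaneously. Hence $\bfz(\omega\bfc) = \omega\,\bfz(\bfc)$.

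\textbf{Effect on the hyperpolar coordinates.} From~\eqref{eq:hyperpolar_def}, a global phase $e^{\imag\psi}$ applied to $\bfc$ acts as follows:
\begin{itemize}
\item on the last coordinate, $c_r = P_{2(r-1)}e^{\imag\phi_{2r-1}}$, it acts by $\phi_{2r-1} \mapsto \phi_{2r-1}+\psi$, leaving $P_{2(r-1)}$ fixed;
\item on the earlier coordinates $c_1,\dots,c_{r-1}$, it must be absorbed by re-choosing $\phi_1,\dots,\phi_{2r-2}$.
\end{itemize}
The cleanest way to say this is to avoid coordinates altogether. The paper asserts that every unit $\bfc$ has a unique representative $\bfphi\in\HH_r$. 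Given an arbitrary unit $\bfc'$, choose the integer $j$ so that $\arg(c'_r) - 2\pi j/K \in (-\pi/K,\pi/K]$, and set $\bfc = \omega^{-j}\bfc'$. If $c'_r = 0$, the last phase is irrelevant and any $j$ works. This $\bfc$ has its last phase in the restricted range, so it is parametrized by some $\bfphi\in\HH_r$. By the equivariance step, $\bfz(\bfc') = \omega^{j}\bfz(\bfc)$.

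\textbf{Conclusion.} By \Cref{lem:phase_equiv} with $\theta = \omega^j \in \AK$, the candidate $\bfz(\bfc')$ is a phase rotation of a candidate obtained from $\phi_{2r-1}\in(-\pi/K,\pi/K]$, and it has the same objective value. So the restriction discards nothing essential, and the maximum in~\eqref{eq:rankr_double} is unchanged.

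\textbf{Main obstacle.} The delicate point is the middle step: one must check that the remaining angles $\phi_1,\dots,\phi_{2r-2}$ can represent an arbitrary complex unit vector after the last phase is normalized. In particular, this means each $c_j$ for $j<r$ with an arbitrary phase, since $c_j = P(\cos\phi\sin\phi' + \imag\sin\phi)$ has real and imaginary parts that range freely. I would handle this by a direct dimension-and-range check, using the fact that $(\cos\phi_{2j-1}\sin\phi_{2j}, \sin\phi_{2j-1})$ sweeps the unit disk's directions, or simply by invoking the paper's stated uniqueness of the parametrization. Boundary ties at $\phi_{2r-1} = \pm\pi/K$ are absorbed by the half-open interval.
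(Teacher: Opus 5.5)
Your proof is correct and follows the paper's argument: the decision rule is equivariant under $\bfc\mapsto\omega\bfc$, so a direction outside the restricted range yields $\omega^{j}$ times a candidate from inside it, and \Cref{lem:phase_equiv} shows the objective is unchanged. Your explicit normalization $\bfc=\omega^{-j}\bfc'$, with $\phi_1,\dots,\phi_{2r-2}$ re-chosen, is more careful than the paper's one-line proof. That proof states that $\phi_{2r-1}=\phi^{(0)}_{2r-1}+2\pi k/K$ gives $\bfc(\bfphi)=e^{2\pi\imag k/K}\bfc(\bfphi^{(0)})$ without noting that the other angles must change, even though under \eqref{eq:hyperpolar_def} shifting $\phi_{2r-1}$ alone rotates only $c_r$. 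The one slip is your aside that any $j$ works when $c'_r=0$: with the half-open ranges $(-\pi/2,\pi/2]$, for $r=2$ the vector $\bfc'=(-1,0)$ is not representable at $j=0$ (it would need $\phi_2=-\pi/2$), although some $j$ always works because $2\pi/K\le\pi$.
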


\begin{proof}
Any $\bfphi$ with $\phi_{2r-1} = \phi_{2r-1}^{(0)} + 2\pi k/K$ satisfies $\bfc(\bfphi) = e^{2\pi\imag k/K}\bfc(\bfphi^{(0)})$, so the decision function gives $\bfd(\bfV;\bfphi) = e^{-2\pi\imag k/K}\bfd(\bfV;\bfphi^{(0)})$.
By \cref{lem:phase_equiv}, both achieve the same objective value.
\end{proof}



\begin{corollary}[Two implies all]
\label{cor:two_implies_all}
If two distinct hyperplanes from $\GG^{(i)}$ pass through $\bfphi^{\star}$, then all $B_K$ hyperplanes from $\GG^{(i)}$ pass through $\bfphi^{\star}$.
\end{corollary}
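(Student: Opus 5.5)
The idea is to make the hyperplanes of group $\GG^{(i)}$ explicit as linear forms in $\tilde{\bfc}$. Then two independent forms from the same group will be seen to span the whole family.

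Write $\alpha_i(\bfphi) = \bfV_{i,:}\,\bfc(\bfphi)$. From the construction~\eqref{eq:tildeV}, the row of $\tilde{\bfV}$ indexed by coordinate $i$ and boundary $m$ is $(\Repart\{\rho_m\bfV_{i,:}\},\,\Impart\{\rho_m\bfV_{i,:}\})$. Its action on $\tilde{\bfc}$ is a fixed real-linear functional of $\alpha_i$. Concretely, I will verify that
\[
\tilde{\bfV}_{j,:}\,\tilde{\bfc} = \Repart(\rho_m \bfV_{i,:})\Repart(\bfc) + \Impart(\rho_m\bfV_{i,:})\Impart(\bfc).
\]
Up to the sign convention of the embedding, this is a real combination $a_m\Repart(\alpha_i)+b_m\Impart(\alpha_i)$ with $(a_m,b_m)$ determined by the angle $\pi(2m+1)/K$. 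The exact form of $(a_m,b_m)$ will not matter. What matters is the consequence: every hyperplane in $\GG^{(i)}$ is the zero set of a functional $\ell_m(\alpha_i)$, where $\ell_m:\bbC\cong\bbR^2\to\bbR$ is a nonzero real-linear map. Geometrically, $\ell_m(\alpha_i)=0$ says that $\alpha_i$ lies on the line through the origin at angle $\vartheta_m$, matching the decision-boundary interpretation and \Cref{lem:antipodal}.

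Now suppose two distinct hyperplanes $m\neq m'$ from $\GG^{(i)}$ pass through $\bfphi^\star$, so $\ell_m(\alpha_i(\bfphi^\star))=\ell_{m'}(\alpha_i(\bfphi^\star))=0$. The two angles $\vartheta_m,\vartheta_{m'}$ with $m,m'<B_K$ do not differ by a multiple of $\pi$. For even $K$ this holds because $B_K=K/2$ ranges over distinct boundary angles in $[0,\pi)$. For odd $K$ it is the observation after \Cref{def:BK}. Hence $\ell_m$ and $\ell_{m'}$ are linearly independent functionals on $\bbR^2$, so their common kernel is $\{0\}$ and $\alpha_i(\bfphi^\star)=0$. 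Then $\ell_{m''}(\alpha_i(\bfphi^\star))=0$ for every $m''$, so all $B_K$ hyperplanes of $\GG^{(i)}$ contain $\bfphi^\star$. This is the common-axis property.

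The only delicate step is the bookkeeping showing that each row of $\tilde{\bfV}$ really is a functional of $\alpha_i$ alone. The stated embedding pairs $\Repart$ with $\Repart$ and $\Impart$ with $\Impart$, so the row products naturally give $\Repart(\rho_m\bfV_{i,:}\bar{\bfc})$ rather than an expression in $\bfV_{i,:}\bfc$. I will handle this by noting that the relevant quantity is either $\Repart(\overline{\rho_m}\,\overline{\alpha_i})$ or its analogue for $\bfV_{i,:}\bar\bfc$, depending on the convention. In either case it is a real-linear functional of a single complex scalar, $\alpha_i$ or its analogue, that depends only on row $i$. The kernel argument above then goes through verbatim, with that scalar playing the role of $\alpha_i$.
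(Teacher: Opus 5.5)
Your proposal is correct and follows the paper's argument: two distinct boundary conditions from $\GG^{(i)}$ place the scalar $\alpha_i(\bfphi^\star)$ on two distinct lines through the origin, forcing it to vanish, so every condition in the group holds. Two details the paper leaves implicit are also handled: you check that the angles indexed by $m < B_K$ are pairwise distinct modulo $\pi$, and you note that the embedding in \eqref{eq:tildeV} actually computes $\Repart(\rho_m \bfV_{i,:}\bar{\bfc})$ rather than a functional of $\alpha_i$, which leaves the kernel argument unchanged.
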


\begin{proof}
Two conditions $\Impart\{e^{-\imag\vartheta_{m_1}}\alpha_i\} = 0$ and $\Impart\{e^{-\imag\vartheta_{m_2}}\alpha_i\} = 0$ with $m_1 \neq m_2$ require $\alpha_i = \bfV_{i,:}\bfc(\bfphi^{\star}) = 0$ (since no nonzero complex number can lie on two distinct lines through the origin).
Then $\Impart\{e^{-\imag\vartheta_m}\alpha_i\} = 0$ for all $m$.
\end{proof}

\begin{lemma}[Rank deficiency from triples]
\label{lem:rank_deficiency}
If an index set $I$ contains three or more indices from the same coordinate group, then $\mathrm{rank}(\tilde{\bfV}_{I,:}) < 2r{-}1$.
\end{lemma}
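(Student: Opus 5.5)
The plan is to show that all rows of $\tilde{\bfV}$ belonging to a single coordinate group $\GG^{(i)}$ lie in a fixed real subspace of $\bbR^{2r}$ of dimension at most two. Once this is established, any index set $I$ that uses three or more rows from that group contributes at most two independent directions from them, and the rank of $\tilde{\bfV}_{I,:}$ drops below $|I|$.

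First I will compute the rows explicitly. Write $\bfV_{i,:} = \mathbf{a} + \imag\,\mathbf{b}$ with $\mathbf{a},\mathbf{b}\in\bbR^{1\times r}$, and $\rho_m = \cos\vartheta_m - \imag\sin\vartheta_m$. Then
\[
\Repart\{\rho_m\bfV_{i,:}\} = \cos\vartheta_m\,\mathbf{a} + \sin\vartheta_m\,\mathbf{b},
\qquad
\Impart\{\rho_m\bfV_{i,:}\} = \cos\vartheta_m\,\mathbf{b} - \sin\vartheta_m\,\mathbf{a}.
\]
Hence the row of $\tilde{\bfV}$ indexed by $(m,i)$ equals
\[
\cos\vartheta_m\,[\mathbf{a},\ \mathbf{b}] + \sin\vartheta_m\,[\mathbf{b},\ -\mathbf{a}],
\]
so every row of group $\GG^{(i)}$ lies in $\mathrm{span}_{\bbR}\{[\mathbf{a},\mathbf{b}],[\mathbf{b},-\mathbf{a}]\}$, a subspace of dimension at most $2$. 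This matches the "rank deficiency" remark in the main text.

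Next, let $I$ with $|I| = 2r-1$ contain $t\ge 3$ indices from group $i$. The row space of $\tilde{\bfV}_{I,:}$ is spanned by the $2r-1-t$ rows outside that group, together with a subspace of dimension at most $2$ coming from the group. Therefore
\[
\mathrm{rank}(\tilde{\bfV}_{I,:}) \le (2r-1-t) + 2 \le 2r-2 < 2r-1.
\]
If $|I|$ is arbitrary, the same count gives $\mathrm{rank} \le |I| - t + 2 < |I|$. In the context of the algorithm, where $|I| = 2r-1$, this is exactly the claimed bound.

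No step is a genuine obstacle; the only care needed is sign bookkeeping in the real embedding, specifically that the second block is $\Impart$ and the cross terms combine as shown. I would verify this by expanding $\Repart\{\rho_m\bfV_{i,:}\,\bfc\}$, i.e., $\tilde{\bfV}_{j,:}\tilde{\bfc}$, directly if needed.
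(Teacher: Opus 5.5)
Your proof is correct and follows the same route as the paper. All $B_K$ rows of a coordinate group lie in a real subspace of dimension at most two, so three or more of them add at most two to the rank, which gives $\mathrm{rank}(\tilde{\bfV}_{I,:}) \le (2r-1-t)+2 \le 2r-2$. Your explicit identification of that subspace as $\mathrm{span}_{\bbR}\{[\mathbf{a},\mathbf{b}],[\mathbf{b},-\mathbf{a}]\}$ is more precise than the paper's one-line justification, which loosely says it is spanned by the real and imaginary parts of $\bfV_{i,:}$.
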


\begin{proof}
Each coordinate group $\GG^{(i)}$ spans a 2-dimensional subspace of $\RR^{2r}$ (spanned by the real and imaginary parts of $\bfV_{i,:}$).
Three rows from the same group contribute rank at most~2, not~3, to the system.
\end{proof}



\begin{lemma}[Cells adjacent to a vertex]
\label{lem:adjacent_cells}
A vertex defined by $I$ with $p$ coordinate groups contributing two hyperplanes each is adjacent to $(B_K - 1)^p$ cells.
\end{lemma}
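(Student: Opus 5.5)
We work locally at the vertex. Let the vertex be $\bfphi^\star$, determined by the index set $I$ with $|I| = 2r-1$, with null direction $\tilde{\bfc}^\star$. Let $p$ coordinate groups $i_1,\dots,i_p$ contribute two hyperplanes each. By \Cref{cor:two_implies_all}, for each such group $\alpha_{i_\ell}(\bfphi^\star) = \bfV_{i_\ell,:}\bfc(\bfphi^\star) = 0$, so all $B_K$ hyperplanes of $\GG^{(i_\ell)}$ pass through $\bfphi^\star$. For every coordinate $i$ not among these groups, \Cref{asm:general_position} gives $\alpha_i(\bfphi^\star)$ either nonzero and off every boundary ray, or on exactly one ray. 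The groups with a single index in $I$ are the ones on exactly one ray.

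\textbf{Step 1: reduce the neighborhood.} In a small enough neighborhood $U$ of $\bfphi^\star$ on the sphere, every coordinate with $\alpha_i(\bfphi^\star)$ strictly inside a sector keeps its assignment constant. So the cell structure near $\bfphi^\star$ comes only from the hyperplanes through $\bfphi^\star$. The convention in the count is that the single-group hyperplanes are resolved by the fixed-angle intersection, which picks a determined side. The ambiguous groups are therefore what generates multiplicity, and I would count cells modulo the single-group sides as the algorithm does.

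\textbf{Step 2: map each ambiguous group to the plane.} For each ambiguous group $i_\ell$, consider the real-linear map $L_\ell : \tilde{\bfc} \mapsto \alpha_{i_\ell} \in \bbC \cong \bbR^2$, whose rows are $\Repart$ and $\Impart$ of $\bfV_{i_\ell,:}$ in real form. Near $\bfphi^\star$, $\alpha_{i_\ell}(\bfphi) \approx L_\ell(\tilde{\bfc}-\tilde{\bfc}^\star)$, because $L_\ell \tilde{\bfc}^\star = 0$. The assignment $z_{i_\ell}$ is determined by which of the $K$ sectors contains $\arg \alpha_{i_\ell}$.

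Part 2 of \Cref{asm:general_position} says the $2r-1$ rows of $\tilde{\bfV}_I$ (at most two per group) are independent. Hence the combined map $\tilde{\bfc} \mapsto (\alpha_{i_1},\dots,\alpha_{i_p},\text{single-group values})$, restricted to the tangent space of the sphere at $\tilde{\bfc}^\star$, is a linear surjection onto $\bbC^p \times \bbR^{2r-1-2p}$. The tangent space is $(2r-1)$-dimensional and the rows restricted to it stay independent, because $\tilde{\bfc}^\star$ lies in their common kernel. Consequently the phases $(\arg\alpha_{i_1},\dots,\arg\alpha_{i_p})$ can be chosen independently in every neighborhood, and the cells adjacent to $\bfphi^\star$ are in bijection with tuples of sectors, one sector per ambiguous group.

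\textbf{Step 3: count per group.} For each group the sector structure is cut by $K$ rays, equivalently $B_K$ lines. The claim of $B_K-1$ per group, rather than $K$ or $2B_K$, must come from the restriction $\phi_{2r-1}\in(-\pi/K,\pi/K]$ via \Cref{lem:angle_reduction} together with the fixed-angle resolution. Global phase rotation by $e^{2\pi\imag/K}$ identifies assignment tuples, and the one already-determined side (fixed by $\phi_{2r-1}=\pi/K$) is excluded as counted elsewhere. So we count consecutive boundary pairs: the $B_K$ lines through the origin yield $B_K-1$ consecutive pairs within the half-range swept, each giving one new value of $z_{i_\ell}$. Multiplying over independent groups gives $(B_K-1)^p$.

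\textbf{Main obstacle.} The delicate step is Step 3: justifying precisely $B_K-1$ rather than $B_K$ or $K$, which depends on how the angle restriction and the antipodal identification of \Cref{lem:antipodal} quotient the local sectors. I would first try to make this rigorous by showing the quotient acts on each group's local picture as a half-turn, leaving $B_K$ lines bounding $B_K$ half-plane sectors, one of which is already recorded by the fixed-angle intersection. If that fails, I would state the lemma as counting newly contributed candidates, which is what \Cref{thm:candidate_size_simp} actually uses.
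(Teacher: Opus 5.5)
Your Steps~1--2 are sound. Step~2 in fact makes precise what the paper's proof only asserts, namely that the choices for distinct double groups are independent. The $2r-1$ rows of $\tilde{\bfV}_{I,:}$ all vanish on $\tilde{\bfc}^\star$, and by part~2 of \Cref{asm:general_position} they restrict to an isomorphism of the tangent space onto $\bbR^{2r-1}$. So near the vertex you have local coordinates $(\alpha_{i_1},\dots,\alpha_{i_p})\in\bbC^p$ plus one signed coordinate per single hyperplane.

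The genuine gap is Step~3: the mechanism you propose cannot produce the factor $B_K-1$.
\begin{itemize}
\item The identification $\bfc\sim e^{2\pi\imag/K}\bfc$ behind \Cref{lem:angle_reduction} is a free finite group action. It carries the vertex to a \emph{different} point of the sphere, so it identifies nothing inside a small neighbourhood of the vertex.
\item Worse, writing $\bfc^\star$ for the complex form of $\tilde{\bfc}^\star$, the global phase direction satisfies $\bfV_{i_\ell,:}(e^{\imag t}\bfc^\star)=e^{\imag t}\cdot 0=0$. It lies in $\ker L_\ell$ and does not act on the normal plane of a double group at all.
\item \Cref{lem:antipodal} only says that two antipodal boundary rays give the same hyperplane, which is already built into $B_K$. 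The point $-\bfc^\star$ is another vertex, not a local half-turn.
\end{itemize}
So the ``half-turn quotient'' of your obstacle paragraph does not exist. Your own Step~2 shows that the neighbourhood of the vertex literally meets $(2B_K)^p\,2^{\,2r-1-2p}$ arrangement cells: $2B_K$ sectors from the $B_K$ lines of each double group, and two sides per single hyperplane. It also realizes all $K$ values of each $z_{i_\ell}$. Your fallback, reading the lemma as a count of newly contributed cells, is the right reading. But it still needs an argument that yields $B_K-1$ per group, and none of the identifications you invoke supplies one.

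The missing idea is a one-sided charging rule. This is what the paper encodes in ``the assignment on the `correct' side'' and ``consecutive boundary pairs.'' The vertex is credited only with the adjacent cells lying in a closed half-space $\{h\le 0\}$ of the tangent space. Here $h$ is a fixed linear functional (the side singled out by the fixed-angle resolution), assumed generic. Write $h=\sum_\ell h_\ell+h_0$ in your product coordinates; generic means $h_0$ has all coefficients nonzero and each $\ker h_\ell$ is a line in the double-group plane distinct from its $B_K$ arrangement lines.

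\begin{itemize}
\item \textbf{Factorization.} Each adjacent cell is a product of one sector per double group and one orthant in the single coordinates. Every closed factor cone contains the origin, so the cell lies in $\{h\le0\}$ if and only if every factor does, and the count factorizes.
\item \textbf{Single coordinates.} Exactly one orthant qualifies. This is what justifies your ad hoc ``modulo the single-group sides.''
\item \textbf{Double groups.} In each plane, $\ker h_\ell$ meets each of the $B_K$ lines only at the origin. So the chosen open half-plane contains exactly one ray of each line: $B_K$ rays ordered in an angular interval of length $\pi$. These bound exactly $B_K-1$ complete sectors; the two outer sectors straddle $\ker h_\ell$.
\end{itemize}
The product is $(B_K-1)^p$.

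This counts arrangement sectors, not new values of $z_{i_\ell}$. For odd $K$, one ray of each line is a root-of-unity direction rather than a decision boundary, so neighbouring sectors can carry the same $z_{i_\ell}$. Your phrase ``each giving one new value'' is false in that case.

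A smaller point: in Step~1, excluding further hyperplanes through the vertex needs a $2r$-row version of part~2 of \Cref{asm:general_position}. The stated $(2r-1)$-row condition does not forbid a $2r$-th row that also vanishes on $\tilde{\bfc}^\star$.
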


\begin{proof}
For each ``double'' group $\GG^{(i)}$, all $B_K$ hyperplanes pass through the vertex (\cref{cor:two_implies_all}).
In the 2-dimensional normal space to the common axis of $\GG^{(i)}$, these $B_K$ hyperplanes appear as $B_K$ lines through the origin, creating $B_K$ angular sectors.
Moving away from the vertex into a cell means choosing one of the $B_K - 1$ wedges that does not cross either of the two defining hyperplanes from $I$.
The choices across distinct double groups are independent, giving $(B_K - 1)^p$ cells.
For single groups (one hyperplane in $I$), the assignment on the ``correct'' side is determined by the fixed-angle resolution.
\end{proof}

\section{Omitted Material from Section~\ref{sec:rank1}}
\subsection{Proof of Theorem~\ref{thm:rank1}}
\label{app:rank1_proof}

\begin{proof}
Since $\bfQ^{\star} = \lambda\bfq\bfq^{\dagger}$ with $\lambda > 0$, we have
$\bfz^{\dagger}\bfQ^{\star}\bfz = \lambda|\bfz^{\dagger}\bfq|^2$.
Maximizing~\eqref{eq:max_quad} is equivalent to maximizing $|\bfz^{\dagger}\bfq|^2$ over $\AK^n$.

For any complex number $d \in \CC$ and unit-modulus $c(\varphi) = e^{\imag\varphi}$, we have $\Repart(\bar{d}\cdot c(\varphi)) \leq |d|$, with equality when $c(\varphi) = d/|d|$.
Therefore:
\[
|\bfz^{\dagger}\bfq| = \max_{\varphi \in (-\pi,\pi]}  \Repart\bigl(\bfz^{\dagger}\bfq \cdot c(\varphi)\bigr).
\]
By symmetry of $\AK$ under multiplication by $e^{2\pi\imag/K}$, we can restrict $\varphi \in (-\pi/K, \pi/K]$ without loss (\Cref{lem:angle_reduction}).
Interchanging the maximizations:
\[
\max_{\bfz \in \AK^n} |\bfz^{\dagger}\bfq|
= \max_{\varphi \in (-\pi/K,\pi/K]}  \sum_{i=1}^n \max_{z_i \in \AK}  \Repart(\bar{z}_i \cdot q_i \cdot c(\varphi)).
\]
For fixed $\varphi$, the optimal $z_i$ is the element of $\AK$ whose phase is closest to $\arg(q_i \cdot c(\varphi)) = \theta_i + \varphi$, where $\theta_i = \arg(q_i)$.
Writing $z_i = e^{2\pi\imag k_i/K}$, the optimal $k_i$ satisfies
$k_i^{\star} = \argmin_{k}  |2\pi k/K - (\theta_i + \varphi)|_{\bmod\,2\pi}$.

This assignment is piecewise constant in $\varphi$, changing only when $\theta_i + \varphi$ crosses a boundary between adjacent sectors of $\AK$.
Over the interval $(-\pi/K, \pi/K]$, each $\theta_i$ produces exactly one boundary point
\[
\varphi_i = \frac{2\pi}{K}\Bigl(\frac{1}{2} + \floor{\frac{K\theta_i}{2\pi}} - \frac{K\theta_i}{2\pi}\Bigr).
\]
At $\varphi < \varphi_i$, the optimal assignment is $k_i = \floor{K\theta_i/(2\pi)}$; at $\varphi > \varphi_i$, it is $k_i = \ceil{K\theta_i/(2\pi)} \bmod K$.

Sorting $\varphi_{i_1} \leq \varphi_{i_2} \leq \cdots \leq \varphi_{i_n}$ partitions $(-\pi/K, \pi/K]$ into $n+1$ intervals.
On each interval, $\bfz(\varphi)$ is constant.
\Cref{alg:rank1} constructs all $n+1$ candidates by starting with the assignment at $\varphi = -\pi/K$ and flipping one coordinate at each boundary point.

\emph{Time complexity.}
Computing $\bfq$ from $\bfQ^{\star}$ costs $\OO(n^2)$ (a single eigenvector computation).
Computing phases $\boldsymbol{\theta}$: $\OO(n)$.
Computing boundary points $\boldsymbol{\varphi}$: $\OO(n)$.
Sorting: $\OO(n\log n)$.
Evaluating $n+1$ candidates at $\OO(n)$ each: $\OO(n^2)$.
Total: $\OO(n^2)$.
\end{proof}

\section{Omitted Material from Section~\ref{sec:rankr}}
\subsection{Proof of Theorem~\ref{thm:candidate_size_simp}} \label{app:candidate_set_proof}

\card*

\begin{proof}
We count candidates at each rank level $d = 1, \ldots, r$ and sum via the recursive decomposition $\mathcal{S}(\bfV_{:,1:d}) = \mathcal{J}(\bfV_{:,1:d}) \cup \mathcal{S}(\bfV_{:,1:d-1})$.

\emph{Counting distinct vertices at rank $d$.}
A vertex in the interior of $\HH_d$ is defined by a valid index set $I$ of size $2d{-}1$ with at most 2 indices per coordinate group.
Let $i$ groups contribute 2 hyperplanes each.
\begin{enumerate}
\item Choose which $i$ groups are ``double'': $\binom{n}{i}$ ways.
\item Choose which $2d{-}1{-}2i$ other groups are ``single'': $\binom{n-i}{2d-1-2i}$ ways.
\item Choose which boundary from each single group: naively $B_K^{2d-1-2i}$ ways, but the restriction $\phi_{2r-1} \in (-\pi/K, \pi/K]$ removes one degree of freedom, giving an effective $B_K^{2(d-i)-2}$ distinct vertices (the last angular coordinate's boundary is absorbed into the hypercube restriction; see~\cite{kyrillidis2014fixed}).
\item For double groups, no boundary choice is needed: by \Cref{cor:two_implies_all}, all $B_K$ boundaries from that group pass through the vertex regardless.
\item Each vertex with $i$ double groups is adjacent to $(B_K - 1)^i$ cells (\Cref{lem:adjacent_cells}).
\end{enumerate}
Combining:
\[
|\mathcal{J}(\bfV_{:,1:d})| = \sum_{i=0}^{d-1}\binom{n}{i}\binom{n-i}{2(d-i)-1}\,B_K^{2(d-i)-2}\,(B_K-1)^i.
\]
Summing over all $d\leq r$:
\[
|\mathcal{S}(\bfV)| = \sum_{d=1}^{r} |\mathcal{J}(\bfV_{:,1:d})| = \sum_{d=1}^{r}\sum_{i=0}^{d-1}\binom{n}{i}\binom{n-i}{2(d-i)-1}\,B_K^{2(d-i)-2}\,(B_K-1)^i.
\]

For constant $K$ (and hence constant $B_K$) and fixed $d$, the inner term is dominated by the $i=0$ term, yielding $\OO\left(n^{2d-1}\right)$. Summing over all $d\leq r$, the total size of the candidate set is $\OO(rn^{2r-1})$.
\end{proof}

\subsection{Proof of Theorem~\ref{thm:rankr}} \label{app:rankr_proof}

\rankr*

\begin{proof}
\emph{Step 1: Reformulation.}
Recall the reformulation of \eqref{eq:max_quad} as a double maximization problem:
$\max_{\bfz \in \AK^n}\norm{\bfV^{\dagger}\bfz}_2 = \max_{\bfphi \in \HH_r}\max_{\bfz \in \AK^n}\Repart\{\bfz^{\dagger}\bfV\bfc(\bfphi)\} = \max_{\bfphi \in \HH_r}\sum_{i=1}^n\max_{z_i \in \AK}\Repart\{\bar{z}_i\bfV_{i,:}\bfc(\bfphi)\}$.

\emph{Step 2: Piecewise constant decision.}
For fixed $\bfphi$, the optimal $z_i$ is the element of $\AK$ closest in phase to $\alpha_i(\bfphi) = \bfV_{i,:}\bfc(\bfphi)$.
The decision function $d_i(\bfV;\bfphi)$ changes only when $\arg(\alpha_i)$ crosses a boundary angle~$\vartheta_m$.

\emph{Step 3: Cell partition.}
The $nB_K$ hyperplanes~\eqref{eq:hyperplanes} partition $\HH_r$ into cells.
Within each cell, $\bfd(\bfV;\bfphi)$ is constant.
The objective is continuous in $\bfphi$, so the global maximum is attained either (a) at an interior vertex (intersection of $2r{-}1$ hyperplanes) or (b) on the boundary of $\HH_r$.

\emph{Step 4: Interior vertex enumeration.}
\Cref{alg:rankr} enumerates all valid index sets $I$ with $|I| = 2r{-}1$ and $\mathrm{rank}(\tilde{\bfV}_{I,:}) = 2r{-}1$.
For each, it computes the vertex $\bfphi(I)$, applies the decision rule, resolves ambiguities for used coordinates (\Cref{alg:resolve}), and evaluates the objective.
This covers all cells whose leading vertex lies in the interior of $\HH_r$.

\emph{Step 5: Boundary cases.}
Cells touching $\phi_{2r-1} = -\pi/K$ correspond to phase-rotated candidates from elsewhere (\Cref{lem:angle_reduction})---no loss.
Cells where $\phi_{2r-2} = \pm\pi/2$ correspond to the rank-$(r{-}1)$ subproblem on $\bfV_{:,1:r-1}$: fixing one angular coordinate eliminates one degree of freedom.
The recursive call handles these.
This boundary-to-recursion reduction is purely structural and independent of $K$ (the mechanism---fixing one coordinate---is the same for even and odd $K$, only the boundary value $\pm\pi/K$ differs).
See~\cite{kyrillidis2014fixed} for the original even-$K$ proof; the extension to odd $K$ is immediate since the fixed-angle procedure depends only on the hyperpolar parameterization, not on the specific value of $B_K$.

\emph{Step 6: Completeness.}
Every cell is associated with at least one candidate in $\mathcal{S}(\bfV)$ (interior vertices from Step~4, boundary from Step~5).
Since the maximizer lies in some cell, it is in $\mathcal{S}(\bfV)$.

\emph{Step 7: Complexity.}
By \Cref{thm:candidate_size_simp}, $|\mathcal{S}(\bfV)| = \OO(r\cdot n^{2r-1})$.
Each evaluation costs $\OO(n^2)$.
Total: $\OO(r\cdot n^{2r+1})$.
\end{proof}

\subsection{Ambiguity Resolution Subroutine}
\label{app:resolve}

\begin{algorithm}[H]
\caption{Ambiguity resolution for used vertices.}
\label{alg:resolve}
\begin{algorithmic}[1]
\STATE \textbf{Input:} $\tilde{\bfV}$, index set $I$, null vector $\bfc$, factor $\bfV$, candidate $\bfz$, alphabet size $K$.
\STATE $\mathcal{N}_I \leftarrow \{\floor{j/B_K} : j \in I\}$ \hfill\COMMENT{coordinate indices touched by $I$}
\FOR{each $i \in \mathcal{N}_I$ with $|\bfV_{i,:}\bfc| < \epsilon$} 
  \STATE $I' \leftarrow I \setminus \{j \in I : \floor{j/B_K} = i\}$ 
  \STATE Construct $\tilde{\bfV}' = \tilde{\bfV}_{I',:}\begin{bmatrix}\mathbf{I}_{2r-2} & \mathbf{0}\\ 0 & \sin(\pi/K)\\ 0 & \cos(\pi/K)\end{bmatrix}$ 
  \STATE Solve $\tilde{\bfV}'\tilde{\bfc}' = \mathbf{0}$ for $\tilde{\bfc}' \in \RR^{2r-2}$
  \STATE $\bfc' \leftarrow$ complex unit vector from $\tilde{\bfc}'$
  \STATE $z_i \leftarrow \argmax_{a \in \AK} \Repart(\bar{a}\cdot\bfV_{i,:}\bfc')$ 
\ENDFOR
\RETURN $\bfz$
\end{algorithmic}
\end{algorithm}

When a vertex has $p$ coordinate groups each contributing two hyperplanes, \cref{alg:resolve} is called $p$ times, once per ambiguous coordinate.
For each such coordinate, the procedure generates $B_K - 1$ distinct assignments by repeating with consecutive boundary pairs from the group (see~\cite{kyrillidis2014fixed}, case~(iii)), yielding $(B_K - 1)^p$ total candidate vectors per vertex.


\section{Proofs on Perturbed Low-Rank Matrices} \label{app:proofsOfspikedMatrix}

\subsection{Useful Facts}
In these proofs, we are going to use the following theorems.
\begin{theorem}[Wedin's Theorem \cite{wedin1972perturbation}]
\label{thm:wedin}
Let $\bfM,\widetilde{\bfM} \in \bbC^{m\times n}$ be two matrices with rank-\( r \) SVDs:
\begin{equation*}
    \begin{gathered}
        \bfM = 
        \begin{bmatrix} 
            \bfU_1 & \bfU_2 
        \end{bmatrix} 
        \begin{bmatrix} 
            \bm{\Sigma}_1 & 0 \\ 
            0 & \bm{\Sigma}_2 
        \end{bmatrix}
        \begin{bmatrix} 
            \bfV_1^\top \\ 
            \bfV_2^\top 
        \end{bmatrix};\\ \quad 
        \widetilde{\bfM} = \bfM + \bm{\Delta} = 
        \begin{bmatrix} 
            \widetilde{\bfU}_1 & \widetilde{\bfU}_2 
        \end{bmatrix} 
        \begin{bmatrix} 
            \widetilde{\bm{\Sigma}}_1 & 0 \\ 
            0 & \widetilde{\bm{\Sigma}}_2 
        \end{bmatrix}
        \begin{bmatrix} 
            \widetilde{\bfV}_1^\top \\ 
            \widetilde{\bfV}_2^\top 
        \end{bmatrix}.
    \end{gathered}
\end{equation*}
Let $\sigma_j,\widetilde{\sigma}_j$ denote the $j$th singular value of $\bfM,\widetilde{\bfM}$, respectively, and let $\bm{\Delta} = \bfM - \widetilde{\bfM}$. If $\delta = \min \left\{ \min_{1 \leq i \leq r, r+1 \leq j \leq n} |\sigma_i - \widetilde{\sigma}_j|, \min_{1 \leq i \leq r} \sigma_i \right\} > 0$, then:
\begin{align*}
    & \max\left\{\norm{\paren{\bfI - \bfU_1\bfU_1^\top}\widetilde{\bfU}_1}_2, \norm{\paren{\bfI - \bfV_1\bfV_1^\top}\widetilde{\bfV}_1}_2\right\}\\
    & \quad\quad\quad \leq \frac{\norm{\bm{\Delta}}_2}{\delta}
\end{align*}
\end{theorem}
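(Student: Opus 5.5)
The plan is the classical residual argument for Wedin's $\sin\Theta$ theorem, organized so that only the stated gap $\delta$ is used. Throughout, $\top$ is read as the conjugate transpose $\dagger$, since the matrices are complex.

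\textbf{Reduction to cross-projections.} Since $\bfU_1$ and $\widetilde{\bfU}_1$ both have $r$ orthonormal columns, the sine of the largest principal angle between their ranges is symmetric in the two subspaces. Hence
\[
\norm{(\bfI-\bfU_1\bfU_1^\dagger)\widetilde{\bfU}_1}_2=\norm{\widetilde{\bfU}_2^\dagger\bfU_1}_2,
\]
where $\widetilde{\bfU}_2$ completes $\widetilde{\bfU}_1$ to an orthonormal basis. The same holds on the right side with $\widetilde{\bfV}_2$. So it suffices to bound $\bfX:=\widetilde{\bfU}_2^\dagger\bfU_1$ and $\bfY:=\widetilde{\bfV}_2^\dagger\bfV_1$. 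I work with the tilde complements because the stated $\delta$ compares $\sigma_i$ for $i\le r$ against $\widetilde{\sigma}_j$ for $j\ge r+1$.

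\textbf{Residual equations.} Because $\bfM$ has rank $r$, it satisfies $\bfM\bfV_1=\bfU_1\bm{\Sigma}_1$ and $\bfM^\dagger\bfU_1=\bfV_1\bm{\Sigma}_1$. Hence
\[
\widetilde{\bfM}\bfV_1-\bfU_1\bm{\Sigma}_1=\bm{\Delta}'\bfV_1
\quad\text{and}\quad
\widetilde{\bfM}^\dagger\bfU_1-\bfV_1\bm{\Sigma}_1=\bm{\Delta}'^\dagger\bfU_1,
\]
where $\bm{\Delta}'=\widetilde{\bfM}-\bfM$. The sign convention is irrelevant, as only $\norm{\bm{\Delta}}_2$ enters. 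Multiply the first equation on the left by $\widetilde{\bfU}_2^\dagger$ and the second by $\widetilde{\bfV}_2^\dagger$, and use $\widetilde{\bfU}_2^\dagger\widetilde{\bfM}=\widetilde{\bm{\Sigma}}_2\widetilde{\bfV}_2^\dagger$ and $\widetilde{\bfV}_2^\dagger\widetilde{\bfM}^\dagger=\widetilde{\bm{\Sigma}}_2^\dagger\widetilde{\bfU}_2^\dagger$. This gives the coupled system
\[
\bfX\bm{\Sigma}_1=\widetilde{\bm{\Sigma}}_2\bfY-\widetilde{\bfU}_2^\dagger\bm{\Delta}'\bfV_1,\qquad
\bfY\bm{\Sigma}_1=\widetilde{\bm{\Sigma}}_2^\dagger\bfX-\widetilde{\bfV}_2^\dagger\bm{\Delta}'^\dagger\bfU_1 .
\]
Each noise term has spectral norm at most $\norm{\bm{\Delta}}_2$, and $\norm{\widetilde{\bm{\Sigma}}_2}_2=\widetilde{\sigma}_{r+1}$.

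\textbf{Closing the bound.} Let $a=\max\{\norm{\bfX}_2,\norm{\bfY}_2\}$, and suppose without loss of generality that $\bfX$ attains it. Since $\bm{\Sigma}_1$ is invertible with smallest singular value $\sigma_r$, we have $\norm{\bfX\bm{\Sigma}_1}_2\ge\sigma_r\norm{\bfX}_2$. The first equation then yields
\[
\sigma_r a\le\widetilde{\sigma}_{r+1}a+\norm{\bm{\Delta}}_2 .
\]
If $\sigma_r\ge\widetilde{\sigma}_{r+1}$, then $\sigma_r-\widetilde{\sigma}_{r+1}=|\sigma_r-\widetilde{\sigma}_{r+1}|\ge\delta$, and so $a\le\norm{\bm{\Delta}}_2/\delta$.

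\textbf{Main obstacle: the sign of the gap.} The delicate point is that $\delta$ is defined through absolute values, so the case $\widetilde{\sigma}_{r+1}>\sigma_r$ is not excluded by hypothesis. I handle it with Weyl's inequality, using that $\bfM$ has rank $r$, so $\sigma_{r+1}=0$. Then $\widetilde{\sigma}_{r+1}\le\sigma_{r+1}+\norm{\bm{\Delta}}_2=\norm{\bm{\Delta}}_2$, which gives $\norm{\bm{\Delta}}_2\ge\widetilde{\sigma}_{r+1}>\sigma_r\ge\delta$, using the $\min_i\sigma_i$ term in $\delta$. In this case $\norm{\bm{\Delta}}_2/\delta>1$, while the left-hand side is a norm of a projected orthonormal matrix and is therefore at most $1$, so the bound holds trivially. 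Combining the two cases proves the theorem for both the left and the right singular subspaces.
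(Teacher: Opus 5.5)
The paper gives no proof of this statement. It is quoted as a black box from \cite{wedin1972perturbation}, so there is no internal argument to compare against. Your argument is the classical residual proof of Wedin's theorem, run with the roles of $\bfM$ and $\widetilde{\bfM}$ exchanged to match the stated gap: residuals of $\bfU_1,\bfV_1$ with respect to $\widetilde{\bfM}$, tested against $\widetilde{\bfU}_2,\widetilde{\bfV}_2$.

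It is correct. The equal-dimension identity $\norm{(\bfI-\bfU_1\bfU_1^\dagger)\widetilde{\bfU}_1}_2=\norm{\widetilde{\bfU}_2^\dagger\bfU_1}_2$ holds, the coupled equations are right, and so is the estimate $(\sigma_r-\widetilde{\sigma}_{r+1})\max\{\norm{\mathbf{X}}_2,\norm{\mathbf{Y}}_2\}\le\norm{\bm{\Delta}}_2$.

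Your case split adds something the bare citation glosses over. The textbook spectral-norm Wedin theorem assumes a one-sided separation, $\widetilde{\sigma}_{r+1}\le\sigma_r-\delta$. The paper's $\delta$ is instead built from absolute differences, which is the hypothesis of the Frobenius-norm version. You show the mixed statement is still true, because whenever the one-sided separation fails the right-hand side is at least $1$.

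The one point to tighten is that second case. You use $\sigma_{r+1}=0$, reading ``rank-$r$ SVD'' as saying $\bfM$ has rank exactly $r$. The statement keeps a $\bm{\Sigma}_2$ block for $\bfM$. Also, the paper applies the theorem with $\bfM=\bfQ^\star$ of rank $r^\star$, possibly larger than $r$ (Theorem~\ref{thm:additive}). So it is better not to depend on this.

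The repair is one line and uses the first term of $\delta$ rather than $\min_i\sigma_i$. By Weyl, $\widetilde{\sigma}_{r+1}\le\sigma_{r+1}+\norm{\bm{\Delta}}_2\le\sigma_r+\norm{\bm{\Delta}}_2$. Hence if $\widetilde{\sigma}_{r+1}>\sigma_r$, then $\delta\le\widetilde{\sigma}_{r+1}-\sigma_r\le\norm{\bm{\Delta}}_2$, and the bound again holds trivially. With this change your proof covers general $\bfM$.
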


\vershyninGaussianMatrix*


\subsection{Proof of Theorem~\ref{thm:additive}}
\label{app:additiveRatioProof}

\additiveBound*

Recall the following definitions:
\begin{itemize}
    \item $\bfQ^{\star}\in\bbC^{\dim\times \dim}$ is Hermitian PSD with eigendecomposition $\bfQ^{\star} = \sum_{i=1}^{r^\star}\lambda_i^{\star}\bfu_i^{\star}\bfu_i^{\star\dagger}$
    \item $\bfQ = \bfQ^{\star} + \bfH$ with SVD $\bfQ = \sum_{i=1}^\dim\lambda_i\bfu_i\bfv_i^\dagger$
    \item $\bfQ_r = \sum_{i=1}^r\lambda_i\bfu_i\bfu_i^{\dagger}$
    \item $\bfz^\star$ attains $\texttt{OPT-}\bfQ^{\star} = \max_{\bfz\in\mathcal{A}_K^\dim}\bfz^\dagger\bfQ^\star\bfz$
    \item $\bfz_r:=\argmax_{\bfz\in\mathcal{A}_K^\dim}\bfz^\dagger\bfQ_r\bfz$
\end{itemize}
We require the following general lemma to prove the theorems of interest.
\begin{lemma}
    \label{lem:decomp_approx_err}
    Let $\bfQ^{\star},\bfQ_r,\bfz^{\star}$, and $\bfz_r$ be defined above. Then we have that
    \[
        \left|\bfz^{\star\dagger}\bfQ^{\star}\bfz^{\star} - \bfz_r\bfQ^{\star}\bfz_r\right| \leq \dim \norm{\bfQ^{\star} - \bfQ_r}_2.
    \]
\end{lemma}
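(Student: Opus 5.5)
We write $\bfE := \bfQ^{\star} - \bfQ_r$. Since $\bfQ^{\star}$ is Hermitian by assumption and $\bfQ_r = \sum_{i=1}^r \lambda_i \bfu_i\bfu_i^{\dagger}$ is Hermitian by construction, $\bfE$ is Hermitian. Hence all quadratic forms below are real, and $\bfE$ has real extreme eigenvalues $\lambda_{\min}(\bfE) \le \lambda_{\max}(\bfE)$.

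The plan is to remove the absolute value first, then sandwich the gap with a three-term telescoping.

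\textbf{Step 1 (sign).} Because $\bfz^{\star}$ maximizes $\bfz^{\dagger}\bfQ^{\star}\bfz$ over $\AK^{\dim}$, the gap
\[
D := \bfz^{\star\dagger}\bfQ^{\star}\bfz^{\star} - \bfz_r^{\dagger}\bfQ^{\star}\bfz_r
\]
is nonnegative. So it suffices to upper bound $D$.

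\textbf{Step 2 (telescoping).} We insert $\pm\bfz^{\star\dagger}\bfQ_r\bfz^{\star}$ and $\pm\bfz_r^{\dagger}\bfQ_r\bfz_r$:
\[
D = \bfz^{\star\dagger}\bfE\bfz^{\star} + \bigl(\bfz^{\star\dagger}\bfQ_r\bfz^{\star} - \bfz_r^{\dagger}\bfQ_r\bfz_r\bigr) - \bfz_r^{\dagger}\bfE\bfz_r .
\]
The middle bracket is $\le 0$ because $\bfz_r$ maximizes the $\bfQ_r$-form. This gives
\[
0 \le D \le \bfz^{\star\dagger}\bfE\bfz^{\star} - \bfz_r^{\dagger}\bfE\bfz_r .
\]

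\textbf{Step 3 (Rayleigh bounds).} Every $\bfz \in \AK^{\dim}$ has $\norm{\bfz}_2^2 = \dim$. Therefore $\dim\,\lambda_{\min}(\bfE) \le \bfz^{\dagger}\bfE\bfz \le \dim\,\lambda_{\max}(\bfE)$, and Step 2 yields
\[
0 \le D \le \dim\bigl(\lambda_{\max}(\bfE) - \lambda_{\min}(\bfE)\bigr).
\]

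\textbf{Step 4 (constant $1$; main obstacle).} To reach the stated factor $\dim\norm{\bfE}_2$ rather than $2\dim\norm{\bfE}_2$, I would use the one-sided structure of $\bfE$.

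If $\bfE \succeq 0$, then $-\bfz_r^{\dagger}\bfE\bfz_r \le 0$ and $\bfz^{\star\dagger}\bfE\bfz^{\star} \le \dim\,\lambda_{\max}(\bfE) = \dim\norm{\bfE}_2$. This is the pure truncation regime $\bfH = 0$: there $\bfQ_r$ is the top-$r$ spectral truncation of $\bfQ^{\star}$, so $\bfE = \sum_{j>r}\lambda_j^{\star}\bfu_j^{\star}\bfu_j^{\star\dagger} \succeq 0$.

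Symmetrically, if $\bfE \preceq 0$, then $\bfz^{\star\dagger}\bfE\bfz^{\star} \le 0$ and $-\bfz_r^{\dagger}\bfE\bfz_r \le \dim\,|\lambda_{\min}(\bfE)| = \dim\norm{\bfE}_2$.

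In either semidefinite case the spread $\lambda_{\max}(\bfE) - \lambda_{\min}(\bfE)$ equals $\norm{\bfE}_2$, which gives exactly the claimed inequality.

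The general indefinite case is where I expect the difficulty. There the spread can be as large as $2\norm{\bfE}_2$, and Steps 2--3 by themselves only give the factor $2\dim$. My first attempt would be to show that the perturbation-induced part of $\bfE$ enters $D$ only through its projection onto the top-$r$ eigenspace, while the truncation part is PSD, and then bound each part one-sidedly as above. If that decomposition does not close, the fallback is to state the lemma with $\dim(\lambda_{\max}(\bfE)-\lambda_{\min}(\bfE)) \le 2\dim\norm{\bfE}_2$. Since the lemma is only used inside the $\OO(\cdot)$ bound of \Cref{thm:additive}, the resulting factor $2$ is harmless for that theorem.
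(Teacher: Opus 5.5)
\paragraph{Verdict.} Your Steps 1--3 are correct and give $0 \le D \le n\bigl(\lambda_{\max}(\mathbf{E})-\lambda_{\min}(\mathbf{E})\bigr) \le 2n\norm{\mathbf{E}}_2$. Your semidefinite cases in Step 4 are also fine; there the spread is at most, not equal to, $\norm{\mathbf{E}}_2$, but your direct argument does not need that. The indefinite case, however, is not a gap that any decomposition can close, because the inequality with constant $1$ is false.

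\paragraph{Counterexample.} Take $K=2$, $n=r=2$ and
\[
\bfQ^{\star}=\begin{bmatrix}1 & 1/2\\ 1/2 & 1\end{bmatrix},\qquad
\bfQ_r=\begin{bmatrix}1 & -1/4\\ -1/4 & 1\end{bmatrix}.
\]
Both are PSD (eigenvalues $3/2,1/2$ and $5/4,3/4$). Moreover, $\bfQ_r$ is the top-$2$ truncation of $\bfQ=\bfQ^{\star}+\bfH$ with the Hermitian $\bfH=\bfQ_r-\bfQ^{\star}$. Then $\bfz^{\star}=\pm(1,1)^{\top}$ with value $3$, and $\bfz_r=\pm(1,-1)^{\top}$, since its $\bfQ_r$-value is $5/2$ versus $3/2$. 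Also $\bfz_r^{\top}\bfQ^{\star}\bfz_r=1$. So the gap is $2$, whereas $n\norm{\bfQ^{\star}-\bfQ_r}_2=2\cdot\tfrac{3}{4}=\tfrac{3}{2}$. Replacing $-1/4$ by $-\epsilon$ sends the ratio to $2$, so your factor $2$ is sharp.

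In this example there is no truncation part at all, and $\mathbf{E}$ is a multiple of the swap matrix. So the split you propose (a PSD truncation part plus a projected perturbation part) has nothing to work with.

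Nor is this an artifact of dropping the hypotheses of \Cref{thm:additive}. Take $\bfQ^{\star}=\bfu\bfu^{\top}$ and $\bfQ_r=\bfv\bfv^{\top}$, with $\bfu=(\cos\theta,\sin\theta)^{\top}$, $\bfv=(\cos\phi,\sin\phi)^{\top}$ and $0<-\phi<\theta$ small. Then $r=r^{\star}=1$, $\delta^{\star}=1$ and $\norm{\bfH}_2=\sin(\theta-\phi)\le 1/2$. Yet the gap is $2\sin 2\theta>2\sin(\theta-\phi)=n\norm{\mathbf{E}}_2$.

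\paragraph{What to do instead.} Adopt your fallback. It is exactly what the paper's own argument delivers. The paper inserts $\bfz_r^{\dagger}\bfQ_r\bfz_r$, applies the triangle inequality and $\bigl|\max_{\bfz} g_1(\bfz)-\max_{\bfz} g_2(\bfz)\bigr|\le\max_{\bfz}|g_1(\bfz)-g_2(\bfz)|$, and bounds each of the two pieces by $n\norm{\bfQ^{\star}-\bfQ_r}_2$. Its last line therefore reads $2n\norm{\bfQ^{\star}-\bfQ_r}_2$, not the $n\norm{\bfQ^{\star}-\bfQ_r}_2$ claimed in the statement.

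Your one-sided telescoping uses the optimality of $\bfz_r$ for $\bfQ_r$ instead of the max-difference bound. It is a mild refinement of the same idea: it yields the spread $\lambda_{\max}(\mathbf{E})-\lambda_{\min}(\mathbf{E})$, hence constant $1$ whenever $\mathbf{E}$ is semidefinite (for instance in the pure truncation regime $\bfH=0$). Either way the factor $2$ is absorbed into the $\OO(\cdot)$ of \Cref{thm:additive}, so nothing downstream changes.
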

\begin{proof}[Proof of \Cref{lem:decomp_approx_err}]
    Since for all $z\in\mathcal{A}_K$ we have that $|z| = 1$, we must have that $\norm{\bfz}_2^2 = \dim$ for all $\bfz\in\mathcal{A}_K^\dim$. We decompose the quantity of interest as
    \begin{align*}
        & \left|\bfz^{\star\dagger}\bfQ^{\star}\bfz^{\star} - \bfz_r\bfQ^{\star}\bfz_r\right|\\
        &\quad\quad = \left|\bfz^{\star\dagger}\bfQ^{\star}\bfz^{\star} - \bfz_r\widehat{\bfQ}\bfz_r + \bfz_r\widehat{\bfQ}\bfz_r - \bfz_r\bfQ^{\star}\bfz_r\right|\\
        &\quad\quad \leq \left|\bfz^{\star\dagger}\bfQ^{\star}\bfz^{\star} - \bfz_r\widehat{\bfQ}\bfz_r\right| + \left|\bfz_r\widehat{\bfQ}\bfz_r - \bfz_r\bfQ^{\star}\bfz_r\right|\\
        &\quad\quad = \left|\max_{\bfz\in\mathcal{A}_K^\dim}\bfz^\dagger\bfQ^{\star}\bfz - \max_{\bfz\in\mathcal{A}_K^\dim}\bfz^\dagger\widehat{\bfQ}\bfz\right| \\
        &\quad\quad\quad + \left|\bfz_r\widehat{\bfQ}\bfz_r - \bfz_r\bfQ^{\star}\bfz_r\right|\\
        &\quad\quad \leq \max_{\bfz\in\mathcal{A}_K^\dim}\left|\bfz^\dagger\bfQ^{\star}\bfz - \bfz^\dagger\widehat{\bfQ}\bfz\right| + \left|\bfz_r\widehat{\bfQ}\bfz_r - \bfz_r\bfQ^{\star}\bfz_r\right|\\
        &\quad\quad \leq \norm{\bfQ^\star - \widehat{\bfQ}}_2\cdot \max_{\bfz\in\mathcal{A}_K^\dim}\norm{\bfz}_2^2 + \norm{\bfQ^\star - \widehat{\bfQ}}_2\norm{\bfz_r}_2^2\\
        &\quad\quad \leq 2\dim\norm{\bfQ^\star - \widehat{\bfQ}}_2
    \end{align*}
    where the last inequality follows from the fact that $\norm{\bfz}_2^2 = \dim$ for all $\bfz\in\mathcal{A}_K^\dim$.
\end{proof}

To handle the non-Hermitian noise in Theorem \ref{thm:additive}, we first prove an auxiliary lemma.
\begin{lemma}
    \label{lem:mat_est_err}
    Let $\bfQ^{\star},\bfQ,\bfQ_r,\bfH\in\bbC^{\dim\times \dim}$ be defined above. Let $\lambda_j$ be the $j$th eigenvalue of $\bfQ^{\star}$, and define $\delta^{\star} = \min\left\{\min_{j\in[r-1]}\left|\lambda_j^{\star} - \lambda_{j+1}^{\star}\right|,\lambda_r^{\star}\right\}$. If $\norm{\bfH}_2\leq \frac{\delta^{\star}}{2}$, then we have that
    \[
        \norm{\bfQ^\star - \widehat{\bfQ}}_2 \leq O\paren{\lambda_{r+1}^{\star} + \frac{\lambda_1^{\star}}{\delta^{\star}}\norm{\bfH}_2}
    \]
\end{lemma}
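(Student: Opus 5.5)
We plan to bound $\norm{\bfQ^\star - \widehat{\bfQ}}_2$, where $\widehat{\bfQ} = \bfQ_r = \sum_{j=1}^r \sigma_j \bfu_j\bfu_j^\dagger$ is built from the top-$r$ singular values and \emph{left} singular vectors of $\bfQ = \bfQ^\star + \bfH$. We insert the rank-$r$ truncation of the signal, $\bfQ^\star_r := \sum_{j=1}^r \lambda_j^\star \bfu_j^\star\bfu_j^{\star\dagger}$, and use the triangle inequality:
\[
\norm{\bfQ^\star - \widehat{\bfQ}}_2 \leq \norm{\bfQ^\star - \bfQ^\star_r}_2 + \norm{\bfQ^\star_r - \widehat{\bfQ}}_2 .
\]
The first term equals $\lambda_{r+1}^\star$ exactly, since $\bfQ^\star$ is Hermitian PSD. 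This is the truncation price. It remains to show $\norm{\bfQ^\star_r - \widehat{\bfQ}}_2 = \OO(\lambda_1^\star\norm{\bfH}_2/\delta^\star)$.

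For the second term, write $\bfU^\star = [\bfu_1^\star|\cdots|\bfu_r^\star]$, $\bm{\Lambda}^\star = \mathrm{diag}(\lambda_j^\star)$, $\bfU = [\bfu_1|\cdots|\bfu_r]$, and $\bm{\Sigma} = \mathrm{diag}(\sigma_j)$. We telescope:
\[
\bfU^\star\bm{\Lambda}^\star\bfU^{\star\dagger} - \bfU\bm{\Sigma}\bfU^\dagger
= (\bfU^\star - \bfU\bfO)\bm{\Lambda}^\star\bfU^{\star\dagger} + \bfU\bfO\bm{\Lambda}^\star(\bfU^\star - \bfU\bfO)^\dagger + \bfU(\bfO\bm{\Lambda}^\star\bfO^\dagger - \bm{\Sigma})\bfU^\dagger .
\]
Here $\bfO$ is a unitary alignment matrix. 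Because $\delta^\star$ separates \emph{every} consecutive pair of the top $r$ eigenvalues, we may take $\bfO$ diagonal with unit-modulus phases, so each singular vector is matched individually. The first two pieces are then bounded by $2\lambda_1^\star\norm{\bfU^\star - \bfU\bfO}_2$. The third piece is bounded by $\max_j|\lambda_j^\star - \sigma_j| \leq \norm{\bfH}_2$, using Weyl's inequality for singular values, since the singular values of the PSD matrix $\bfQ^\star$ are its eigenvalues. This term is $\OO(\lambda_1^\star\norm{\bfH}_2/\delta^\star)$ because $\delta^\star \leq \lambda_r^\star \leq \lambda_1^\star$.

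The main step is to show $\norm{\bfu_j - e^{\imag\psi_j}\bfu_j^\star}_2 = \OO(\norm{\bfH}_2/\delta^\star)$ for each $j \leq r$. We apply Wedin's theorem (\Cref{thm:wedin}) to one-dimensional singular subspaces, with $\bfM = \bfQ^\star$ and $\widetilde{\bfM} = \bfQ$. The hypothesis $\norm{\bfH}_2 \leq \delta^\star/2$ combined with Weyl gives $|\lambda_j^\star - \sigma_k| \geq \delta^\star - \norm{\bfH}_2 \geq \delta^\star/2$ for $k \neq j$. The case of indices beyond $r$ is covered by the $\lambda_r^\star$ term in $\delta^\star$ when $r = r^\star$. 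When $r < r^\star$, we rely on the gap $\lambda_r^\star - \lambda_{r+1}^\star \geq \delta^\star$. Wedin then yields $\sin\angle(\bfu_j,\bfu_j^\star) \leq 2\norm{\bfH}_2/\delta^\star$. Choosing the phase $\psi_j$ to make $\bfu_j^{\star\dagger}\bfu_j$ real and positive converts this sine bound into $\norm{\bfu_j - e^{\imag\psi_j}\bfu_j^\star}_2 \leq \sqrt{2}\sin\angle$. Summing over the $r$ columns, or more cleanly applying Wedin to the full $r$-dimensional subspace for the cross terms, gives $\norm{\bfU^\star - \bfU\bfO}_2 = \OO(\norm{\bfH}_2/\delta^\star)$, with an implicit factor depending on $r$ only if summed crudely.

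The expected obstacle is this eigenvector step. Two issues arise. First, $\bfH$ is non-Hermitian, so the left singular vectors of $\bfQ$ need not coincide with its right singular vectors; we must use the left-singular-subspace half of Wedin's bound and take care that the singular value ordering matches the eigenvalue ordering of $\bfQ^\star$. Second, the individual-vector matching needs the gap between $\sigma_j$ and all the other $\sigma_k$, not only the tail ones. If the per-vector argument becomes messy, the fallback is to bound via projectors, $\norm{\bfU\bfU^\dagger - \bfU^\star\bfU^{\star\dagger}}_2 \leq 2\norm{\bfH}_2/\delta^\star$, and write
\[
\widehat{\bfQ} = \bfU\bfU^\dagger\,\bfQ\,\bfU\bfU^\dagger + \text{(correction)} .
\]
This accepts a constant-factor loss, absorbed into the $\OO(\cdot)$.
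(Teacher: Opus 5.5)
Your three-term split is the paper's own. Because your alignment $\mathbf{O}$ is diagonal, $\bfU\mathbf{O}\bm{\Lambda}^\star\mathbf{O}^\dagger\bfU^\dagger = \bfU\bm{\Lambda}^\star\bfU^\dagger$ is exactly the paper's intermediate matrix. The $\lambda_{r+1}^\star$ term and the Weyl term $\norm{\bfH}_2$ are also bounded identically.

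You differ in the eigenvector step, and there your version is the more careful one. The paper bounds the \emph{unaligned} $\norm{\bfU^\star - \bfU}_2$ using full-subspace Wedin together with $\norm{\bfI - \bfU^{\star\dagger}\bfU}_2 = 1 - \sigma_{\min}(\bfU^{\star\dagger}\bfU)$. That identity holds only when $\bfU^{\star\dagger}\bfU$ is Hermitian PSD; it fails already for $\bfH = \mathbf{0}$ if the SVD returns $-\bfu_j^\star$. Your per-column phases, justified by the consecutive-gap part of $\delta^\star$, are what make the decomposition legitimate. One caveat: \Cref{thm:wedin} as stated covers only top-$r$ splits. For an isolated middle singular pair you need a general-separation form, e.g.\ Davis--Kahan on the Hermitian dilation of $\bfQ$, which costs only a constant.

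The gap is in the summation. Column-wise bounds only give $\norm{\bfU^\star - \bfU\mathbf{O}}_2 \le \norm{\bfU^\star - \bfU\mathbf{O}}_F = O(\sqrt{r}\,\norm{\bfH}_2/\delta^\star)$. So as written you prove the lemma with an extra factor $\sqrt{r}$, whereas the paper's stated constant is absolute.

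Your proposed cleanup does not remove this factor. Wedin on the whole $r$-dimensional subspace controls $(\bfI - \bfU^\star\bfU^{\star\dagger})\bfU$. It says nothing about the within-subspace cross terms $\bfu_i^{\star\dagger}\bfu_j$, $i \neq j$, which are exactly what a diagonal $\mathbf{O}$ leaves behind. The projector fallback has the same hole. Its ``correction'' is $\bfU\bm{\Sigma}(\bfI - \mathbf{R}^\dagger\bfU)\bfU^\dagger$, with $\mathbf{R}$ the top-$r$ right singular vectors. That is the left/right singular-vector comparison, which is the whole difficulty.

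One way to close it with absolute constants (and without per-vector gaps) is as follows. Set $\mathbf{M} = \bfU^{\star\dagger}\bfU$ and $\mathbf{N} = \bfU^{\star\dagger}\mathbf{R}$. The identities $\bfQ\mathbf{R} = \bfU\bm{\Sigma}$ and $\bfQ^\dagger\bfU = \mathbf{R}\bm{\Sigma}$ give $\mathbf{M}\bm{\Sigma} = \bm{\Lambda}^\star\mathbf{N} + \bfU^{\star\dagger}\bfH\mathbf{R}$. They also give $\bm{\Lambda}^\star(\mathbf{M}-\mathbf{N}) + (\mathbf{M}-\mathbf{N})\bm{\Sigma} = \bfU^{\star\dagger}(\bfH\mathbf{R} - \bfH^\dagger\bfU)$, so $\norm{\mathbf{M}-\mathbf{N}}_2 \le 2\norm{\bfH}_2/(\lambda_r^\star + \sigma_r)$. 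Then $\bm{\Lambda}^\star - \mathbf{M}\bm{\Sigma}\mathbf{M}^\dagger = O(\lambda_1^\star\norm{\bfH}_2/\delta^\star)$, using $\norm{\bfI - \mathbf{M}\mathbf{M}^\dagger}_2 = \norm{(\bfI - \bfU\bfU^\dagger)\bfU^\star}_2^2$. The leakage of $\widehat{\bfQ}$ out of $\mathrm{range}(\bfU^\star)$ is handled by full-subspace Wedin.

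Finally, your step ``when $r<r^\star$ we rely on $\lambda_r^\star - \lambda_{r+1}^\star \ge \delta^\star$'' is not implied by the definition of $\delta^\star$, whose minimum runs only over $j\in[r-1]$ and $\lambda_r^\star$. The paper's proof tacitly uses the same inequality. It should be stated as an extra hypothesis; it is automatic only when $r = r^\star$, the case used in \Cref{thm:mult}.
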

\begin{proof}[Proof of \Cref{lem:mat_est_err}]
    Since $\bfQ^{\star}$ is Hermitian PSD, we write its top-$r$ eigendecomposition as $\bfQ^{\star} = \bfU^{\star}\bm{\Lambda}^{\star}\bfU^{\star\dagger}$ with $\bfU^{\star}\in\bbC^{\dim\times r}$ and $\bm{\Lambda}^{\star}\in\R^{r\times r}$ being a diagonal matrix. Let $\bfu^{\star}_j$ and $\bfu_j$ denote the $j$th column of $\bfU^{\star}$ and $\bfU$, respectively. Thus, we can decompose the quantity of interest as
    \begin{equation}
        \label{eq:lem_mat_et_err_1}
        \begin{aligned}
            \norm{\bfQ^\star - \widehat{\bfQ}}_2 & \leq \norm{\bfQ^{\star} - \bfU^{\star}\bm{\Lambda}^{\star}\bfU^{\star\dagger}}_2\\
            &\quad\quad\quad + \norm{\bfU^{\star}\bm{\Lambda}^{\star}\bfU^{\star\dagger} - \bfU\bm{\Lambda}^{\star}\bfU^{\dagger}}_2\\
            &\quad\quad\quad + \norm{\bfU\bm{\Lambda}^{\star}\bfU^{\dagger} - \widehat{\bfQ}}_2
        \end{aligned}
    \end{equation}
    For the first term, since we know that $\bfU^{\star}\bm{\Lambda}^{\star}\bfU^{\star\dagger}$ is the top-$r$ truncated SVD of $\bfQ^{\star}$, we cam simply bound
    \begin{equation}
        \label{eq:lem_mat_et_err_15}
        \norm{\bfQ^{\star} - \bfU^{\star}\bm{\Lambda}^{\star}\bfU^{\star\dagger}}_2 \leq \lambda_{r+1}^{\star}
    \end{equation}
    For the third term, we have that
    \begin{equation}
        \label{eq:lem_mat_et_err_2}
        \begin{aligned}
            \norm{\bfU\bm{\Lambda}^{\star}\bfU^{\dagger} - \widehat{\bfQ}}_2 & = \norm{\bfU\paren{\bm{\Lambda}^{\star} - \bm{\Sigma}}\bfU^{\dagger}}_2\\
            & = \norm{\bm{\Lambda}^{\star} - \bm{\Sigma}}_2\\
            & = \max_{j\in[r]}\left|\bm{\Lambda}_{jj}^{\star} - \bm{\Sigma}_{jj}\right|\\
            & \leq \norm{\bfH}_2
        \end{aligned}
    \end{equation}
    where the last inequality follows from Weyl's inequality due to the fact that $\bm{\Lambda}_{jj}$ and $\bm{\Sigma}_{jj}$ are the singular values of $\bfQ^{\star}$ and $\bfQ$, respectively. For the second term, we can write
    \begin{equation}
        \label{eq:lem_mat_et_err_3}
        \begin{aligned}
            & \norm{\bfU^{\star}\bm{\Lambda}^{\star}\bfU^{\star\dagger} - \widehat{\bfQ}}_2\\
            &\quad\quad = \norm{\bfU^{\star}\bm{\Lambda}^{\star}\bfU^{\star\dagger} - \bfU\bm{\Lambda}^{\star}\bfU^{\dagger}}_2\\
            &\quad\quad = \norm{\bfU^{\star}\bm{\Lambda}^{\star}\paren{\bfU^{\star} - \bfU}^{\dagger} + \paren{\bfU^{\star} - \bfU}\bm{\Lambda}^{\star}\bfU^{\dagger}}_2\\
            &\quad\quad \leq \norm{\bfU^{\star}\bm{\Lambda}^{\star}\paren{\bfU^\star-\bfU}^{\dagger}}_2 + \norm{\paren{\bfU^{\star}-\bfU}\bm{\Lambda}^{\star}\bfU^{\dagger}}_2\\
            &\quad\quad \leq 2\norm{\bm{\Lambda}^{\star}\paren{\bfU^\star-\bfU}^\dagger}_2\\
            &\quad\quad \leq 2\lambda_1^{\star}\norm{\bfU^{\star} - \bfU}_2
        \end{aligned}
    \end{equation}
    where the second-to-last inequality is because $\max\{\norm{\bfU^{\star}}_2,\norm{\bfU}_2\} \leq 1$ due to the fact that $\bfU^{\star},\bfU$ contains orthogonal columns. Next, we will study the term $\norm{\bfU^{\star} - \bfU}_2$ using Wedin's Theorem. To start, we notice that 
    \begin{align*}
        \norm{\bfU^{\star} - \bfU}_2 & = \norm{\bfU^{\star} - \paren{\bfI - \bfU^{\star}\bfU^{\star\dagger}}\bfU + \bfU^{\star}\bfU^{\star\dagger}\bfU}_2\\
        & \leq \norm{\paren{\bfI-\bfU^{\star}\bfU^{\star\dagger}}\bfU}_2 + \norm{\bfU^{\star}\paren{\bfI - \bfU^{\star\dagger}\bfU}}_2
    \end{align*}
    Since $\bfU^{\star}$ and $\bfU$ are the top-$r$ singular vector of $\bfQ^{\star}$ and $\bfQ$, respectively, the first term can be simply bounded by Theorem~\ref{thm:wedin} using
    \[
        \norm{\paren{\bfI-\bfU^{\star}\bfU^{\star\dagger}}\bfU}_2 \leq \frac{\norm{\bfH}_2}{\delta}
    \]
    For the second term, we have that
    \begin{align*}
        \norm{\bfU^{\star}\paren{\bfI - \bfU^{\star\dagger}\bfU}}_2 & \leq \norm{\bfI - \bfU^{\star\dagger}\bfU}_2\\
        & = \max_{i\in[r]}\left|1 - \sigma_r\paren{\bfU^{\star\dagger}\bfU}\right|\\
        & = 1 - \sigma_{\min}\paren{\bfU^{\star\dagger}\bfU}
    \end{align*}
    where the first inequality is due to the fact that $\norm{\bfU}^{\star}$ is a unitary matrix, and the last inequality is because $\bfU^{\star\dagger}\bfU$ has singular values in the range $[0,1]$. Let $\bfx\in\bbC^r$ with $\norm{\bfx}_2 = 1$ be given such that $\norm{\bfU^{\star\dagger}\bfU\bfx}_2 = \sigma_{\min}\paren{\bfU^{\star\dagger}\bfU}$. Then we must have that $\norm{\bfU\bfx}_2 = 1$ since $\bfU$ contains orthonormal columns. Then we have that
    \begin{align*}
        \sigma_{\min}\paren{\bfU^{\star\dagger}\bfU} & = \norm{\bfU^{\star\dagger}\bfU\bfx}_2\\
        & = \norm{\bfU^\star\bfU^{\star\dagger}\bfU\bfx}_2\\
        & = 1 - \norm{\paren{\bfI -\bfU^{\star}\bfU^{\star\dagger}}\bfU\bfx}_2\\
        & \geq 1 - \norm{\paren{\bfI -\bfU^{\star}\bfU^{\star\dagger}}\bfU}_2
    \end{align*}
    Thus, the second term can be upper bounded by
    \begin{align*}
        \norm{\bfU^{\star}\paren{\bfI - \bfU^{\star\top}\bfU}}_2 & = 1 - \sigma_{\min}\paren{\bfU^{\star\top}\bfU}\\
        & \leq \norm{\paren{\bfI -\bfU^{\star}\bfU^{\star\top}}\bfU}_2\\
        & \leq \frac{\norm{\bfH}_2}{\delta}
    \end{align*}
    Therefore, we can conclude that $\norm{\bfU^{\star} - \bfU}_2 \leq \frac{2}{\delta}\norm{\bfH}_2$, which leads to
    \begin{equation}
        \label{eq:lem_mat_et_err_4}
        \norm{\bfU^{\star}\bm{\Lambda}^{\star}\bfU^{\star\dagger} - \widehat{\bfQ}}_2 \leq \frac{4\lambda_1^{\star}}{\delta}\norm{\bfH}_2
    \end{equation}
    by (\ref{eq:lem_mat_et_err_3}). Plugging (\ref{eq:lem_mat_et_err_15}), (\ref{eq:lem_mat_et_err_4}) and (\ref{eq:lem_mat_et_err_2}) back into (\ref{eq:lem_mat_et_err_1}) gives
    \[
        \norm{\bfQ^{\star} - \widehat{\bfQ}}_2\leq \lambda_{r+1}^{\star} + \paren{1 + \frac{4\lambda_1^{\star}}{\delta}}\norm{\bfH}_2
    \]
    Define $\delta^{\star} = \min\left\{\min_{j\in[r-1]}\left|\lambda_j^{\star} - \lambda_{j+1}^{\star}\right|,\lambda_r^{\star}\right\}$. By Weyl's inequality, we have that $\left|\lambda_i^{\star} - \sigma_i\right| \leq \norm{\bfH}_2$. Thus, if $\norm{\bfH}_2\leq \frac{\delta^{\star}}{2}$, we can guarantee that $\sigma_i \in\left[\sfrac{\lambda_{i-1}^{\star} + \lambda_i^{\star}}{2},\sfrac{\lambda_i^{\star} + \lambda_{i+1}^{\star}}{2}\right]$, which implies that
    \begin{align*}
        & \min_{j\in[r-1]}\left|\lambda_j^{\star}-\sigma_i\right|\\
        &\quad\quad  \geq \min\left\{\left|\lambda_{i-1}^{\star} - \sigma_i\right|,\left|\lambda_{i+1}^{\star} - \sigma_i\right|\right\}\\
        & \quad\quad\geq \min\left\{\left|\lambda_{i-1}^{\star} - \lambda_{i}^{\star}\right|,\left|\lambda_{i+1}^{\star} - \lambda_{i}^{\star}\right|\right\} - \norm{\bfH}_2\\
        & \quad\quad\geq \delta^{\star} - \norm{\bfH}_2\\
        &\quad\quad \geq \frac{\delta^{\star}}{2}
    \end{align*}
    for all $i\in[r]$. This gives that $\delta \geq \frac{\delta^{\star}}{2}$, which implies that
    \[
        \norm{\bfQ^{\star} - \widehat{\bfQ}}_2\leq \lambda_{r+1}^{\star} + \paren{1 + \frac{8\lambda_1^{\star}}{\delta^{\star}}}\norm{\bfH}_2
    \]
    Notice that $\delta^{\star}\leq \lambda_1$. Therefore, we must have that $\frac{\lambda_1^{\star}}{\delta^{\star}}\geq 1$, which implies that $1 + \frac{8\lambda_1^{\star}}{\delta^\star} \leq O\paren{\frac{\lambda_1^{\star}}{\delta^\star}}$. This concludes the proof.
\end{proof}
Now, we are ready to prove Theorem~\ref{thm:additive}.
\begin{proof}[Proof of Theorem~\ref{thm:additive}]
    Under the provided condition, we can apply Lemma~\ref{lem:mat_est_err} to obtain that
    \[
        \norm{\bfQ^\star - \widehat{\bfQ}}_2 \leq O\paren{\lambda_{r+1}^{\star} + \frac{\lambda_1^{\star}}{\delta^{\star}}\norm{\bfH}_2}
    \]
    Plugging the bound into Lemma~\ref{lem:decomp_approx_err} gives
    \[
        \left|\bfz^{\star\dagger}\bfQ^{\star}\bfz^{\star} - \bfz_r\bfQ^{\star}\bfz_r\right| \leq O\paren{\dim\paren{\lambda_{r+1}^{\star} + \frac{\lambda_1^{\star}}{\delta^{\star}}\norm{\bfH}_2}}
    \]
\end{proof}

\subsection{Proof of Corollary~\ref{cor:approx_ratio_perturbation_herm}}\label{app:proof_cor_approx}
The following lemma provides an upper bound of the approximation error when the perturbation $\bfH$ (and consequently $\bfQ$) is Hermitian.
\begin{lemma}
    \label{lem:mat_est_err_herm}
    Let $\bfQ^{\star},\bfQ,\bfQ_r,\bfH\in\bbC^{\dim\times \dim}$ be defined above. Let $\lambda_j$ be the $j$th eigenvalue of $\bfQ^{\star}$. If $\bfH$ is Hermitian, then we have that
    \[
        \norm{\bfQ^\star - \widehat{\bfQ}}_2 \leq O\paren{\lambda_{r+1}^{\star} + \frac{\lambda_1^{\star}}{\delta^{\star}}\norm{\bfH}_2}.
    \]
\end{lemma}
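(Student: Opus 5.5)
The stated bound is exactly the one already established in \Cref{lem:mat_est_err}, except that the hypothesis $\norm{\bfH}_2 \leq \delta^{\star}/2$ has been dropped. The plan is therefore to split into two cases according to the size of $\norm{\bfH}_2$ relative to $\delta^{\star}$. In the small-noise case I reuse \Cref{lem:mat_est_err} verbatim. In the large-noise case the right-hand side is already so large that a crude triangle-inequality bound suffices.

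\emph{Case 1: $\norm{\bfH}_2 \leq \delta^{\star}/2$.} The proof of \Cref{lem:mat_est_err} never uses that $\bfH$ is non-Hermitian. It only uses the triangle-inequality decomposition~\eqref{eq:lem_mat_et_err_1}, Weyl's inequality for singular values, and Wedin's theorem (\Cref{thm:wedin}). All of these hold in particular for Hermitian $\bfH$. Hence \Cref{lem:mat_est_err} applies directly and gives
\[
\norm{\bfQ^\star - \widehat{\bfQ}}_2 \leq \lambda_{r+1}^{\star} + \Bigl(1 + \tfrac{8\lambda_1^{\star}}{\delta^{\star}}\Bigr)\norm{\bfH}_2 = \OO\Bigl(\lambda_{r+1}^{\star} + \tfrac{\lambda_1^{\star}}{\delta^{\star}}\norm{\bfH}_2\Bigr).
\]

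\emph{Case 2: $\norm{\bfH}_2 > \delta^{\star}/2$.} Here I bound crudely: $\norm{\bfQ^\star - \widehat{\bfQ}}_2 \leq \norm{\bfQ^\star}_2 + \norm{\widehat{\bfQ}}_2 = \lambda_1^{\star} + \sigma_1$. By construction $\widehat{\bfQ} = \sum_{j\le r}\sigma_j\bfu_j\bfu_j^{\dagger}$ with orthonormal $\bfu_j$, so its spectral norm is $\sigma_1$. Weyl's inequality for singular values gives $\sigma_1 \leq \lambda_1^{\star} + \norm{\bfH}_2$, and therefore
\[
\norm{\bfQ^\star - \widehat{\bfQ}}_2 \leq 2\lambda_1^{\star} + \norm{\bfH}_2 .
\]
It remains to absorb both terms into $\tfrac{\lambda_1^{\star}}{\delta^{\star}}\norm{\bfH}_2$.
\begin{itemize}
\item Since $\norm{\bfH}_2/\delta^{\star} > 1/2$, we have $2\lambda_1^{\star} < 4\tfrac{\lambda_1^{\star}}{\delta^{\star}}\norm{\bfH}_2$.
\item Since $\delta^{\star} \leq \lambda_r^{\star} \leq \lambda_1^{\star}$ (as already observed at the end of the proof of \Cref{lem:mat_est_err}), we have $\norm{\bfH}_2 \leq \tfrac{\lambda_1^{\star}}{\delta^{\star}}\norm{\bfH}_2$.
\end{itemize}
Together these give $\norm{\bfQ^\star - \widehat{\bfQ}}_2 \leq 5\tfrac{\lambda_1^{\star}}{\delta^{\star}}\norm{\bfH}_2$, which implies the claim.

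The only step needing care is Case 1: I must confirm that the Wedin/Weyl argument of \Cref{lem:mat_est_err} goes through unchanged when $\bfH$ is Hermitian. One subtlety is that for Hermitian but indefinite $\bfQ$, the left singular vectors $\bfu_j$ are eigenvectors whose signs may differ from those of the right singular vectors. Since only left singular vectors and singular values enter $\widehat{\bfQ}$ and the cited bounds, this causes no issue. Case 2 is routine.
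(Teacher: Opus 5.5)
Your argument is correct for the bound as displayed, but it takes a different route from the paper. It also never uses that $\bfH$ is Hermitian. Case~1 is \Cref{lem:mat_est_err}, which already allows arbitrary $\bfH$, so there is nothing to re-verify there. Case~2 holds for any $\bfH$.

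The paper instead uses Hermitian-ness to avoid Wedin and the eigengap entirely. With $\bfQ$ Hermitian, $\widehat{\bfQ}$ is (essentially) the rank-$r$ truncation of $\bfQ$, so by the truncation property and Weyl,
\[
\norm{\bfQ^\star - \widehat{\bfQ}}_2 \le \norm{\bfH}_2 + \norm{\bfQ - \widehat{\bfQ}}_2 \le \norm{\bfH}_2 + \sigma_{r+1}(\bfQ) \le \lambda_{r+1}^\star + 2\norm{\bfH}_2 ,
\]
with no condition on $\norm{\bfH}_2$. Because $\lambda_1^\star/\delta^\star \ge 1$, this implies the displayed bound. It is also the eigengap-free form that \Cref{cor:approx_ratio_perturbation_herm} actually needs, and your bound cannot supply it.

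Your route does give a useful by-product: Case~2 shows that the hypothesis $\norm{\bfH}_2 \le \delta^\star/2$ in \Cref{lem:mat_est_err} can be dropped for every $\bfH$. The price is that Case~1 rests on the Wedin step of \Cref{lem:mat_est_err}. That step bounds $\norm{\bfU^\star - \bfU}_2$ without first aligning eigenvector phases. For instance, the identity $\norm{\bfI - \bfU^{\star\dagger}\bfU}_2 = 1 - \sigma_{\min}(\bfU^{\star\dagger}\bfU)$ already fails when a column of $\bfU$ is negated. The quantity actually needed, $\bfU\Lambda^\star\bfU^\dagger$, is phase-invariant, so this is repairable, but the paper's Hermitian route sidesteps the issue completely.

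The sign subtlety you raise is harmless for your argument, but it is exactly where the paper's justification ``$\bfU = \bfV$'' is too quick. For indefinite $\bfQ$, a negative eigenvalue can be among the top $r$ in modulus, and it then enters $\widehat{\bfQ}$ with flipped sign. Such eigenvalues have modulus at most $\norm{\bfH}_2$ (Weyl, using $\bfQ^\star \succeq 0$). So the fix only worsens the constant, giving $\lambda_{r+1}^\star + 3\norm{\bfH}_2$.
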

\begin{proof}[Proof of \Cref{lem:mat_est_err_herm}]
    Since $\bfH$ is Hermitian, we must have that $\bfQ$ is Hermitian. Recall that we defined $\bfQ_r^{\star} = \bfU^{\star}\bm{\Lambda}\bfU^{\star\dagger}$ as the top-$r$ eigendecomposition of $\bfQ^{\star}$, where $\bfU^{\star}\in\bbC^{\dim\times r}$ and $\bm{\Lambda}^{\star}\in\R^{r\times r}$. Then we have that
    \begin{align*}
        \norm{\bfQ^{\star} - \widehat{\bfQ}}_2 & = \norm{\bfQ^{\star} - \bfQ + \bfQ - \widehat{\bfQ}}_2\\
        & \leq \norm{\bfQ^{\star} - \bfQ}_2 + \norm{\bfQ - \widehat{\bfQ}}_2\\
        & = \norm{\bfH}_2 + \norm{\bfQ - \widehat{\bfQ}}_2.
    \end{align*}
    We can easily notice that $\widehat{\bfQ}$ is the top-$r$ eigendecomposition of $\bfQ$, since $\bfU = \bfV$ due to the fact that $\bfQ$ is Hermitian. Therefore, we have that $\norm{\bfQ - \widehat{\bfQ}}_2 = \lambda_{r+1}\paren{\bfQ}$. By Weyl's inequality, we have that
    \[
        \left|\lambda_{r+1}\paren{\bfQ} - \lambda_{r+1}^{\star}\right| \leq \norm{\bfH}_2.
    \]
    Therefore, we can conclude that $\lambda_{r+1}\paren{\bfQ}\leq \lambda_{r+1}^\star + \norm{\bfH}_2$. This gives that
    \[
        \norm{\bfQ - \widehat{\bfQ}}_2 \leq \lambda_{r+1}^\star + \norm{\bfH}_2,
    \]
    which implies that
    \[
        \norm{\bfQ^{\star} - \widehat{\bfQ}}_2 \leq \lambda_{r+1}^\star + 2\norm{\bfH}_2 = O\paren{\lambda_{r+1}^\star + \norm{\bfH}_2}.
    \]
\end{proof}
Now, we are ready to prove Corollary \ref{cor:approx_ratio_perturbation_herm}, which we restate below.

\approxRatioHerm* 

\begin{proof}[Proof of Corollary \ref{cor:approx_ratio_perturbation_herm}]
    Under the provided condition, we can apply Lemma~\ref{lem:mat_est_err_herm} to obtain that
    \[
        \norm{\bfQ^\star - \widehat{\bfQ}}_2 \leq O\paren{\lambda_{r+1}^\star + \norm{\bfH}_2}.
    \]
    Plugging the bound into Lemma~\ref{lem:decomp_approx_err} gives
    \[
        \left|\bfz^{\star\dagger}\bfQ^{\star}\bfz^{\star} - \bfz_r\bfQ^{\star}\bfz_r\right| \leq O\paren{\dim\paren{\lambda_{r+1}^\star + \norm{\bfH}_2}}.
    \]
\end{proof}

\subsection{Proof of Theorem~\ref{thm:mult}}
\label{app:multRatioProof}

\multBound* 

\begin{proof}[Proof of Theorem~\ref{thm:mult}]
By Theorem~\ref{thm:additive}, we know 
\begin{equation*}
    \zst^\dagger \Qst \zst - \zr^\dagger \Qst \zr \leq O \paren{\dim \paren{\lambda_{r+1}^\star + \frac{\lambda_1^\star}{\delta^\star} \norm{\bfH}_2}}.
\end{equation*} 

We can rearrange this bound to get a bound on the approximation ratio in terms of $\optst$ as follows:
\begin{equation}
\label{eqn:approxRatWithOptStar}
    \frac{\zr^\dagger \Qst \zr}{\optst} \geq 1 - \frac{O \paren{\dim \paren{\lambda_{r+1}^\star + \frac{\lambda_1^\star}{\delta^\star} \norm{\bfH}_2}}}{\optst}.
\end{equation}

\def\QstZst{\ensuremath{\mathbf{z}^{\star \dagger} \mathbf{Q}^\star \mathbf{z}^\star}}
\def\zhat{\ensuremath{\mathbf{\hat{z}}}}

We now focus on lower bounding $\optst = \zst^\dagger \Qst \zst$.
To do so, we introduce $\zhat \in \AA_K^\dim$. For now, we can treat $\zhat$ as an arbitrary vector in $\AA_K^\dim$. In Equation~\eqref{eqn:defkell}, we show that we can choose $\zhat$ to satisfy certain nice properties that allow us to lower bound $\zhat^\dagger \Qst \zhat$.

We can rewrite $\zhat^\dagger \Qst \zhat$ in terms of the decomposition of $\Qst$ as follows.
\begin{align*}
    \zhat^\dagger &\Qst \zhat  = \zhat^\dagger \left( \sum_{i=1}^r \lambda^\star_i \mathbf{u}_i^\star \mathbf{u}_i^{\star \dagger} \right) \zhat \\
    &= \sum_{i=1}^r \lambda^\star_i  \left(\zhat^\dagger \mathbf{u}_i^\star \right) \left(\mathbf{u}_i^{\star \dagger} \zhat \right)  \\
    &= \sum_{i=1}^r \lambda^\star_i  \abs{\zhat^\dagger \mathbf{u}_i^\star}^2  \\
    &\geq \lambda^\star_1 \cdot \abs{\zhat^\dagger \mathbf{u}_1^\star}^2.
\end{align*}
We now focus on bounding $\abs{\zhat^\dagger \mathbf{u}_1^\star}^2$:
\begin{align*}
    \abs{\zhat^\dagger \mathbf{u}_1^\star}^2
    &= \text{Re}\left( \zhat^\dagger \mathbf{u}_1^\star \right)^2 + \text{Im}\left( \zhat^\dagger \mathbf{u}_1^\star \right)^2 \\
    &= \left[ \text{Re}\left( \sum_{\ell=1}^\dim \bar{\hat{z}}_\ell u_{1,\ell}^\star \right) \right]^2 + \left[ \text{Im}\left( \sum_{\ell=1}^\dim \bar{\hat{z}}_\ell u_{1,\ell}^\star \right) \right]^2 \\
    &= \left[ \sum_{\ell=1}^\dim \text{Re}\left( \bar{\hat{z}}_\ell u_{1,\ell}^\star \right) \right]^2 + \left[ \sum_{\ell=1}^\dim \text{Im}\left( \bar{\hat{z}}_\ell u_{1,\ell}^\star \right) \right]^2. \\
\end{align*}
Each entry $u_{1,\ell}^\star$ of $u_{1}^\star$ may be written in polar form as \begin{equation*}
    u_{1,\ell}^\star~=~r_\ell \exp\left(j \theta_\ell \right),
\end{equation*}
where $r_\ell = | u_{1,\ell}^\star | \in \mathbb{R}$ and $\theta_\ell \in [0, 2\pi)$. Additionally, because $\zhat \in \AA_K^\dim$, $z^\star_\ell = \exp\left(j k_\ell \pi / K\right)$ for some $k_\ell \in \{0, 1, \ldots, K-1\}$.
We can re-express the quantity under consideration as:
\begin{align*}
    \abs{\zhat^\dagger \mathbf{u}_1^\star}^2
    &= \left[ \sum_{\ell=1}^\dim r_\ell \text{ Re}\left( \exp \left(j \left( \theta_\ell - \frac{2k_\ell \pi}{K} \right) \right) \right) \right]^2  \\
    &\quad\quad + \left[ \sum_{\ell=1}^\dim r_\ell \text{ Im}\left( \exp \left(j \left( \theta_\ell - \frac{2k_\ell \pi}{K} \right) \right) \right) \right]^2 \\
    &= \left[ \sum_{\ell=1}^\dim r_\ell \cos \left(  \theta_\ell - \frac{2k_\ell \pi}{K} \right) \right]^2 \\
    &\quad\quad + \left[ \sum_{\ell=1}^\dim r_\ell \sin \left(  \theta_\ell - \frac{2k_\ell \pi}{K} \right) \right]^2.
\end{align*}

\noindent Now, define $k_{\ell} \in \{0, \dots, K-1\}$ to be the unique integer such that $\theta_{\ell}$ lies in the $k_{\ell}$-th sector:\begin{equation}\label{eqn:defkell}\theta_{\ell} \in \left[ \frac{(2k_{\ell}-1)\pi}{K}, \frac{(2k_{\ell}+1)\pi}{K} \right) \pmod{2\pi}.\end{equation}
Let $\theta_{\ell}' = \paren{\theta_{\ell} - \frac{2k_{\ell}\pi}{K}} \text{ mod } 2\pi$. By the definition of $k_{\ell}$, we ensure that the rotated angle is close to zero, specifically $\theta_{\ell}'\in \left[0, \frac{\pi}{K}\right) \cup \left[\frac{(2K-1)\pi}{K}, 2\pi\right)$ for all $\ell\in[\dim]$. Consequently, the cosine of the offset angle is bounded below by the cosine of the maximum deviation $\pi/K$:$$    \cos\paren{\theta_{\ell}-\frac{2k_{\ell}\pi}{K}} \in \left[\cos\left(\frac{\pi}{K}\right), 1\right].$$
This implies that
$$ \abs{\zhat^\dagger \mathbf{u}_1^\star}^2 \geq \paren{\sum_{\ell=1}^\dim r_{\ell} \cos\left(\frac{\pi}{K}\right)}^2 = \cos^2\left(\frac{\pi}{K}\right)\norm{\bfu_1^\star}_1^2 \geq \frac{1}{4}\norm{\bfu_1^\star}_1^2, $$
where the final inequality holds for all $K \geq 3$.
In order to lower bound $\norm{\bfu_1^\star}_1$, we notice that
\[
    1 = \norm{\bfu_1^\star}_2^2 \leq \norm{\bfu_1^\star}_1\norm{\bfu_1^\star}_{\infty}
\]
By the assumption that $\norm{\bfu_1^\star}_{\infty}\leq \frac{\mu}{\sqrt{\dim}}$,we have that
\[
    \norm{\bfu_1^\star}_1 \geq \frac{1}{\norm{\bfu_1^\star}_{\infty}} \geq \frac{\sqrt{\dim}}{\mu}
\]
Ultimately, we have shown that 
\begin{equation*}
    \zhat^\dagger \Qst \zhat \geq \frac{1 \dim}{4\mu^2} \lambda^\star_1.
\end{equation*}
Because $\zst$ maximizes $\bfz^\dagger \Qst \bfz$, we know
\begin{equation*}
    \QstZst \geq \zhat^\dagger \Qst \zhat.
\end{equation*}
This allows us to bound $\optst$ as follows:
\begin{equation*}
    \optst = \QstZst \geq \frac{\lambda_1^\star \dim}{4\mu^2}.
\end{equation*}
Substituting this bound back into (\ref{eqn:approxRatWithOptStar}) yields
\begin{align*}
    \frac{\zr^\dagger \Qst \zr}{\optst} &\geq 
    1 - \frac{O \paren{\dim \paren{\lambda_{r+1}^\star + \frac{\lambda_1^\star}{\delta^\star} \norm{\bfH}_2}}}{\lambda^\star_1 \dim}\\
    &\geq 1- O\paren{ \frac{\lambda_{r+1}^\star}{\lambda_1^\star} + \frac{\norm{\bfH}_2}{\delta^\star}}.
\end{align*} 
\end{proof}


\section{Proofs of Sampling-Based Guarantees (Section~\ref{sec:approx})}
\label{app:proofsOfSampling}

\subsection{Definitions}

\begin{definition}[Spectral embedding and pairwise angle]
\label{def:pair_angle}
For $i \in [n]$, let $\bfv_i \in \bbC^r$ denote the $i$-th row of $\bfV$
viewed as a column vector (\emph{same entries}, no conjugation); equivalently,
$\bfV_{i,:} = \bfv_i^\top$. Thus $y_i := \bfV_{i,:}\bfc = \bfv_i^\top \bfc$,
and the rank-1 specialization $\bfV \in \bbC^{n\times 1}$ has
$\bfv_i = V_{i,1} \in \bbC$. We call $\bfv_i$ the \emph{spectral embedding} of vertex $i$.
For $i \neq j$, define
\begin{align}
 \theta_{ij} &:= \arg\bigl(\langle \bfv_j, \bfv_i \rangle\bigr)
 \in (-\pi, \pi],
 \label{eq:def_theta} \\
 \rho_{ij} &:= \frac{\abs{\langle \bfv_j, \bfv_i \rangle}}
 {\norm{\bfv_i}_2 \norm{\bfv_j}_2}
 \in [0, 1].
 \label{eq:def_rho}
\end{align}
We call $\theta_{ij}$ the \emph{complex pairwise angle} and
$\rho_{ij}$ the \emph{magnitude ratio} between vertices $i$ and $j$.
\end{definition}
 
\medskip \noindent \textbf{Convention.}
The argument order $\langle \bfv_j, \bfv_i \rangle$ (not
$\langle \bfv_i, \bfv_j \rangle$) is chosen so that
$\theta_{ij} = \theta_i - \theta_j$ in the rank-$1$
specialization below, matching the convention used throughout the proofs that follow.


\begin{restatable}[Bohnenblust-Sobczyk]{fact}{complexHahnBanach}
\label{fct:complexHahnBanach}
    Let $X$ be a complex vector space and $p: X \to [0,\infty)$ a
    seminorm. For every $x_0 \in X$ with $p(x_0) = M > 0$, there exists a
    $\bbC$-linear functional $\ell: X \to \bbC$ such that
    \[
     |\ell(x)| \leq p(x) \quad \text{for all } x \in X,
     \qquad \ell(x_0) = M.
    \]
\end{restatable}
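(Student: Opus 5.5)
The plan is to reduce the complex statement to the real Hahn--Banach theorem, using the standard Bohnenblust--Sobczyk device of recovering a $\bbC$-linear functional from its real part.

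First, define the functional on the one-dimensional complex subspace $Y := \{\lambda x_0 : \lambda \in \bbC\}$ by $\ell_0(\lambda x_0) := \lambda M$. It is well defined because $p(x_0) = M > 0$ forces $x_0 \neq 0$. By absolute homogeneity of the seminorm, $|\ell_0(\lambda x_0)| = |\lambda| M = p(\lambda x_0)$. Now regard $X$ as a real vector space and set $u_0 := \Repart \ell_0$. This is $\bbR$-linear on $Y$ and satisfies $u_0 \le |\ell_0| \le p$ there. Since $p$ is sublinear over $\bbR$, the real Hahn--Banach theorem (Zorn's lemma, one-dimensional extension step) gives an $\bbR$-linear $u : X \to \bbR$ with $u|_Y = u_0$ and $u(x) \le p(x)$ for all $x \in X$. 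This extension step is the only nontrivial ingredient; I take it as the classical result rather than reprove it.

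Next, define $\ell(x) := u(x) - \imag\, u(\imag x)$. The function $\ell$ is $\bbR$-linear, and a direct check gives $\ell(\imag x) = u(\imag x) - \imag\, u(-x) = \imag\,\bigl(u(x) - \imag\, u(\imag x)\bigr) = \imag\,\ell(x)$. Hence $\ell$ is $\bbC$-linear, and $\Repart \ell = u$.

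For the domination bound, fix $x$ and choose a unimodular $\omega \in \bbC$ with $\omega\,\ell(x) = |\ell(x)|$. Then
\[
|\ell(x)| = \ell(\omega x) = \Repart \ell(\omega x) = u(\omega x) \le p(\omega x) = p(x).
\]
For the normalization, note that $u(x_0) = \Repart \ell_0(x_0) = M$ and $u(\imag x_0) = \Repart(\imag M) = 0$. Therefore $\ell(x_0) = M - \imag \cdot 0 = M$, which completes the argument. The only delicate points are the sign convention in $\ell := u - \imag\, u(\imag\,\cdot)$ and the use of the seminorm's complex homogeneity $p(\omega x) = p(x)$ for $|\omega| = 1$. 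Both are routine once the real extension is in hand.
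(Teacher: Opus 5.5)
Your proposal is correct. The paper gives no proof of this statement; it simply invokes it as the classical Bohnenblust--Sobczyk theorem. Your argument is exactly the standard proof behind that citation: extend $\mathrm{Re}\,\ell_0$ from $\mathbb{C}x_0$ by the real Hahn--Banach theorem, complexify via $\ell(x) = u(x) - i\,u(ix)$, and obtain $|\ell| \le p$ by rotating with a unimodular scalar. In the paper's use case ($X = \mathbb{C}^r$, $p = N$ a norm) the real extension is finite-dimensional, so finitely many one-dimensional extension steps suffice and Zorn's lemma is not needed.
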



\subsection{Proof of \Cref{thm:rr_tail}}
\label{app:proof-beta-tail}
\betaTail*

\begin{proof}

Fix $\bfc \in \mathbb{S}^{2r-1}$ and write $y_i := (\bfV\bfc)_i \in \bbC$ for
the $i$-th coordinate of $\bfV\bfc$. Since $z_i = z_i(\bfc)$ is the
nearest $K$-th root of unity to $y_i/|y_i|$, the rounding residual
$\psi_i := \arg z_i - \arg y_i$ satisfies $|\psi_i| \leq \pi/K$, and
\begin{equation}
\label{eq:rounding_pointwise}
 \Re(\overline{y_i} z_i) = |y_i|\cos\psi_i \geq |y_i|\cos(\pi/K).
\end{equation}
Now, by Cauchy--Schwarz,
\begin{align*}
 \|\bfV^\dagger \bfz\|_2
 & = \max_{\|\bfu\|_2 = 1}\bigl|\bfu^\dagger \bfV^\dagger \bfz\bigr|
 \geq \bigl|\bfc^\dagger \bfV^\dagger \bfz\bigr|
 = \bigl|(\bfV\bfc)^\dagger \bfz\bigr| \\
 & \geq \Re\bigl((\bfV\bfc)^\dagger \bfz\bigr)
 = \sum_{i=1}^n \Re(\overline{y_i} z_i)
 \stackrel{\eqref{eq:rounding_pointwise}}{\geq}
 \cos(\pi/K)\sum_{i=1}^n |y_i|
 = \cos(\pi/K) \cdot N(\bfc),
\end{align*}
where we introduce the shorthand $N(\bfc) := \|\bfV\bfc\|_1$.
Squaring,
\begin{equation}
\label{eq:f_pointwise}
 f(\bfc) = \|\bfV^\dagger \bfz\|_2^2 \geq \cos^2(\pi/K) N(\bfc)^2.
\end{equation}
Note that $N$ is a norm on $\bbC^r$, as because $\bfV$ has full column rank, we have
$N(\bfc) = 0 \Leftrightarrow \bfV\bfc = 0 \Leftrightarrow \bfc = 0$.
Positive homogeneity and the triangle inequality are immediate from
the $\ell^1$ norm on $\bbC^n$ post-composed with the linear map $\bfV$. 

Further, $N$ is continuous on the compact set $\{\bfc : \|\bfc\|_2 = 1\}$, so
$M := \sup_{\|\bfc\|_2 = 1} N(\bfc) < \infty$ is attained at some
$\bfc_0 \in \mathbb{S}^{2r-1}$ with $N(\bfc_0) = M$. Applying \Cref{fct:complexHahnBanach} with
$X = \bbC^r$, $p = N$, $x_0 = \bfc_0$, we know that there exists a $\bbC$-linear
functional $\ell(\bfc) = \bfa^\dagger \bfc$ for some $\bfa \in \bbC^r$
satisfying
\begin{equation}
\label{eq:HB_dominance}
 |\bfa^\dagger \bfc| \leq N(\bfc) \quad \text{for all } \bfc \in \bbC^r,
 \qquad \bfa^\dagger \bfc_0 = M.
\end{equation}
For every $\bfc \in \SC{r}$,
\eqref{eq:f_pointwise} gives
\begin{equation}
\label{eq:f-in-terms-of-N}
    f(\bfc) \geq \cos^2(\pi/K) N(\bfc)^2, \qquad N(\bfc) = \|\bfV\bfc\|_1.
\end{equation}
Applying \eqref{eq:HB_dominance}
at $\bfc = \bfc_0$ and Cauchy--Schwarz, we can extract a lower bound on $\|\bfa\|_2$:
\[
 M = |\bfa^\dagger \bfc_0|
 \leq \|\bfa\|_2 \|\bfc_0\|_2 = \|\bfa\|_2,
 \qquad \text{i.e.,}\quad \|\bfa\|_2 \geq M.
\]
Set $\bfu$ to be the unit vector
\[
 \bfu := \bfa / \|\bfa\|_2 \in \mathbb{S}^{2r-1}.
\]
Observe that for {every} unit vector $\bfc \in \SC{r}$,
\begin{equation}
\label{eq:N_pointwise_via_HB}
 N(\bfc)
 \stackrel{\eqref{eq:HB_dominance}}{\geq} |\bfa^\dagger \bfc|
 = \|\bfa\|_2 \cdot |\bfu^\dagger \bfc|
 \geq M \cdot |\bfu^\dagger \bfc|.
\end{equation}
Further, for any $\bfz \in \mathcal{A}_K^n$ (so $|z_i| = 1$),
\begin{align*}
 \|\bfV^\dagger\bfz\|_2
 &= \max_{\|\bfc\|_2 = 1} \bigl|\bfc^\dagger \bfV^\dagger \bfz\bigr|
 = \max_{\|\bfc\|_2 = 1} \bigl|(\bfV\bfc)^\dagger \bfz\bigr| \\
 &\leq \max_{\|\bfc\|_2 = 1} \sum_{i=1}^n |(\bfV\bfc)_i| \cdot |z_i|
 = \max_{\|\bfc\|_2 = 1} \|\bfV\bfc\|_1
 = M.
\end{align*}
Squaring and taking the maximum over $\bfz$, we know
\begin{equation}
\label{eq:OPT_M_bound}
 \mathrm{OPT}_r = \max_{\bfz \in \mathcal{A}_K^n} \|\bfV^\dagger\bfz\|_2^2
 \leq M^2.
\end{equation}
Combining \eqref{eq:f-in-terms-of-N}, \eqref{eq:N_pointwise_via_HB} and \eqref{eq:OPT_M_bound}, we have for a fixed unit vector $\bfu$:
\begin{equation}
\label{eq:f-LB-postHB}
    f(\bfc) \geq \cos^2(\pi / K) \cdot {\abs{\bfu^\dagger \bfc}^2} \cdot \OPTr.
\end{equation}
By \Cref{lem:innerIsBeta}, we know that ${\abs{\bfu^\dagger \bfc}^2} \sim \mathrm{Beta}(1, r-1)$, implying
\begin{align}
    \Pr{{\abs{\bfu^\dagger \bfc}^2} \geq 1 - \varepsilon}
    = \varepsilon^{r-1}.
\end{align}
Consequently, we have
\begin{align}
    \Pr{f(\bfc) \geq \cos^2 (\pi / K) \cdot (1 - \varepsilon) \cdot \OPTr}
    \geq \varepsilon^{r-1}.
\end{align}
Now consider $S$ i.i.d.\ samples $\bfc_1, \ldots, \bfc_S \sim
\mathrm{Unif}(\SC{r})$. Let $A_s := \{f(\ctilde_s) \geq \cos^2(\pi/K)(1-\varepsilon)\mathrm{OPT}_r\}$.
By~\eqref{eq:rr_tail_main}, $\Pr{A_s} \geq \varepsilon^{r-1}$, so the
failure probability (no sample meets the bound) is
\[
 \Pr{\bigcap_{s=1}^S A_s^{\mathrm{c}}}
 = \prod_{s=1}^S \Pr{A_s^{\mathrm{c}}}
 \leq (1 - \varepsilon^{r-1})^S
 \leq e^{-S \varepsilon^{r-1}},
\]
using independence and the standard inequality $1 - t \leq e^{-t}$.
Setting this $\leq \delta$ and solving for $S$, we find
\[
 e^{-S \varepsilon^{r-1}} \leq \delta
 \Longleftrightarrow
 S \varepsilon^{r-1} \geq \log(1/\delta)
 \Longleftrightarrow
 S \geq \frac{\log(1/\delta)}{\varepsilon^{r-1}}.
\]
\end{proof}


\subsection{Proof of \Cref{thm:tail_n_dep}}
\label{app:proof-pz-tail}

We will use the following fact throughout the proof.
\begin{fact}[Paley--Zygmund Inequality]
\label{lem:pz_inequality}
Let $f \geq 0$ be a non-negative random variable with
$\E{f^2} < \infty$ and $\E{f} > 0$. For all $\varepsilon \in (0, 1)$,
\begin{equation}
\label{eq:pz_inequality}
 \Pr{f \geq (1-\varepsilon) \E{f}}
 \geq 
 \varepsilon^2 \cdot \frac{(\E{f})^2}{\E{f^2}}
 = 
 \frac{\varepsilon^2}{1 + V_0},
\end{equation}
where $V_0 := \Var{f} / (\E{f})^2$.
\end{fact}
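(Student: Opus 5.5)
This is the standard Paley--Zygmund inequality, and I would prove it with the usual Cauchy--Schwarz argument applied to a truncated variable. Write $\mu := \E{f} > 0$ and $t := (1-\varepsilon)\mu \geq 0$.

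\emph{Step 1: split the mean at the threshold.} Decompose
\[
 f = f\,\mathbf{1}\{f < t\} + f\,\mathbf{1}\{f \geq t\}.
\]
Since $f \geq 0$, the first term is pointwise at most $t$. Taking expectations gives
\[
 \mu \leq t + \E{f\,\mathbf{1}\{f\geq t\}},
\]
which rearranges to
\[
 \E{f\,\mathbf{1}\{f\geq t\}} \geq \mu - t = \varepsilon\mu.
\]

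\emph{Step 2: Cauchy--Schwarz on the tail term.} Apply Cauchy--Schwarz to the product $f\cdot\mathbf{1}\{f\geq t\}$:
\[
 \E{f\,\mathbf{1}\{f\geq t\}} \leq \sqrt{\E{f^2}}\,\sqrt{\Pr{f\geq t}}.
\]
Here $\E{f^2}<\infty$ is assumed, and $\E{f^2} \geq \mu^2 > 0$ by Jensen, so dividing by it is legitimate.

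\emph{Step 3: combine and rewrite.} Squaring Step 2 and using the lower bound from Step 1 gives
\[
 \varepsilon^2\mu^2 \leq \E{f^2}\,\Pr{f\geq t},
\]
so
\[
 \Pr{f\geq (1-\varepsilon)\mu} \geq \varepsilon^2\,\frac{\mu^2}{\E{f^2}}.
\]
Finally, $\E{f^2} = \Var{f} + \mu^2$, hence
\[
 \frac{\mu^2}{\E{f^2}} = \frac{1}{1+V_0}, \qquad V_0 = \Var{f}/\mu^2,
\]
which is exactly \eqref{eq:pz_inequality}.

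There is no real obstacle in this argument. The only points needing care are keeping the threshold nonnegative, which holds because $\varepsilon<1$, and the positivity of $\E{f^2}$, which is handled in Step 2.
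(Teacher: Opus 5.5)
Your proof is correct and complete. The threshold split at $t=(1-\varepsilon)\E{f}$, Cauchy--Schwarz on $f\,\mathbf{1}\{f\ge t\}$, and the identity $\E{f^2}=\Var{f}+(\E{f})^2$ give exactly the stated bound. The paper gives no proof of its own: it cites the inequality as a standard Fact for use in the proof of \Cref{thm:tail_n_dep}, and your argument is the standard textbook proof, so nothing further is needed.
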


\pzBasedTail*

\begin{proof}
    
The random variable
$f(\bfc) = \norm{\bfV^\dagger \bfz(\bfc)}_2^2$ is non-negative by construction. It is also bounded: $\norm{\bfz(\bfc)}_2^2 = n$
since $|z_i| = 1$ for each $i$, and
$\norm{\bfV^\dagger \bfz}_2^2 \leq \norm{\bfV^\dagger}_2^2 \norm{\bfz}_2^2
= \lambda_1 \cdot n$. Hence $\E{f^2} \leq (\lambda_1 n)^2 < \infty$. 
We also note $\E{f} > 0$ since $f \geq \cos^2(\pi/K) \OPTr / r > 0$ by \Cref{thm:expected_rr} (and $\OPTr > 0$ for all non-trivial $\bfQ$). Therefore, we can apply \Cref{lem:pz_inequality} (Paley-Zygmund), resulting in:
\begin{equation}
     \Pr[\bfc \sim \mathrm{Unif}(\SC{r})]{f(\bfc) \geq (1-\varepsilon) \E{f(\bfc)}} \geq \frac{\varepsilon^2}{1 + V_0}.
\end{equation}
Recall that $V_0$ is defined as $\Var{f} / (\E{f})^2$. We will bound the numerator and denomitor of $V_0$ separately. First, by \Cref{lem:second_moment}, we have
$\Var{f} \leq c n^2 \Lambda^2$, where $\Lambda = \sum_{i=1}^r \lambda_i$ and $c$ is a constant. We can bound $\E{f}^2$ by combining \Cref{lem:opt_lower,thm:expected_rr}:
\begin{align}
 \E{f} &\geq 
 \frac{\cos^2(\pi/K)}{r} \OPTr
 \geq
 \frac{\cos^2(\pi/K)}{r}
 \cdot
 \frac{\cos^2(\pi/K) \Lambda n}{r^2 \mu^2}
 = 
 \frac{\cos^4(\pi/K) \Lambda n}{r^3 \mu^2},
 \label{eq:thm_tail_Ef_lower}
\end{align}
where the first inequality is due to \Cref{lem:opt_lower} and the second is due to \Cref{thm:expected_rr}.
Squaring both sides, we have:
\begin{equation}
\label{eq:thm_tail_Ef2_lower}
 (\E{f})^2
 \geq 
 \frac{\cos^8(\pi/K) \Lambda^2 n^2}{r^6 \mu^4}.
\end{equation}
Combining the bounds on $\Var{f}$ and $\E{f}^2$, we have \begin{equation}
\label{eq:thm_tail_V0_bound}
 V_0
 = 
 \frac{\Var{f}}{(\E{f})^2}
 \leq
 \frac{c n^2 \Lambda^2}
 {\cos^8(\pi/K) \Lambda^2 n^2 / (r^6 \mu^4)}
 =
 \frac{c\; r^6 \mu^4}{\cos^8(\pi/K)}.
\end{equation}
Substituting this bound into the Paley-Zygmund Inequality, we have:
\begin{equation}
    \Pr[\bfc \sim \mathrm{Unif}(\SC{r})]{f(\bfc) \geq (1-\varepsilon) \E{f(\bfc)}} \geq \frac{\varepsilon^2}{1 + V_0} \geq \frac{\varepsilon^2}{1 + \frac{c \; r^6 \mu^4}{\cos^8 (\pi / K)}}.
\end{equation}
By the monotonicity of probability,
\begin{equation}
\label{eq:thm_tail_step3}
 \Pr{f \geq (1-\varepsilon)\frac{\cos^2(\pi/K)}{r}\OPTr}
 \geq 
 \Pr{f \geq (1-\varepsilon)\E{f}} \geq \frac{\varepsilon^2}{1 + \frac{c\; r^6 \mu^4}{\cos^8 (\pi / K)}}.
\end{equation}
By the same argument given in the proof of the sample complexity of \Cref{thm:rr_tail}, in order to guarantee with probability at least $1-\delta$ that at least one sampled candidate gives a score within $(1-\varepsilon) {\cos^2(\pi / K)} / r$ of $\OPTr$, we must choose
\begin{equation}
    S \geq \left( 1+ \frac{c r^6 \mu^4}{\cos^8(\pi/K)} \right) \cdot \frac{\log(1/\delta)}{\varepsilon^2}.
\end{equation}

\end{proof}
\subsection{Additional Lemmas from \Cref{app:proofsOfSampling}}
\subsubsection{Inner Product Follows Beta distribution}

\begin{restatable}{lemma}{innerIsBeta}
\label{lem:innerIsBeta}
    Let $\bfc \sim \mathrm{Unif}(\SC{r})$. For any fixed unit vector $\bfu \in \bbC^r$,
    \begin{equation}
    \label{eq:sphere_uniform_second_moment}
     \E{|\bfu^\dagger \bfc|^2} = \frac{1}{r},
     \qquad
     |\bfu^\dagger \bfc|^2 \sim \mathrm{Beta}(1, r-1)
     \text{ on } [0,1].
    \end{equation}
\end{restatable}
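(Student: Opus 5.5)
Both claims follow from representing $\bfc$ through a standard complex Gaussian vector and using unitary invariance. Let $\mathbf{g} \in \bbC^r$ have i.i.d.\ entries $g_k \sim \NN_{\CC}(0,1)$. Its density is proportional to $\exp(-\norm{\mathbf{g}}_2^2)$, which is invariant under unitary maps, so $\mathbf{g}/\norm{\mathbf{g}}_2 \sim \mathrm{Unif}(\SC{r})$. Since the uniform measure on $\SC{r}$ is itself unitarily invariant, we may pick a unitary $\mathbf{W}$ with $\mathbf{W}\bfu = \mathbf{e}_1$. This gives
\[
|\bfu^\dagger \bfc|^2 \overset{d}{=} |\mathbf{e}_1^\dagger \bfc|^2 \overset{d}{=} \frac{|g_1|^2}{\sum_{k=1}^r |g_k|^2}.
\]

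\textbf{Mean.} The identity $\E{|\bfu^\dagger\bfc|^2} = 1/r$ follows from symmetry alone. By permutation invariance, each $|c_k|^2$ has the same mean, and $\sum_k |c_k|^2 = 1$ forces each mean to equal $1/r$. Alternatively, it can be read off from the Beta law below, whose mean is $1/(1+(r-1)) = 1/r$.

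\textbf{Distribution.} Write $g_k = X_k + \imag Y_k$ with $X_k, Y_k \sim \NN(0,1/2)$ independent. Then $E_k := |g_k|^2 = X_k^2 + Y_k^2$ is $\tfrac12\chi^2_2$, i.e.\ $\mathrm{Exp}(1)$, and $E_1,\dots,E_r$ are i.i.d. This matches the Gamma shape $1$ per coordinate. The standard Gamma--Beta fact says that if $A \sim \mathrm{Gamma}(a,1)$ and $B \sim \mathrm{Gamma}(b,1)$ are independent, then $A/(A+B) \sim \mathrm{Beta}(a,b)$. Apply it with $A = E_1 \sim \mathrm{Gamma}(1,1)$ and $B = \sum_{k\ge2} E_k \sim \mathrm{Gamma}(r-1,1)$ to get $\mathrm{Beta}(1,r-1)$.

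For completeness, the Gamma--Beta fact can be checked by the change of variables $(A,B)\mapsto (T,S)=(A/(A+B),\,A+B)$. The Jacobian is $S$, and the joint density factorizes as $t^{a-1}(1-t)^{b-1}\cdot s^{a+b-1}e^{-s}$, which shows the claim.

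\textbf{Tail consequence.} As a sanity check for its use in \Cref{thm:rr_tail}, the $\mathrm{Beta}(1,r-1)$ density is $(r-1)(1-x)^{r-2}$. Hence $\Pr{X \ge 1-\varepsilon} = \varepsilon^{r-1}$, which is exactly what that proof uses.

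\textbf{Main care point.} The step needing care is the real-versus-complex bookkeeping. On the real sphere $\mathbb{S}^{2r-1}$, the squared modulus of a complex coordinate collects two real Gaussian squares, so its Gamma shape is $1$ rather than $1/2$. This is why the law is $\mathrm{Beta}(1,r-1)$ and not $\mathrm{Beta}(1/2,\cdot)$. One must also check that $\mathrm{Unif}(\SC{r})$ means the rotation-invariant surface measure on the real sphere $\mathbb{S}^{2r-1}$, consistent with how the paper samples.
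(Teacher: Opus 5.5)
Your proposal is correct and follows the paper's proof essentially step by step: write $\bfc$ as a normalized standard complex Gaussian, use unitary invariance to rotate $\bfu$ to $\mathbf{e}_1$, and apply the Gamma--Beta identity to $|g_1|^2 \sim \mathrm{Gamma}(1,1)$ and $\sum_{k\ge 2}|g_k|^2 \sim \mathrm{Gamma}(r-1,1)$. Your change-of-variables check of that identity and your permutation-symmetry derivation of the mean $1/r$ are valid additions; the paper cites the identity as standard and reads the mean off the Beta law.
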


\begin{proof}
We can sample $\bfc$ as $\bfc = \bfZ / \|\bfZ\|_2$
where $\bfZ \in \bbC^r$ is drawn from a standard complex Gaussian distribution. By unitary invariance of the law of $\bfZ$,
we may rotate so that $\bfu = \mathbf{e}_1$, giving
$|\bfu^\dagger \bfc|^2 = |Z_1|^2 / \|\bfZ\|_2^2$. The numerator
$|Z_1|^2 = (X_1^2 + Y_1^2)/2 \sim \mathrm{Gamma}(1, 1)$ (i.e., Exp$(1)$),
and the orthogonal complement $\sum_{i \geq 2} |Z_i|^2 \sim \mathrm{Gamma}(r-1, 1)$,
independent of $|Z_1|^2$. The ratio
\[
 \frac{\mathrm{Gamma}(1, 1)}{\mathrm{Gamma}(1, 1) + \mathrm{Gamma}(r-1, 1)} \sim \mathrm{Beta}(1, r-1)
\]
is a standard identity (Beta--Gamma representation). Its mean is
$1/(1+(r-1)) = 1/r$. The
PDF, CDF, and tail used later are
\[
 f_X(x) = (r-1)(1-x)^{r-2}, \quad
 F_X(x) = 1 - (1-x)^{r-1}, \quad
 \PP[X \geq 1-\varepsilon] = \varepsilon^{r-1},
 \qquad x \in [0,1].
\]
\end{proof}

\subsubsection{Variance Bound}

\begin{lemma}[Variance of $f$ under uniform sphere]
\label{lem:second_moment}
Under \Cref{asm:incoherence} ($\mu$-incoherence,
$\norm{\bfV_{i,:}}^2 \leq \mu^2 \Lambda/n$),
\begin{equation}
\label{eq:var_bound}
 \Var[\bfc \sim \mathrm{Unif}(\SC{r})]{f(\ctilde)}
 \leq c \mu^4 \Lambda^2 n^2,
\end{equation}
where $c$ is an absolute constant,
and $\Lambda = \sum_{j=1}^r \lambda_j$. Moreover, with $\lambda_1$ the top
eigenvalue of $\bfQ_r$, the \emph{incoherence-free} Frobenius form holds with
absolute constant $1$:
\begin{equation}
\label{eq:var_bound_frob}
 \Var[\bfc \sim \mathrm{Unif}(\SC{r})]{f(\bfc)}
 \leq n^2 \lambda_1 \Lambda \leq n^2 \Lambda^2 .
\end{equation}
\end{lemma}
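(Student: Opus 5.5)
The plan is to bound the variance by a crude almost-sure bound on $f$, without any probabilistic structure of $\bfc$. Only the deterministic envelope of $f$ over $\SC{r}$ matters. The incoherence assumption then enters only cosmetically, because $\mu \ge 1$.

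\emph{Step 1 (spectral identification).} Since $\bfQ_r = \bfV\bfV^\dagger$, I will note that $\norm{\bfV^\dagger}_2^2 = \norm{\bfV}_2^2 = \lambda_1$, the top eigenvalue of $\bfQ_r$. Likewise $\norm{\bfV}_F^2 = \mathrm{tr}(\bfQ_r) = \sum_{j=1}^r \lambda_j = \Lambda$. In particular $\lambda_1 \le \Lambda$, because $\bfQ_r \succeq 0$ and all its eigenvalues are nonnegative.

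\emph{Step 2 (almost-sure envelope).} For every $\bfc \in \SC{r}$, the rounded vector $\bfz(\bfc)$ lies in $\AK^n$, so $\norm{\bfz(\bfc)}_2^2 = n$. Hence
\[
0 \le f(\bfc) = \norm{\bfV^\dagger \bfz(\bfc)}_2^2 \le \norm{\bfV^\dagger}_2^2\,\norm{\bfz(\bfc)}_2^2 = \lambda_1 n ,
\]
which is the same bound already used at the start of the proof of \Cref{thm:tail_n_dep}.

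\emph{Step 3 (variance).} Since $f \ge 0$, I will bound $\Var{f} \le \E{f^2} \le \bigl(\sup_{\bfc} f(\bfc)\bigr)\,\E{f}$. Both factors are at most $\lambda_1 n$, which gives $\Var{f} \le n^2\lambda_1^2$. By Step 1 this is at most $n^2 \lambda_1\Lambda \le n^2\Lambda^2$, which is \eqref{eq:var_bound_frob} with absolute constant $1$. If a slightly more transparent middle term is wanted, one can instead use $\E{f} \le n\Lambda$, which follows from $\norm{\bfV^\dagger\bfz}_2 \le \norm{\bfV}_F\norm{\bfz}_2$, to get $\E{f^2} \le (\lambda_1 n)(\Lambda n)$ directly.

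\emph{Step 4 (incoherent form).} \Cref{asm:incoherence} requires $\mu \in [1,\sqrt n]$, so $\mu^4 \ge 1$. Therefore $n^2\Lambda^2 \le \mu^4\Lambda^2 n^2$, and \eqref{eq:var_bound} holds with $c = 1$. The row-norm condition itself is not needed for this upper bound. It matters only in the denominator of $V_0$, through \Cref{lem:opt_lower}.

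I do not expect a genuine obstacle here. The only point needing care is the identification $\norm{\bfV}_2^2 = \lambda_1(\bfQ_r)$ when $\bfV$ comes from \Cref{alg:approx}. There $\bfV_r = [\sqrt{\sigma_1}\bfu_1|\cdots]$ has orthogonal columns scaled so that $\bfV_r\bfV_r^\dagger = \bfQ_r$, so the identification holds with $\lambda_j = \sigma_j$. The variance bound is loose, possibly by a factor of order $n$ for well-spread instances, but it is exactly what \Cref{thm:tail_n_dep} needs: it makes $\Var{f}$ scale as $(\Lambda n)^2$, and that factor cancels against the lower bound on $(\E{f})^2$.
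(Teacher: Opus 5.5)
Your proof is correct, and it takes a different and more elementary route than the paper. You never use the distribution of $\bfc$. You take the almost-sure envelope $0\le f(\bfc)\le\norm{\bfV}_2^2\norm{\bfz(\bfc)}_2^2=\lambda_1 n$ and conclude $\Var{f}\le\E{f^2}\le(\lambda_1 n)^2$. Then $\lambda_1\le\Lambda$ and $\mu\ge1$ give both displays with constant $1$. The condition $\mu\ge 1$ is in fact forced by \Cref{asm:incoherence} itself, since summing the row bounds gives $\Lambda=\norm{\bfV}_F^2\le\mu^2\Lambda$.

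The paper instead centers $f$, which removes the deterministic diagonal $\mathrm{tr}(\bfQ_r)$, and writes $f-\E{f}=2\Re\sum_{i<j}Q_{ij}\bigl(\overline{z_i}z_j-\E{\overline{z_i}z_j}\bigr)$. It bounds every covariance of the unit-modulus terms by $1$. It then controls the resulting $\bigl(\sum_{i<j}|Q_{ij}|\bigr)^2$ in one of two ways. The first is class by class, using $|Q_{ij}|\le\mu^2\Lambda/n$, where the disjoint-pair class supplies the $n^2$. The second is global, via Cauchy--Schwarz and $\norm{\bfQ_r}_F^2=\sum_j\lambda_j^2$.

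For the statement as written, your route has three advantages:
\begin{itemize}
\item it is shorter;
\item it gives the sharper intermediate bound $n^2\lambda_1^2\le n^2\lambda_1\Lambda$;
\item it shows that incoherence plays no role in the numerator of $V_0$; incoherence is needed only for the lower bound on $\E{f}$ via \Cref{lem:opt_lower}.
\end{itemize}

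The paper's centering buys an intermediate quantity, $4\bigl(\sum_{i<j}|Q_{ij}|\bigr)^2$, that depends only on off-diagonal mass. It can be far below your envelope. For example, it vanishes when $\bfQ_r$ is diagonal: there $f$ is constant, but your bound is still $n^2\lambda_1^2$. The overlap-class split also isolates exactly which covariances would have to decorrelate to improve the $n^2$ scaling. As used in \Cref{thm:tail_n_dep}, however, both arguments end at order $n^2\Lambda^2$.
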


\begin{proof}
We bound $\Var{f}$ in four steps. The strategy is to center $f$ at its
mean (killing the deterministic diagonal contribution), expand the
variance as a 4-tuple sum, and split by the overlap structure of the
index pairs. We use only the universal Cauchy--Schwarz covariance bound
$|\mathrm{Cov}| \leq 1$ for unit-modulus variables; the disjoint
(Class~2) class then dominates and gives the $n^2$ scaling.

\medskip
\emph{Step 1 (Centering eliminates the diagonal).}
$f(\ctilde) = \sum_{i,j} Q_{ij} \overline{z_i(\ctilde)} z_j(\ctilde)$.
The diagonal $i = j$ has $\overline{z_i} z_i = |z_i|^2 = 1$
deterministically, so $\sum_i Q_{ii} = \mathrm{tr}(\bfQ_r) = \Lambda$
contributes zero variance. The centered random variable is
\[
 f(\ctilde) - \E{f}
 = \sum_{i \neq j} Q_{ij}\bigl(\overline{z_i}z_j - \E{\overline{z_i}z_j}\bigr)
 = 2 \Re \sum_{i<j} Y_{ij},
\]
where $Y_{ij} := Q_{ij}(\overline{z_i}z_j - \E{\overline{z_i}z_j})$ and
we used the Hermitian symmetry $Q_{ji} = \overline{Q_{ij}}$, which
pairs $(j,i)$ with $(i,j)$ via $\overline{Y_{ji}} = Y_{ij}$.

\medskip
\emph{Step 2 (Variance as a 4-tuple sum).}
$|2\Re W|^2 \leq 4 |W|^2$ for any complex $W$, so
\[
 \Var{f} \leq 4 \E{\bigl|\textstyle\sum_{i<j} Y_{ij}\bigr|^2}
 = 4 \sum_{i<j,\ k<l} Q_{ij} \overline{Q_{kl}} 
 \mathrm{Cov}\bigl(\overline{z_i}z_j, \overline{z_k}z_l\bigr).
\]
The complex covariance satisfies $|\mathrm{Cov}(X,Y)|^2 \leq \Var{X}\Var{Y}$
(Cauchy--Schwarz on $\langle X - \E{X}, Y - \E{Y} \rangle$ in
$L^2(\PP)$); since $|\overline{z_i}z_j| = 1$ a.s.,
$\Var{\overline{z_i}z_j} \leq \E{|\overline{z_i}z_j|^2} = 1$, so
$|\mathrm{Cov}(\overline{z_i}z_j, \overline{z_k}z_l)| \leq 1$
\emph{universally}, for all four overlap classes. We now split the
index sum by overlap $|\{i,j\} \cap \{k,l\}|$ and bound each class.

\medskip
\emph{Step 3 (Class~0: matching pairs $\{i,j\} = \{k,l\}$).}
For $i < j$ and $(k,l) = (i,j)$ (the only possibility with $k < l$),
the contribution is $|Q_{ij}|^2 \cdot \Var{\overline{z_i}z_j} \leq |Q_{ij}|^2$.
By Cauchy--Schwarz on $|Q_{ij}| = |\bfv_i^\dagger \bfv_j|$ and incoherence:
$|Q_{ij}| \leq \norm{\bfv_i}\norm{\bfv_j} \leq \mu^2\Lambda/n$, so
$\sum_{i<j} |Q_{ij}|^2 \leq \binom{n}{2}(\mu^2\Lambda/n)^2 \leq \tfrac{1}{2}\mu^4 \Lambda^2$.
\textbf{Class~0 contributes $O(\mu^4 \Lambda^2)$, $n$-independent.}

\medskip
\emph{Step 4 (Class~1: single overlap, $|\{i,j\} \cap \{k,l\}| = 1$;
and Class~2: disjoint pairs).} In both cases we use the universal
$|\mathrm{Cov}| \leq 1$ from Step~2 and bound the product of $Q$'s by
incoherence.

\smallskip
\emph{Class~1 (single overlap).} The four configurations
$\{i=k, i=l, j=k, j=l\}$ (with $i<j$, $k<l$) each have at most
$n \cdot (n-1)(n-2)/2 \leq n^3/2$ ordered tuples (one shared index, two
other indices chosen from the remaining $n-1$ with an ordering
constraint), and per-tuple $|Q_{ij}\overline{Q_{kl}}| \leq \mu^4\Lambda^2/n^2$.
Summing:
\[
 \sum_{\text{Class~1}} |Q_{ij}\overline{Q_{kl}}|\cdot|\mathrm{Cov}|
 \leq 4 \cdot \tfrac{n^3}{2} \cdot \mu^4\Lambda^2/n^2
 = 2 n \mu^4 \Lambda^2.
\]
\textbf{Class~1 contributes $O(n \mu^4 \Lambda^2)$.}

\smallskip
\emph{Class~2 (disjoint pairs).} All four indices distinct. The number
of tuples with $i<j$, $k<l$ is at most $\binom{n}{2}^2 \leq n^4/4$, and
per-tuple $|Q_{ij}\overline{Q_{kl}}| \leq \norm{\bfv_i}\norm{\bfv_j}\norm{\bfv_k}\norm{\bfv_l} \leq \mu^4\Lambda^2/n^2$.
With the universal covariance bound $|\mathrm{Cov}| \leq 1$:
\[
 \sum_{\text{Class~2}} |Q_{ij}\overline{Q_{kl}}|\cdot|\mathrm{Cov}|
 \leq \tfrac{n^4}{4} \cdot \mu^4\Lambda^2/n^2
 = \tfrac{1}{4} n^2 \mu^4 \Lambda^2.
\]
\textbf{Class~2 contributes $O(n^2 \mu^4 \Lambda^2)$. }

\medskip
\emph{Combining.}
$\Var{f} \leq 4 \cdot [O(\mu^4\Lambda^2) + O(n\mu^4\Lambda^2) + O(n^2\mu^4\Lambda^2)]
= O(n^2 \mu^4 \Lambda^2)$.

\medskip
{\emph{Frobenius form \eqref{eq:var_bound_frob} (incoherence-free).}
For the second bound, apply the universal $|\mathrm{Cov}| \leq 1$ to the
\emph{entire} double sum at once, rather than class by class. With
$W := \sum_{i<j} Q_{ij}(\overline{z_i}z_j - \E{\overline{z_i}z_j})$,
Step~2 gives
$\Var{f} = \E{(2\Re W)^2} \leq 4\E{|W|^2}
\leq 4\bigl(\sum_{i<j}|Q_{ij}|\bigr)^2 =: 4T^2$.
Cauchy--Schwarz over the $\binom{n}{2}$ terms, together with the Frobenius
identity $\norm{\bfQ_r}_F^2 = \sum_j \lambda_j^2$ (valid since $\bfQ_r$ is PSD),
yields
\[
 T^2 \leq \binom{n}{2}\sum_{i<j}|Q_{ij}|^2
 \leq \binom{n}{2}\cdot\tfrac12\norm{\bfQ_r}_F^2
 = \binom{n}{2}\cdot\tfrac12\sum_j \lambda_j^2
 \leq \tfrac14 n^2 \lambda_1 \Lambda,
\]
using $\sum_j \lambda_j^2 \leq \lambda_1 \sum_j \lambda_j = \lambda_1 \Lambda$.
Hence $\Var{f} \leq n^2 \lambda_1 \Lambda \leq n^2 \Lambda^2$, with no
incoherence used.}
\end{proof}

\subsubsection{Lower Bound on Expected Value}
\label{app:expectationBound}
\begin{lemma}[Expected rounding quality, rank-$r \geq 2$]
\label{thm:expected_rr}
Let $K \geq 3$, $\bfV \in \bbC^{n \times r}$ of full column rank
(so $\bfQ_r = \bfV \bfV^\dagger$ has rank $r$), and
$\ctilde \sim \sigma_{2r-1}$ uniform on $\mathbb{S}^{2r-1} \subset \bbC^r$. Then,

\begin{equation}
\label{eq:thm1b_main}
 \E{f(\ctilde)}
  \geq  \frac{\cos^2(\pi/K)}{r}\cdot \mathrm{OPT}_r .
\end{equation}
\end{lemma}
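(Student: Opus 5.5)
The plan is to reuse the pointwise estimates already established in the proof of \Cref{thm:rr_tail} and then average over the sphere using \Cref{lem:innerIsBeta}. The proof of \Cref{thm:rr_tail} shows that for every $\ctilde \in \SC{r}$,
\[
 f(\ctilde) \geq \cos^2(\pi/K)\, N(\ctilde)^2, \qquad N(\ctilde) = \norm{\bfV\ctilde}_1 .
\]
This holds deterministically, because nearest-root rounding loses at most a factor $\cos(\pi/K)$ per coordinate. The same proof, via the complex Hahn--Banach fact (\Cref{fct:complexHahnBanach}), gives a fixed unit vector $\bfu$ with $N(\ctilde) \geq M\,|\bfu^\dagger\ctilde|$, where $M = \max_{\norm{\bfc}_2=1} N(\bfc)$. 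It also shows $\OPTr \leq M^2$. Chaining these three facts yields the pointwise bound \eqref{eq:f-LB-postHB}:
\[
 f(\ctilde) \geq \cos^2(\pi/K)\,|\bfu^\dagger\ctilde|^2\,\OPTr .
\]

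Next I take expectations over $\ctilde \sim \mathrm{Unif}(\SC{r})$. Here $\bfu$ is deterministic, since it depends only on $\bfV$. The right-hand side is therefore a constant times $|\bfu^\dagger\ctilde|^2$. By \Cref{lem:innerIsBeta}, $|\bfu^\dagger\ctilde|^2 \sim \mathrm{Beta}(1,r-1)$, so its mean is exactly $1/r$. Monotonicity and linearity of expectation then give $\E{f(\ctilde)} \geq \cos^2(\pi/K)\,\OPTr / r$, which is \eqref{eq:thm1b_main}.

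The only care needed is measurability and integrability of $f$. The function $f$ is bounded by $\lambda_1 n$ and is piecewise constant on the cells of the arrangement. Ties in the decision rule occur only on a finite union of hyperplanes, which is a null set, so the tie-breaking convention does not affect the expectation.

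I expect the main obstacle to be verifying that the Hahn--Banach step really yields $\norm{\bfa}_2 \geq M$ together with $|\bfa^\dagger\ctilde| \leq N(\ctilde)$ uniformly on the sphere. This requires $N$ to be a genuine norm, which is where the full-column-rank hypothesis enters. That argument is already carried out in the proof of \Cref{thm:rr_tail}, so I will simply cite it.

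As a sanity check, the bound can be tightened in an alternative way. Since $N(\ctilde)^2 \geq |\bfa^\dagger\ctilde|^2$ and $\E{|\bfa^\dagger\ctilde|^2} = \norm{\bfa}_2^2/r \geq M^2/r$, one could skip normalizing $\bfu$ entirely. Either route gives the stated factor $1/r$. The hypothesis $K \geq 3$ is not actually used, beyond ensuring $\cos(\pi/K) > 0$, which also holds for $K = 2$ only trivially.
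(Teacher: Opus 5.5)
Your proposal is correct and matches the paper's own proof: you recall the pointwise bound $f(\bfc) \geq \cos^2(\pi/K)\,\abs{\bfu^\dagger \bfc}^2\,\OPTr$ from the proof of \Cref{thm:rr_tail} and average it using $\E{\abs{\bfu^\dagger\bfc}^2} = 1/r$ from \Cref{lem:innerIsBeta}. Two side remarks are slightly off: your ``alternative'' is the same argument rather than a tightening, since the Hahn--Banach functional in fact has $\norm{\bfa}_2 = M$ exactly; and for $K=2$ one has $\cos(\pi/K)=0$, so there the bound holds trivially rather than because $\cos(\pi/K)>0$.
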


\begin{proof}

Recall the bound \eqref{eq:f-LB-postHB} derived throughout the proof of \Cref{thm:rr_tail}:
\begin{equation}
\label{eq:f-LB-postHB-stolen}
    f(\bfc) \geq \cos^2(\pi / K) \cdot {\abs{\bfu^\dagger \bfc}^2} \cdot \OPTr,
\end{equation}
where $\bfu$ is a fixed unit vector. Taking the expectation of both sides, and applying the fact that ${\abs{\bfu^\dagger \bfc}^2}$ follows a Beta$(1, r-1)$ distribution from Lemma~\ref{lem:innerIsBeta} and therefore has mean $1/r$, we have:
\begin{equation}
\label{eq:f-LB-postHB-expectation}
    \E[\bfc \sim \mathrm{Unif}(\SC{r})]{f(\bfc)} \geq \cos^2(\pi / K) \cdot \E[\bfc \sim \mathrm{Unif}(\SC{r})]{{\abs{\bfu^\dagger \bfc}^2}} \cdot \OPTr = \frac{\cos^2(\pi / K)}{r} \cdot \OPTr.
\end{equation}

\end{proof}

\subsubsection{Lower Bound on $\OPTr$}

\begin{lemma}[Lower bound on $\OPTr$]
\label{lem:opt_lower}

Assume \Cref{asm:incoherence} holds with  incoherence parameter $\mu$. Let $\Lambda = \sum_j \lambda_j$.
Then, for all integers $K \geq 3$,
\begin{equation}
\label{eq:opt_lower}
 \OPTr
 \geq
 \cos^2(\pi/K) \cdot \frac{\lambda_1^2}{\Lambda} \cdot \frac{n}{\mu^2}.
\end{equation}
\end{lemma}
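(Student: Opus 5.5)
The plan is to certify the lower bound with one explicit candidate: round the phases of the top eigenvector of $\bfQ_r$ to $\AK$. This mirrors the lower bound on $\optQ$ in the proof of \Cref{thm:mult}. The new ingredient is that \Cref{asm:incoherence} constrains row norms of $\bfV$, not the entries of an eigenvector, so a translation step is needed.

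\emph{Step 1: reduce to the eigenvector.} Write $\bfQ_r = \bfV\bfV^\dagger = \sum_{j=1}^r \lambda_j \bfu_j\bfu_j^\dagger$ with orthonormal $\bfu_j$. For any $\hat{\bfz}\in\AK^n$, PSD-ness gives
\[
\OPTr \;\ge\; \hat{\bfz}^\dagger\bfQ_r\hat{\bfz} \;=\; \sum_j \lambda_j\,|\bfu_j^\dagger\hat{\bfz}|^2 \;\ge\; \lambda_1\,|\bfu_1^\dagger\hat{\bfz}|^2 .
\]
Write $u_{1,\ell} = r_\ell e^{\imag\theta_\ell}$ and let $\hat z_\ell$ be the root of unity nearest in phase to $\theta_\ell$, exactly as in \eqref{eqn:defkell}. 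Then each term $\Repart(\overline{\hat z_\ell}\,u_{1,\ell})$ is at least $r_\ell\cos(\pi/K)$. Since $|w|\ge \Repart(w)$, this gives
\[
|\bfu_1^\dagger\hat{\bfz}| \;\ge\; \cos(\pi/K)\,\norm{\bfu_1}_1 .
\]

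\emph{Step 2: incoherence bounds $\norm{\bfu_1}_\infty$.} This is the step I expect to need care, because the factor $\bfV$ is only determined up to a right unitary. Since $\bfV\bfV^\dagger = \bfU\bm{\Lambda}\bfU^\dagger$ and $\bfV$ has $r$ columns, we have $\bfV = \bfU\bm{\Lambda}^{1/2}\bfW$ for some unitary $\bfW\in\bbC^{r\times r}$. Row norms are invariant under right multiplication by a unitary, so
\[
\norm{\bfV_{i,:}}_2^2 \;=\; \sum_{j=1}^r \lambda_j|u_{j,i}|^2 \;\ge\; \lambda_1|u_{1,i}|^2 .
\]
Combining this with \eqref{eq:incoherence} gives $|u_{1,i}|^2 \le \mu^2\Lambda/(n\lambda_1)$ for every $i$. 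Equivalently, $\norm{\bfV_{i,:}}_2^2$ is simply the diagonal entry $(\bfQ_r)_{ii}$, which sidesteps any choice of factor.

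\emph{Step 3: conclude.} As in \Cref{thm:mult}, $1=\norm{\bfu_1}_2^2\le\norm{\bfu_1}_1\norm{\bfu_1}_\infty$. Hence
\[
\norm{\bfu_1}_1^2 \;\ge\; \frac{1}{\norm{\bfu_1}_\infty^2} \;\ge\; \frac{n\lambda_1}{\mu^2\Lambda}.
\]
Chaining Steps 1 and 3 yields
\[
\OPTr \;\ge\; \lambda_1\cos^2(\pi/K)\cdot\frac{n\lambda_1}{\mu^2\Lambda} \;=\; \cos^2(\pi/K)\,\frac{\lambda_1^2}{\Lambda}\,\frac{n}{\mu^2},
\]
which is \eqref{eq:opt_lower}. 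The argument never uses $K\ge 3$ beyond $\cos(\pi/K)>0$. For the $\cos^2(\pi/K)\Lambda n/(r^2\mu^2)$ form used in \Cref{thm:tail_n_dep}, I would add the observation $\lambda_1\ge\Lambda/r$, so that $\lambda_1^2/\Lambda\ge\Lambda/r^2$.
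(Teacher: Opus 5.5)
Your proposal is correct and follows essentially the same route as the paper's proof. Both round the phases of the top eigenvector $\bfu_1$ of $\bfQ_r$, keep only the $\lambda_1$ term of the spectral expansion, convert incoherence into $|u_{1,i}|^2 \le \mu^2\Lambda/(n\lambda_1)$ via $(\bfQ_r)_{ii} = \|\bfV_{i,:}\|_2^2 \ge \lambda_1 |u_{1,i}|^2$, and close with $1 = \|\bfu_1\|_2^2 \le \|\bfu_1\|_1\|\bfu_1\|_\infty$. The paper uses the diagonal-entry identity directly, so your unitary-factor detour is unnecessary but harmless, and it also appends $\lambda_1 \ge \Lambda/r$ to get the cleaner form.
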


\begin{proof}
    The proof constructs an explicit $\bfz \in \mathcal{A}_K^n$ and shows
$\bfz^\dagger \bfQ_r \bfz \geq \cos^2(\pi/K) \lambda_1^2 n/(\mu^2 \Lambda)$.
Then $\OPTr$, being the max over $\AK^n$, dominates this specific
$\bfz$. The construction is nearest-root rounding of the top
eigenvector. Six steps.

\medskip
\emph{Step 1 (construct $\bfz$ by rounding $\bfu_1$).}
Let $\bfu_1 \in \bbC^n$ be a unit eigenvector of $\bfQ_r$ associated
with the top eigenvalue $\lambda_1$
(so $\bfQ_r \bfu_1 = \lambda_1 \bfu_1$, $\norm{\bfu_1}_2 = 1$). For
each coordinate $i$, choose $k_i \in \{0, 1, \ldots, K-1\}$
minimizing the angular distance from $\arg(u_{1,i})$ to $2\pi k/K$
modulo $2\pi$, and set $z_i := \omega^{k_i}$ where
$\omega = e^{2\pi\imag/K}$. The residual
$\psi_i := \arg(u_{1,i}) - 2\pi k_i/K \pmod{2\pi}$ satisfies
$|\psi_i| \leq \pi/K$ by the choice of $k_i$.

\medskip
\emph{Step 2 (coordinate-wise alignment).}
With this $z_i$,
$\overline{z_i} u_{1,i} = e^{-2\pi\imag k_i/K} |u_{1,i}| e^{\imag \arg(u_{1,i})}
= |u_{1,i}| e^{\imag \psi_i}$. Hence
\begin{equation}
\label{eq:opt_lower_coord}
 \Re(\overline{z_i} u_{1,i}) = |u_{1,i}| \cos(\psi_i)
 \geq |u_{1,i}| \cos(\pi/K),
\end{equation}
where the inequality uses $|\psi_i| \leq \pi/K$ and the monotonicity
of $\cos$ on $[0, \pi/2]$ (combined with $\cos(-\psi) = \cos(\psi)$).

\medskip
\emph{Step 3 (sum coordinates and square).}
Summing \eqref{eq:opt_lower_coord} over $i \in [n]$:
\[
 \Re\langle \bfz, \bfu_1\rangle
 = \sum_{i=1}^{n} \Re(\overline{z_i} u_{1,i})
 \geq \cos(\pi/K) \sum_{i=1}^{n} |u_{1,i}|
 = \cos(\pi/K) \norm{\bfu_1}_1.
\]
For $K \geq 3$, $\cos(\pi/K) > 0$, so the RHS is non-negative, hence
$|\langle \bfz, \bfu_1\rangle| \geq \Re\langle \bfz, \bfu_1\rangle$.
Squaring,
\begin{equation}
\label{eq:opt_lower_inner_squared}
 |\langle \bfz, \bfu_1\rangle|^2 \geq \cos^2(\pi/K) \norm{\bfu_1}_1^2.
\end{equation}

\medskip
\emph{Step 4 (spectral lower bound on $\bfz^\dagger \bfQ_r \bfz$).}
Expanding $\bfQ_r$ in its eigenbasis,
$\bfQ_r = \sum_{j=1}^{r} \lambda_j \bfu_j \bfu_j^\dagger$,
\[
 \bfz^\dagger \bfQ_r \bfz
 = \sum_{j=1}^{r} \lambda_j |\langle \bfz, \bfu_j\rangle|^2
 \geq \lambda_1 |\langle \bfz, \bfu_1\rangle|^2,
\]
where the inequality drops the $j \geq 2$ terms (each non-negative,
since $\lambda_j \geq 0$). Combining with \eqref{eq:opt_lower_inner_squared}:
\begin{equation}
\label{eq:opt_lower_step4}
 \bfz^\dagger \bfQ_r \bfz \geq \lambda_1 \cos^2(\pi/K) \norm{\bfu_1}_1^2.
\end{equation}

\medskip
\emph{Step 5 (lower-bound $\norm{\bfu_1}_1$ via incoherence).}
\cref{asm:incoherence} gives
$\norm{\bfV_{i,:}}_2^2 \leq \mu^2 \Lambda/n$ for every $i \in [n]$.
By the eigendecomposition,
$(\bfQ_r)_{ii} = \sum_{j=1}^{r} \lambda_j |u_{j,i}|^2$.
Also $(\bfQ_r)_{ii} = \bfe_i^\top \bfV \bfV^\dagger \bfe_i = \norm{\bfV^\dagger \bfe_i}_2^2 = \norm{\bfV_{i,:}}_2^2$.
Dropping all but the $j=1$ term in the sum:
$\lambda_1 |u_{1,i}|^2 \leq (\bfQ_r)_{ii} = \norm{\bfV_{i,:}}_2^2 \leq \mu^2 \Lambda/n$.
Hence
\[
 |u_{1,i}|^2 \leq \frac{\mu^2 \Lambda}{n \lambda_1},
 \qquad
 \norm{\bfu_1}_\infty \leq \mu \sqrt{\Lambda/(n \lambda_1)}.
\]
Apply the $\ell^\infty$--$\ell^1$ interpolation
$\sum_i |u_{1,i}|^2 \leq \norm{\bfu_1}_\infty \cdot \norm{\bfu_1}_1$
(valid because $|u_{1,i}|^2 = |u_{1,i}| \cdot |u_{1,i}| \leq \norm{\bfu_1}_\infty \cdot |u_{1,i}|$
pointwise, then sum):
\[
 1 = \norm{\bfu_1}_2^2 \leq \norm{\bfu_1}_\infty \cdot \norm{\bfu_1}_1
 \leq \mu \sqrt{\Lambda/(n \lambda_1)} \cdot \norm{\bfu_1}_1.
\]
Solving for $\norm{\bfu_1}_1$,
\begin{equation}
\label{eq:opt_lower_u1_norm}
 \norm{\bfu_1}_1 \geq \frac{1}{\mu}\sqrt{\frac{n \lambda_1}{\Lambda}},
 \qquad
 \norm{\bfu_1}_1^2 \geq \frac{n \lambda_1}{\mu^2 \Lambda}.
\end{equation}

\medskip
\emph{Step 6 (conclude).}
Substitute \eqref{eq:opt_lower_u1_norm} into \eqref{eq:opt_lower_step4}:
\[
 \bfz^\dagger \bfQ_r \bfz
 \geq \lambda_1 \cos^2(\pi/K) \cdot \frac{n \lambda_1}{\mu^2 \Lambda}
 = \cos^2(\pi/K) \cdot \frac{\lambda_1^2 n}{\mu^2 \Lambda}.
\]
Since $\bfz \in \mathcal{A}_K^n$,
$\OPTr = \max_{\bfz' \in \mathcal{A}_K^n} (\bfz')^\dagger \bfQ_r \bfz'
\geq \bfz^\dagger \bfQ_r \bfz$, giving:
\[
 \OPTr \geq \cos^2(\pi/K) \cdot \frac{\lambda_1^2 n}{\mu^2 \Lambda}.
\]
For the second (cleaner) inequality, use $\lambda_1 \geq \Lambda/r$
(the maximum of $r$ positive numbers summing to $\Lambda$ is at least
the average $\Lambda/r$): $\lambda_1^2/\Lambda \geq (\Lambda/r)^2/\Lambda = \Lambda/r^2$, hence
\[
 \OPTr \geq \cos^2(\pi/K) \cdot \frac{\Lambda n}{r^2 \mu^2}.
\]
The rank-1 case ($r = 1$) sharpens: $\lambda_1 = \Lambda$ exactly,
recovering $\OPTr \geq \cos^2(\pi/K) \Lambda n/\mu^2$.
\end{proof}

\section{Experimental Details}\label{app:experiments}
\subsection{Experimental Setup.}

\textit{Graph Families.}
We conduct experiments on three families of graphs:
(i) $\GG(\dim, p)$ (Erd\H{o}s-R\'enyi) random graphs with $\dim=20, 50$, and $100$, and $p=0.1, 0.25, 0.5, 0.75$ and $0.9$;
(ii) Random $d$-regular graphs on $\dim=20, 50$ and 100 nodes, with $d = 3, 4$ and $5$;
(iii) All GSet benchmark instances \cite{gset_dataset}. GSet is a benchmark commonly used to evalute \textsc{Max-K-Cut} algorithms that consists of 71 graphs ranging in size from 800 to 20,000 nodes. Each GSet graph falls into one of the following categories \cite{davis2011university}: 
\begin{enumerate}
    \item Erdos-Renyi graphs with edge probability at most $0.03$,
    \item 2-D toroidal, 4-regular graphs,
    \item random graphs with skewed degree distributions.
\end{enumerate}
These graphs either have binary (0 or 1) edge weights or have edge weights in $\{-1, 0, +1\}$. 

\smallskip
\noindent \textit{Baseline Algorithms.}
We compare our algorithms to the following baselines:
(i) \textsc{Frieze-Jerrum} \cite{frieze1997improved}: an SDP relaxation and rounding scheme; (ii) \textsc{Greedy} \cite{gui2018bqp}: a greedy algorithm that iteratively identifies and locks the single-best node assignment; (iii) \textsc{Genetic} \cite{panxing2016genetic}: a genetic algorithm that maintains a set of candidate cuts and iteratively selects the best of this set, mixes pairs of cuts, and randomly perturbs the worst. On the GSet benchmark, we also compare to a random baseline, \textsc{Random}. To ensure a fair comparison to \textsc{Rank-1}, \textsc{Random} generates $\dim+1$ random cuts and selects the best one, matching the number of candidate solutions enumerated by our algorithm. 

\smallskip
\noindent \textit{Implementation Details.}
All experiments were conducted on HPE ProLiant DL360 Gen11 compute nodes (Sapphire Rapids) equipped with Intel Xeon Platinum 8468 CPUs (96 cores at 2.10 GHz). We generate 20 instances for each size and parameter of the small random graphs using the NetworkX library \cite{networkx} with fixed random seeds. We use the implementation of \cite{qiu2024ros} for the \textsc{Greedy} and \textsc{Genetic} algorithms. We institute a 30-minute timeout limit for all experiments. 

\smallskip
\noindent \textit{Evaluation Metrics.}
To measure the quality of the \textsc{Max-3-Cut} approximations found by each algorithm on random graphs, we calculate the ratio of its score to the best score found by any algorithm. We report the average ratio across all random seeds.

\ifiwoca
\subsection{All Experimental Results on GSet}

Tables~\ref{tab:gset1-21}~through~\ref{tab:gset60+} compare our \textsc{Rank-1} algorithm against \textsc{Random}, \textsc{Greedy}, \textsc{Genetic} and \textsc{MOH} on select GSet instances. Reported values are scores of \textsc{Max-3-Cut} approximations with runtimes in seconds in parentheses. Values for \textsc{MOH} are taken from \cite{ma2017moh} and runtimes are not compared due to likely hardware differences. Dashes are used to denote that algorithms did not finish within 30 minutes. Structural information about each graph is drawn from \cite{davis2011university}. 
\fi

\subsection{GPU-Parallel Rank-2 Campaign}
\label{app:gpu}

This section documents a large-scale experimental campaign running the rank-2 algorithm across 15 heterogeneous GPUs.

\subsubsection{Hardware and Software Setup}

\begin{table}[ht]
\centering
\caption{GPU cluster configuration.}
\label{tab:hardware}
\small
\begin{tabular}{@{}llcrl@{}}
\toprule
Machine & GPUs & Arch.\ & RAM & Notes \\
\midrule
\texttt{anton.k0} & 4$\times$P100-SXM2 (16\,GB) & x86\_64 & 251\,GB & Coordinator \\
\texttt{anton.k1} & 4$\times$P100-SXM2 (16\,GB) & x86\_64 & 125\,GB & Reliable \\
\texttt{kp001} & 4$\times$Tesla P100 (16\,GB) & ppc64le & 256\,GB & PowerPC \\
\texttt{kp002} & 3$\times$Tesla P100 (16\,GB) & ppc64le & 256\,GB & Power-capped \\
\bottomrule
\end{tabular}
\end{table}

The implementation uses \texttt{torch.multiprocessing} for intra-machine parallelism and SSH-based coordination across machines.
Neither Ray (which does not build on ppc64le) nor Triton (which requires compute capability $\geq 7.0$; P100 is 6.0) is used.
Each machine runs a \texttt{worker.py} process that receives its slice of the candidate space, computes cut values via batched matrix operations on local GPUs, and writes the best result to shared NFS storage.
No inter-worker communication occurs during execution.

\subsubsection{Throughput-Proportional Work Splitting}

The 15 GPUs are not equally fast.
We calibrate each GPU with a 1000-candidate benchmark ($\sim$2 seconds) before the main run and split work proportional to measured throughput.
Measured throughputs per GPU: \texttt{k0}: 1918, \texttt{k1}: 1931, \texttt{kp001}: 544, \texttt{kp002}: 976 candidates/sec.
This one change reduced wall-clock time by $10.3\times$ on a 750-node instance (from 53 minutes with equal splitting to 5.1 minutes with proportional splitting).

\subsubsection{Rank-2 Results: 54 Instances}

We tested four methods on three graph families (5-regular, torus lattice, stochastic block model) at six sizes ($n \in \{250, 500, 750, 1000, 1250, 1500\}$), with three random seeds each---54 instances total.
\Cref{tab:gpu_regular,tab:gpu_torus,tab:gpu_sbm} report averaged scores and timings.

\begin{table}[ht]
\centering
\caption{Rank-2 GPU campaign: 5-regular graphs (averaged over 3 seeds).}
\label{tab:gpu_regular}
\small
\begin{tabular}{@{}rrrrrrrr@{}}
\toprule
$n$ & Rank-2 & R2 time & Greedy & Gr.\ time & SDP & SDP time & Random \\
\midrule
250  & 1779 & 30s    & 1745 & 0.7s   & 1785 & 46s    & 1355 \\
500  & 3514 & 213s   & 3499 & 5.1s   & 3532 & 641s   & 2651 \\
750  & 5294 & 330s   & 5290 & 13s    & 5307 & 953s   & 3941 \\
1000 & 7043 & 851s   & 7026 & 27s    & 7030 & 1458s  & 5231 \\
1250 & 8805 & 2053s  & 8784 & 77s    & 8774 & 2414s  & 6540 \\
1500 & 10548 & 7126s & 10566 & 191s  & 10494 & 4043s & 7803 \\
\bottomrule
\end{tabular}
\end{table}

\begin{table}[ht]
\centering
\caption{Rank-2 GPU campaign: torus graphs (averaged over 3 seeds).
Note: torus sizes are $n = k^2$ with $k$ even; the closest sizes to the regular-graph sweep are shown.}
\label{tab:gpu_torus}
\small
\begin{tabular}{@{}rrrrrrrr@{}}
\toprule
$n$ & Rank-2 & R2 time & Greedy & Gr.\ time & SDP & SDP time & Random \\
\midrule
252  & 1512 & 30s    & 1467 & 1.2s   & 1512 & 44s    & 1080 \\
504  & 3024 & 227s   & 2944 & 4.9s   & 3024 & 667s   & 2163 \\
756  & 4536 & 324s   & 4403 & 18s    & 4536 & 936s   & 3195 \\
1008 & 6048 & 1535s  & 5880 & 51s    & 6047 & 1581s  & 4218 \\
1250 & 7500 & 3313s  & 7291 & 100s   & 7489 & 2456s  & 5229 \\
1500 & 8999 & 6944s  & 8783 & 318s   & 8965 & 3864s  & 6287 \\
\bottomrule
\end{tabular}
\end{table}

\begin{table}[ht]
\centering
\caption{Rank-2 GPU campaign: SBM graphs, 3 balanced communities, $p_{\text{in}}=0.5$, $p_{\text{out}}=0.1$ (averaged over 3 seeds).}
\label{tab:gpu_sbm}
\small
\begin{tabular}{@{}rrrrrrrr@{}}
\toprule
$n$ & Rank-2 & R2 time & Greedy & Gr.\ time & SDP & SDP time & Random \\
\midrule
250  & 7245 & 30s     & 7896 & 0.5s   & 7698 & 35s   & 6831 \\
500  & 28430 & 220s   & 30159 & 9.1s  & 29936 & 363s  & 27046 \\
750  & 62538 & 331s   & 66612 & 33s   & 66082 & 971s  & 60537 \\
1000 & 110956 & 1126s & 117104 & 136s & 115587 & 1404s & 107592 \\
1250 & 171389 & 3398s & 180843 & 266s & 178574 & 2475s & 167475 \\
1500 & 243627 & 7045s & 259596 & 660s & 256614 & 4029s & 239610 \\
\bottomrule
\end{tabular}
\end{table}

\medskip
\noindent \textbf{Key findings.}
\begin{itemize}
\item \emph{Torus:} Rank-2 matches SDP exactly at every size tested.
At $n = 1008$, rank-2 on 15 GPUs runs in 1535s vs.\ SDP's 1581s---comparable speed but embarrassingly parallel.
At $n = 1250$, rank-2 actually \emph{exceeds} SDP (7500 vs.\ 7489), likely due to SDP solver tolerances.
\item \emph{5-Regular:} Rank-2 exceeds SDP at $n \geq 1000$ (7043 vs.\ 7030) and at $n = 1500$ (10548 vs.\ 10494).
Greedy is competitive but slower at large~$n$.
\item \emph{SBM:} Greedy consistently achieves the highest scores, followed by SDP.
Rank-2 is 5--7\% below the best, confirming that flat spectra ($<2\%$ energy in top-2 eigenvalues) render low-rank approximation ineffective on these instances.
\end{itemize}

\subsubsection{Scaling to Modern Hardware}

The same codebase, unmodified, runs on NVIDIA H200 GPUs (Hopper architecture).
On 8$\times$H200, rank-2 at $n = 3600$ (torus) completes in 18.3 hours.
Per-GPU, H200 achieves roughly $4\times$ the throughput of a P100.

\subsubsection{Diagnostic Findings}

Two observations from the campaign are worth noting for future algorithm design:
\begin{itemize}
\item \emph{Group-constraint filtering is negligible for $K = 3$.}
The ``at most 2 per coordinate group'' validity check rejects $<0.01\%$ of candidate tuples.
The dominant rejection source is the $\phi_{2r-1} \in (-\pi/K, \pi/K]$ feasibility test, which rejects $\sim$29\% of candidates.
This suggests that for odd $K$, targeted pruning of the angle test (e.g., pre-screening via cheap bounds) would be more impactful than structural filtering.
\item \emph{Eigenvalue spectra are flat.}
For all three graph families at $n \geq 500$, the spectral energy ratio $\rho_2 = 1 - (\lambda_1 + \lambda_2)/\sum_j \lambda_j$ is $<2\%$.
Despite this, rank-2 performs well on regular and torus graphs, suggesting the perturbation bounds (\Cref{thm:additive,thm:mult}) are not tight.
\end{itemize}

\subsection{Rank-1 at Scale: Million-Node Experiments}
\label{app:rank1scale}

This section documents the rank-1 scaling experiments.

\subsubsection{Incremental Scoring}
\label{app:incremental}

The naive rank-1 phase sweep evaluates each of $n+1$ candidates by recomputing $\bfz^{\dagger}\bfQ\bfz$ from scratch in $\OO(\mathrm{nnz})$ time, giving total cost $\OO(n \cdot \mathrm{nnz})$.
For sparse graphs with average degree~$d$, $\mathrm{nnz} = nd$, so the sweep costs $\OO(n^2 d)$.

When node~$i$ flips from $z_{\mathrm{old}}$ to $z_{\mathrm{new}}$ with $\delta z = z_{\mathrm{new}} - z_{\mathrm{old}}$, the score change has a closed form:
\begin{equation}
\label{eq:incremental}
\Delta = 2\,\operatorname{Re} \bigl(\overline{\delta z} \cdot (\bfQ\bfz)_i\bigr) + |\delta z|^2 \cdot Q_{ii}.
\end{equation}
This is $\OO(1)$, but maintaining the cached product $\bfQ\bfz$ requires updating the $d(i)$ neighbours of~$i$:
$(\bfQ\bfz)_j \mathrel{+}= Q_{ji}\cdot\delta z$ for each neighbour~$j$.
Over $n$ flips, the total cost is $\OO(n \cdot d)$ instead of $\OO(n^2 d)$.
On a 5-regular graph, this is a factor-$n/5$ speedup: the difference between 17 hours and 70 minutes at $n = 10^6$.

\medskip
\noindent \textbf{Implementation detail.}
The sparse update indexes directly into the CSC structure of $\bfQ$ via \texttt{indptr} and \texttt{indices} arrays.
A seemingly equivalent approach---\texttt{Q.getcol(i).toarray()}---allocates an $n$-dimensional dense vector on every call, turning $\OO(d)$ into $\OO(n)$.

\subsubsection{Two-Eigenvector Complex Rounding}
\label{app:twoeig}

For $K = 3$, the assignment space has three elements $\{1, \omega, \omega^2\}$, but real Laplacian eigenvectors have phases in $\{0, \pi\}$---only 2 of 3 partitions are reachable from a single real eigenvector.
We fix this by combining the top two eigenvectors:
\begin{equation}
\label{eq:twoeig}
q_i = v_1(i) + \imag \cdot v_2(i), \quad i = 1, \ldots, n.
\end{equation}
Now $\arg(q_i) \in [0, 2\pi)$, and all three sectors are accessible.
The cost is one additional eigenvalue computation (negligible compared to the sweep).
On torus graphs, this fix is decisive: with two eigenvectors, rank-1 finds the best solution among all methods tested; without it, rank-1 is limited to a bipartition.

\subsubsection{Hybrid Warm-Start}
\label{app:hybrid}

Local search methods (greedy hill-climbing) converge fast but are trapped by their starting point.
The rank-1 solution captures global spectral structure that greedy is blind to.
The \emph{hybrid} solver runs rank-1 first, then applies greedy local search starting from the rank-1 partition instead of a random one.

This is not merely ``rank-1 but better'': it is a composition of complementary strengths.
Rank-1 captures large-scale structure; greedy refines local details.
Together, they consistently outperform either alone.

\subsubsection{Results: 45 Instances, $n$ from $10^4$ to $10^6$}

We tested five methods on 45 instances across five graph families:
5-regular, toroidal grid, Erd\H{o}s--R\'enyi $G(n, 10/n)$, Delaunay triangulations from SuiteSparse~\cite{davis2011university}, and road networks from SNAP~\cite{leskovec2016snap} (Pennsylvania: $1.09 \times 10^6$ nodes; Texas: $1.38 \times 10^6$ nodes).
Each synthetic family was tested at sizes $n \in \{10^4, 5 \times 10^4, 10^5, 5 \times 10^5, 10^6\}$ with 3 seeds.

Methods: \textsc{Random} (best of 1000 random colourings), \textsc{Rank-1} (this work), \textsc{Greedy} (steepest-ascent from random start), \textsc{Hybrid} (greedy from rank-1 start), \textsc{SA} (simulated annealing with geometric cooling).

\begin{table}[ht]
\centering
\caption{Score ratios relative to greedy (higher is better), averaged over 3 seeds.
Bold indicates the best method per family.}
\label{tab:rank1_ratios}
\small
\begin{tabular}{@{}llccccc@{}}
\toprule
Family & $n$ range & Random & Rank-1 & Greedy & Hybrid & SA \\
\midrule
5-Regular & $10^4$--$10^6$ & 0.75 & 0.89 & 1.00 & \textbf{1.003--1.011} & 1.005--0.989 \\
Torus     & $10^4$--$10^6$ & 0.67 & \textbf{1.031--1.039} & 1.00 & 1.031--1.039 & 1.005--1.018 \\
ER        & $10^4$--$10^5$ & 0.76 & 0.89 & 1.00 & \textbf{1.003--1.005} & 1.006--1.003 \\
Delaunay  & $10^3$--$5 \times 10^5$ & 0.56 & 0.66--0.80 & 1.00 & \textbf{1.003--1.011} & 1.005--1.010 \\
Road net. & $\sim 10^6$ & 0.71 & 0.73 & 1.00 & \textbf{1.003} & --- \\
\bottomrule
\end{tabular}
\end{table}

\medskip
\noindent \textbf{Key findings.}
\begin{itemize}
\item \emph{Hybrid beats greedy on 45/45 instances}, with margins ranging from $+0.3\%$ (road networks) to $+3.9\%$ (torus).
\item \emph{Torus: rank-1 alone is the best method} at every scale, beating greedy by 3--4\% and SA by 0.5--2.1\%.
The spectral structure of the torus is perfectly captured by two eigenvectors.
\item \emph{5-Regular: hybrid overtakes SA at $n \geq 500{,}000$.}
SA's fixed iteration budget becomes insufficient at large scales, while the spectral warm-start advantage persists.
\item \emph{Real-world: rank-1 alone is weak} (66--80\% of greedy on Delaunay, 73\% on road networks), but the hybrid warm-start still adds 0.3--1.1\%.
Even a mediocre spectral starting point beats a random one.
\item \emph{Timing:} at $n = 10^6$ on a 5-regular graph, the hybrid solver finishes in $\sim$70 minutes on a single CPU.
The eigensolve (ARPACK) dominates at large~$n$; hybrid is 15--35$\times$ slower than greedy, but the value proposition is quality, not speed.
\end{itemize}

\clearpage

\subsection{Complete GSet Benchmark Results}
\label{app:gset_full}

Comparison of our Rank-1 algorithm against Random, Greedy, and MOH on all GSet instances.Reported values are scores of Max-3-Cut approximations with wall-clock runtimes in seconds in parentheses. Values for MOH
are taken from~\cite{ma2017moh} and runtimes are not compared due to likely hardware dif-
ferences. Structural information about each graph is drawn from~\cite{davis2011university}.

\newcommand{\clock}[1]{\footnotesize{(#1\text{s})}}

\begin{table}[ht]
\centering
\caption{Complete GSet benchmark results: rank-1 vs.\ baselines.}
\label{tab:gset_part1}
\setlength{\tabcolsep}{4pt}
\small
\begin{tabular}{@{}lrccrrrr@{}}
\toprule
Inst.\ & $n$ & Type & $\{0,1\}$ & Random & Greedy & MOH & Rank-1 \\
\midrule
\textbf{G1} & $800$ & ER & Y & 13024 \clock{0.455} & 14859 \clock{16.3} & 15165 & 13331 \clock{1.09} \\

\textbf{G2} & $800$ & ER & Y & 12993 \clock{0.238} & 14883 \clock{11.5} & 15172 & 12940 \clock{0.594} \\

\textbf{G3} & $800$ & ER & Y & 12992 \clock{0.223} & 14861 \clock{19.3} & 15173 & 13491 \clock{0.564} \\

\textbf{G4} & $800$ & ER & Y & 12995 \clock{0.231} & 14834 \clock{9.89} & 15184 & 13423 \clock{0.617} \\

\textbf{G5} & $800$ & ER & Y & 12992 \clock{0.232} & 14818 \clock{19.0} & 15193 & 13338 \clock{0.582} \\

\midrule

\textbf{G6} & $800$ & ER & N & 335 \clock{0.230} & 2322 \clock{22.8} & 2632 & 905 \clock{0.559} \\

\textbf{G7} & $800$ & ER & N & 112 \clock{0.227} & 2082 \clock{11.4} & 2409 & 992 \clock{0.640} \\

\textbf{G8} & $800$ & ER & N & 108 \clock{0.240} & 2071 \clock{17.0} & 2428 & 986 \clock{0.586} \\

\textbf{G9} & $800$ & ER & N & 179 \clock{0.269} & 2072 \clock{16.6} & 2478 & 919 \clock{0.600} \\

\textbf{G10} & $800$ & ER & N & 85 \clock{0.227} & 2064 \clock{11.6} & 2407 & 940 \clock{0.609} \\

\midrule 

\textbf{G11} & $800$ & Tor. & N & 82 \clock{0.225} & 619 \clock{11.5} & 669 & 426 \clock{0.510} \\

\textbf{G12} & $800$ & Tor. & N & 57 \clock{0.224} & 612 \clock{11.4} & 660 & 430 \clock{0.588} \\

\textbf{G13} & $800$ & Tor. & N & 81 \clock{0.223} & 650 \clock{11.3} & 686 & 451 \clock{0.504} \\

\midrule 

\textbf{G14} & $800$ & Skew & Y & 3237 \clock{0.244} & 3914 \clock{11.5} & 4012 & 3217 \clock{0.563} \\

\textbf{G15} & $800$ & Skew & Y & 3210 \clock{0.237} & 3887 \clock{11.7} & 3984 & 3138 \clock{0.619} \\

\textbf{G16} & $800$ & Skew & Y & 3198 \clock{0.211} & 3882 \clock{11.7} & 3991 & 3214 \clock{0.543} \\

\textbf{G17} & $800$ & Skew & Y & 3200 \clock{0.228} & 3882 \clock{11.9} & 3983 & 3180 \clock{0.539} \\

\midrule

\textbf{G18} & $800$ & Skew & N & 151 \clock{0.219} & 1073 \clock{11.3} & 1207 & 568 \clock{0.662} \\

\textbf{G19} & $800$ & Skew & N & 30 \clock{0.217} & 952 \clock{11.4} & 1081 & 483 \clock{0.493} \\

\textbf{G20} & $800$ & Skew & N & 71 \clock{0.258} & 999 \clock{11.5} & 1122 & 550 \clock{0.495} \\

\textbf{G21} & $800$ & Skew & N & 57 \clock{0.223} & 993 \clock{11.6} & 1109 & 534 \clock{0.572} \\

\midrule

\textbf{G22} & $2,000$ & ER & Y & 13545 \clock{1.70} & 16669 \clock{152} & 17167 & 14145 \clock{2.33} \\

\textbf{G23} & $2,000$ & ER & Y & 13571 \clock{1.47} & 16629 \clock{156} & 17168 & 14251 \clock{2.26} \\

\textbf{G24} & $2,000$ & ER & Y & 13565 \clock{1.49} & 16593 \clock{154} & 17162 & 14130 \clock{2.20} \\

\textbf{G25} & $2,000$ & ER & Y & 13577 \clock{1.38} & 16655 \clock{152} & 17163 & 14157 \clock{2.34} \\

\textbf{G26} & $2,000$ & ER & Y & 13576 \clock{1.38} & 16629 \clock{159} & 17154 & 14075 \clock{2.13} \\

\midrule

\textbf{G27} & $2,000$ & ER & N & 161 \clock{1.39} & 3524 \clock{147} & 4020 & 1655 \clock{2.21} \\

\textbf{G28} & $2,000$ & ER & N & 166 \clock{1.38} & 3427 \clock{153} & 3973 & 1736 \clock{2.10} \\

\textbf{G29} & $2,000$ & ER & N & 302 \clock{1.34} & 3551 \clock{148} & 4106 & 1905 \clock{2.10} \\

\textbf{G30} & $2,000$ & ER & N & 291 \clock{1.35} & 3595 \clock{147} & 4119 & 1952 \clock{2.47} \\

\textbf{G31} & $2,000$ & ER & N & 171 \clock{1.36} & 3468 \clock{154} & 4003 & 1739 \clock{2.64} \\

\midrule 

\textbf{G32} & $2,000$ & Tor. & N & 103 \clock{1.38} & 1536 \clock{147} & 1653 & 1027 \clock{2.82} \\

\textbf{G33} & $2,000$ & Tor. & N & 76 \clock{1.49} & 1494 \clock{153} & 1625 & 1048 \clock{2.76} \\

\textbf{G34} & $2,000$ & Tor. & N & 94 \clock{1.38} & 1490 \clock{146} & 1607 & 861 \clock{2.10} \\

\bottomrule
\end{tabular}
\end{table}

\begin{table}[ht]
\centering
\caption{Complete GSet benchmark results: rank-1 vs.\ baselines.}
\label{tab:gset_part2}
\setlength{\tabcolsep}{4pt}
\small
\begin{tabular}{@{}lrccrrrr@{}}
\toprule
Inst.\ & $n$ & Type & $\{0,1\}$ & Random & Greedy & MOH & Rank-1 \\
\midrule

\textbf{G35} & $2,000$ & Skew & Y & 8042 \clock{1.36} & 9769 \clock{161} & 10046 & 8083 \clock{2.14} \\

\textbf{G36} & $2,000$ & Skew & Y & 8023 \clock{1.39} & 9807 \clock{152} & 10039 & 7972 \clock{2.09} \\

\textbf{G37} & $2,000$ & Skew & Y & 8054 \clock{1.35} & 9826 \clock{161} & 10052 & 8078 \clock{2.19} \\

\textbf{G38} & $2,000$ & Skew & Y & 8004 \clock{1.38} & 9816 \clock{153} & 10040 & 7962 \clock{2.14} \\

\midrule

\textbf{G39} & $2,000$ & Skew & N & 198 \clock{1.37} & 2619 \clock{111} & 2903 & 1405 \clock{2.18} \\

\textbf{G40} & $2,000$ & Skew & N & 138 \clock{1.43} & 2587 \clock{109} & 2870 & 1234 \clock{2.21} \\

\textbf{G41} & $2,000$ & Skew & N & 156 \clock{1.37} & 2577 \clock{108} & 2887 & 1361 \clock{2.17} \\

\textbf{G42} & $2,000$ & Skew & N & 255 \clock{1.33} & 2686 \clock{110} & 2980 & 1362 \clock{2.11} \\

\midrule 

\textbf{G43} & $1,000$ & ER & Y & 6828 \clock{0.336} & 8273 \clock{15.5} & 8573 & 6839 \clock{0.750} \\

\textbf{G44} & $1,000$ & ER & Y & 6809 \clock{0.345} & 8282 \clock{30.2} & 8571 & 6817 \clock{0.773} \\

\textbf{G45} & $1,000$ & ER & Y & 6828 \clock{0.368} & 8298 \clock{29.5} & 8566 & 6986 \clock{0.745} \\

\textbf{G46} & $1,000$ & ER & Y & 6821 \clock{0.345} & 8336 \clock{28.5} & 8568 & 6942 \clock{0.699} \\

\textbf{G47} & $1,000$ & ER & Y & 6804 \clock{0.336} & 8308 \clock{29.4} & 8572 & 7034 \clock{0.701} \\

\midrule

\textbf{G48} & $3,000$ & Tor. & Y & 4108 \clock{4.28} & 5998 \clock{300} & 6000 & 6000 \clock{5.17} \\

\textbf{G49} & $3,000$ & Tor. & Y & 4124 \clock{3.43} & 5996 \clock{394} & 6000 & 6000 \clock{5.27} \\

\textbf{G50} & $3,000$ & Tor. & Y & 4114 \clock{3.15} & 5998 \clock{399} & 6000 & 5934 \clock{5.86} \\

\midrule 

\textbf{G51} & $1,000$ & Skew & Y & 4063 \clock{0.344} & 4915 \clock{30.3} & 5037 & 4015 \clock{0.720} \\

\textbf{G52} & $1,000$ & Skew & Y & 4055 \clock{0.347} & 4925 \clock{29.9} & 5040 & 4033 \clock{0.683} \\

\textbf{G53} & $1,000$ & Skew & Y & 4043 \clock{0.346} & 4895 \clock{31.3} & 5039 & 4020 \clock{0.718} \\

\textbf{G54} & $1,000$ & Skew & Y & 4050 \clock{0.348} & 4918 \clock{32.3} & 5036 & 4083 \clock{0.765} \\

\midrule 

\textbf{G55} & $5,000$ & ER & Y & 8508 \clock{9.92} & --- & 12429 & 9956 \clock{12.7} \\

\textbf{G56} & $5,000$ & ER & N & 147 \clock{10.4} & --- & 4752 & 2741 \clock{13.4} \\

\textbf{G57} & $5,000$ & Tor. & N & 166 \clock{10.3} & --- & 4083 & 1668 \clock{13.0} \\

\textbf{G58} & $5,000$ & Skew & Y & 20029 \clock{10.3} & --- & 25195 & 20291 \clock{14.0} \\

\textbf{G59} & $5,000$ & Skew & N & 326 \clock{9.71} & --- & 7262 & 3362 \clock{13.3} \\

\textbf{G60} & $7,000$ & ER & Y & 11666 \clock{23.1} & --- & 17076 & 13724 \clock{30.6} \\

\textbf{G61} & $7,000$ & ER & N & 490 \clock{23.0} & --- & 6853 & 4026 \clock{30.9} \\

\textbf{G62} & $7,000$ & Tor. & N & 142 \clock{22.7} & --- & 5685 & 2540 \clock{30.3} \\

\textbf{G63} & $7,000$ & Skew & Y & 28039 \clock{23.0} & --- & 35322 & 28633 \clock{30.7} \\

\textbf{G64} & $7,000$ & Skew & N & 713 \clock{22.8} & --- & 10443 & 4710 \clock{29.9} \\

\textbf{G65} & $8,000$ & Tor. & N & 186 \clock{32.4} & --- & 6490 & 3636 \clock{42.5} \\

\textbf{G66} & $9,000$ & Tor. & N & 261 \clock{40.1} & --- & 7416 & 2548 \clock{58.4} \\

\textbf{G67} & $10,000$ & Tor. & N & 147 \clock{49.1} & --- & 8086 & 3117 \clock{75.7} \\

\textbf{G70} & $10,000$ & ER & Y & 6832 \clock{48.7} & --- & 9999 & 9333 \clock{75.2} \\

\textbf{G72} & $10,000$ & Tor. & N & 217 \clock{49.1} & --- & 8192 & 2514 \clock{98.9} \\

\textbf{G77} & $14,000$ & Tor. & N & 487 \clock{108} & --- & 11578 & 4011 \clock{161.6} \\

\textbf{G81} & $20,000$ & Tor. & N & 416 \clock{220} & --- & 16321 & 4122 \clock{352.4} \\

\bottomrule
\end{tabular}
\end{table}

\clearpage

\bibliographystyle{alpha}
\bibliography{references}

@article{al2025exact,
  title={Exact algorithms for quadratic optimization over roots of unity},
  author={Al-Sulami, Ahmad and Fawzi, Hamza and Sun, Shengding},
  journal={arXiv preprint arXiv:2508.02006},
  year={2025}
}

@article{arora2015subexponential,
  title={Subexponential algorithms for unique games and related problems},
  author={Arora, Sanjeev and Barak, Boaz and Steurer, David},
  journal={Journal of the ACM (JACM)},
  volume={62},
  number={5},
  pages={1--25},
  year={2015},
  publisher={ACM New York, NY, USA}
}

@inproceedings{barak2011rounding,
author = {Barak, Boaz and Raghavendra, Prasad and Steurer, David},
title = {Rounding Semidefinite Programming Hierarchies via Global Correlation},
year = {2011},
isbn = {9780769545714},
publisher = {IEEE Computer Society},
address = {USA},
url = {https://doi.org/10.1109/FOCS.2011.95},
doi = {10.1109/FOCS.2011.95},
booktitle = {Proceedings of the 2011 IEEE 52nd Annual Symposium on Foundations of Computer Science},
pages = {472–481},
numpages = {10},
series = {FOCS '11}
}

@inproceedings{oveis2015regularity,
  title={A new regularity lemma and faster approximation algorithms for low threshold rank graphs},
  author={Oveis Gharan, Shayan and Trevisan, Luca},
  booktitle={International Workshop on Approximation Algorithms for Combinatorial Optimization},
  pages={303--316},
  year={2013},
  organization={Springer}
}

@inproceedings{goemans2001approximation,
  title={Approximation algorithms for {MAX-3-CUT} and other problems via complex semidefinite programming},
  author={Goemans, Michel X and Williamson, David},
  booktitle={Proceedings of the thirty-third annual ACM symposium on Theory of computing},
  pages={443--452},
  year={2001}
}

@article{andersson2001new,
  title={A new way of using semidefinite programming with applications to linear equations mod $p$},
  author={Andersson, Gunnar and Engebretsen, Lars and H{\aa}stad, Johan},
  journal={Journal of Algorithms},
  volume={39},
  number={2},
  pages={162--204},
  year={2001},
  publisher={Elsevier}
}

@article{frieze1997improved,
  title={Improved approximation algorithms for {MAX} $k$-{CUT} and {MAX BISECTION}},
  author={Frieze, Alan and Jerrum, Mark},
  journal={Algorithmica},
  volume={18},
  number={1},
  pages={67--81},
  year={1997},
  publisher={Springer}
}

@book{vershynin2018high,
  title={High-dimensional probability: An introduction with applications in data science},
  author={Vershynin, Roman},
  volume={47},
  year={2018},
  publisher={Cambridge university press}
}

@article{ferrez2005solving,
  title={Solving the fixed rank convex quadratic maximization in binary variables by a parallel zonotope construction algorithm},
  author={Ferrez, J-A and Fukuda, Komei and Liebling, Th M},
  journal={European Journal of Operational Research},
  volume={166},
  number={1},
  pages={35--50},
  year={2005},
  publisher={Elsevier}
}

@article{zheng2007mimo,
  title={{MIMO} transmit beamforming under uniform elemental power constraint},
  author={Zheng, X. and Xie, Y. and Li, J. and Stoica, P.},
  journal={IEEE Transactions on Signal Processing},
  volume={55},
  number={11},
  pages={5395--5406},
  year={2007},
  publisher={IEEE}
}

@article{newman2018complex,
  title={Complex semidefinite programming and max-k-cut},
  author={Newman, Alantha},
  journal={arXiv preprint arXiv:1812.10770},
  year={2018}
}

@article{burer2002rank,
  author  = {Burer, Samuel and Monteiro, Renato D. C. and Zhang, Yin},
  title   = {Rank-two relaxation heuristics for MAX-CUT and other binary quadratic programs},
  journal = {SIAM Journal on Optimization},
  year    = {2002},
  volume  = {12},
  number  = {2},
  pages   = {503--521},
  doi     = {10.1137/S1052623401383825}
}

@article{johnson1991optimization,
  title={Optimization by simulated annealing: an experimental evaluation; part {II}, graph coloring and number partitioning},
  author={Johnson, David S and Aragon, Cecilia R and McGeoch, Lyle A and Schevon, Catherine},
  journal={Operations research},
  volume={39},
  number={3},
  pages={378--406},
  year={1991},
  publisher={INFORMS}
}

@article{kyrillidis2014fixed,
  title={Fixed-rank {R}ayleigh quotient maximization by an {$M$}{PSK} sequence},
  author={Kyrillidis, A. and Karystinos, G.},
  journal={IEEE Transactions on Communications},
  volume={62},
  number={3},
  pages={961--975},
  year={2014},
  publisher={IEEE}
}

@article{leung2010optimal,
  title={Optimal phase control for equal-gain transmission in {MIMO} systems with scalar quantization: {C}omplexity and algorithms},
  author={Leung, K.-K. and Sung, C. W. and Khabbazian, M. and Safari, M. A.},
  journal={IEEE Transactions on Information Theory},
  volume={56},
  number={7},
  pages={3343--3355},
  year={2010},
  publisher={IEEE}
}

@article{so2007approximating,
  title={On approximating complex quadratic optimization problems via semidefinite programming relaxations},
  author={So, A. M.-C. and Zhang, J. and Ye, Y.},
  journal={Mathematical Programming},
  volume={110},
  number={1},
  pages={93--110},
  year={2007},
  publisher={Springer}
}

@article{zhang2006complex,
  title={Complex quadratic optimization and semidefinite programming},
  author={Zhang, S. and Huang, Y.},
  journal={SIAM Journal on Optimization},
  volume={16},
  number={3},
  pages={871--890},
  year={2006},
  publisher={SIAM}
}

@article{khot2007optimal,
  title={Optimal inapproximability results for {M}ax{CUT} and other 2-variable {CSP}s?},
  author={Khot, Subhash and Kindler, Guy and Mossel, Elchanan and O’Donnell, Ryan},
  journal={SIAM Journal on Computing},
  volume={37},
  number={1},
  pages={319--357},
  year={2007},
  publisher={SIAM}
}

@article{wedin1972perturbation,
  title={Perturbation bounds in connection with singular value decomposition},
  author={Wedin, P.},
  journal={BIT Numerical Mathematics},
  volume={12},
  number={1},
  pages={99--111},
  year={1972},
  publisher={Springer}
}

@article{chen2018asymmetry,
  title={Asymmetry helps: Eigenvalue and eigenvector analyses of asymmetrically perturbed low-rank matrices},
  author={Chen, Y. and Cheng, C. and Fan, J.},
  journal={arXiv preprint arXiv:1811.12804},
  year={2018}
}

@article{alizadeh1997complementarity,
  title={Complementarity and nondegeneracy in semidefinite programming},
  author={Alizadeh, Farid and Haeberly, Jean-Pierre A and Overton, Michael L},
  journal={Mathematical programming},
  volume={77},
  number={1},
  pages={111--128},
  year={1997},
  publisher={Springer}
}

@article{pataki1998rank,
  title={On the rank of extreme matrices in semidefinite programs and the multiplicity of optimal eigenvalues},
  author={Pataki, G{\'a}bor},
  journal={Mathematics of operations research},
  volume={23},
  number={2},
  pages={339--358},
  year={1998},
  publisher={INFORMS}
}

@article{goemans1995improved,
  title={Improved approximation algorithms for maximum cut and satisfiability problems using semidefinite programming},
  author={Goemans, M. and Williamson, D.},
  journal={Journal of the ACM (JACM)},
  volume={42},
  number={6},
  pages={1115--1145},
  year={1995},
  publisher={ACM}
}

@article{leskovec2016snap,
  title={{SNAP}: A General-Purpose Network Analysis and Graph-Mining Library},
  author={Leskovec, J. and Sosi{\v{c}}, R.},
  journal={ACM Transactions on Intelligent Systems and Technology (TIST)},
  volume={8},
  number={1},
  pages={1},
  year={2016},
  publisher={ACM}
}

@article{singer2011angular,
  title={Angular synchronization by eigenvectors and semidefinite programming},
  author={Singer, A.},
  journal={Applied and computational harmonic analysis},
  volume={30},
  number={1},
  pages={20--36},
  year={2011},
  publisher={Elsevier}
}

@article{burer2003nonlinear,
  title={A nonlinear programming algorithm for solving semidefinite programs via low-rank factorization},
  author={Burer, S. and Monteiro, R.},
  journal={Mathematical Programming},
  volume={95},
  number={2},
  pages={329--357},
  year={2003},
  publisher={Springer}
}

@inproceedings{park2017non,
  title={Non-square matrix sensing without spurious local minima via the {Burer-Monteiro} approach},
  author={Park, D. and Kyrillidis, A. and Carmanis, C. and Sanghavi, S.},
  booktitle={Artificial Intelligence and Statistics},
  pages={65--74},
  year={2017}
}

@inproceedings{kyrillidis2011rank,
  title={Rank-deficient quadratic-form maximization over $M$-phase alphabet: {P}olynomial-complexity solvability and algorithmic developments},
  author={Kyrillidis, A. and Karystinos, G.},
  booktitle={2011 IEEE International Conference on Acoustics, Speech and Signal Processing (ICASSP)},
  pages={3856--3859},
  year={2011},
  organization={IEEE}
}

@article{barvinok1995problems,
  title={Problems of distance geometry and convex properties of quadratic maps},
  author={Barvinok, Alexander I.},
  journal={Discrete \& Computational Geometry},
  volume={13},
  number={2},
  pages={189--202},
  year={1995},
  publisher={Springer}
}

@misc{benhamou2019operatornormupperbound,
      title={Operator norm upper bound for sub-Gaussian tailed random matrices}, 
      author={Eric Benhamou and Jamal Atif and Rida Laraki},
      year={2019},
      eprint={1812.09618},
      archivePrefix={arXiv},
      primaryClass={math.PR},
      url={https://arxiv.org/abs/1812.09618}, 
}

@InProceedings{networkx,
  author =       {Aric A. Hagberg and Daniel A. Schult and Pieter J. Swart},
  title =        {Exploring Network Structure, Dynamics, and Function using NetworkX},
  booktitle =   {Proceedings of the 7th Python in Science Conference},
  pages =     {11 - 15},
  address = {Pasadena, CA USA},
  year =      {2008},
  editor =    {Ga\"el Varoquaux and Travis Vaught and Jarrod Millman},
}

@misc{gset_dataset,
  author =       {Ye, Yinyu},
  title =        {The gset dataset},
  url =   {https://web.stanford.edu/ yyye/yyye/Gset/.},
  year =      {2003},
}

@article{davis2011university,
  title={The University of Florida sparse matrix collection},
  author={Davis, Timothy A and Hu, Yifan},
  journal={ACM Transactions on Mathematical Software (TOMS)},
  volume={38},
  number={1},
  pages={1--25},
  year={2011},
  publisher={ACM New York, NY, USA}
}

@article{panxing2016genetic,
  title={PCI planning method based on genetic algorithm in LTE network},
  author={Panxing, LI and Jing, WANG},
  journal={Telecommunications Science},
  volume={32},
  number={3},
  pages={2016082},
  year={2016}
}

@article{gui2018bqp,
  title={PCI planning based on binary quadratic programming in LTE/LTE-a networks},
  author={Gui, Jihong and Jiang, Zhipeng and Gao, Suixiang},
  journal={IEEE Access},
  volume={7},
  pages={203--214},
  year={2018},
  publisher={IEEE}
}

@article{qiu2024ros,
  title={ROS: A GNN-based Relax-Optimize-and-Sample Framework for Max-k-Cut Problems},
  author={Qiu, Yeqing and Xue, Ye and Wang, Akang and Wang, Yiheng and Shi, Qingjiang and Luo, Zhi-Quan},
  journal={arXiv preprint arXiv:2412.05146},
  year={2024}
}

@article{ma2017moh,
  title={A multiple search operator heuristic for the max-k-cut problem},
  author={Ma, Fuda and Hao, Jin-Kao},
  journal={Annals of Operations Research},
  volume={248},
  number={1},
  pages={365--403},
  year={2017},
  publisher={Springer}
}

@article{barrett2022learning,
  title={Learning to solve combinatorial graph partitioning problems via efficient exploration},
  author={Barrett, Thomas D and Parsonson, Christopher WF and Laterre, Alexandre},
  journal={arXiv preprint arXiv:2205.14105},
  year={2022}
}

@article{tonshoff2022one,
  title={One model, any CSP: graph neural networks as fast global search heuristics for constraint satisfaction},
  author={T{\"o}nshoff, Jan and Kisin, Berke and Lindner, Jakob and Grohe, Martin},
  journal={arXiv preprint arXiv:2208.10227},
  year={2022}
}

@article{schuetz2022combinatorial,
  title={Combinatorial optimization with physics-inspired graph neural networks},
  author={Schuetz, Martin JA and Brubaker, J Kyle and Katzgraber, Helmut G},
  journal={Nature Machine Intelligence},
  volume={4},
  number={4},
  pages={367--377},
  year={2022},
  publisher={Nature Publishing Group UK London}
}

@article{abe2019solving,
  title={Solving np-hard problems on graphs with extended alphago zero},
  author={Abe, Kenshin and Xu, Zijian and Sato, Issei and Sugiyama, Masashi},
  journal={arXiv preprint arXiv:1905.11623},
  year={2019}
}

@article{punnen2022quadratic,
  title={The quadratic unconstrained binary optimization problem},
  author={Punnen, Abraham P},
  journal={Springer International Publishing},
  volume={10},
  pages={978--3},
  year={2022},
  publisher={Springer}
}

@article{forrester2008quadratic,
  title={Quadratic binary programming models in computational biology},
  author={Forrester, Richard John and Greenberg, Harvey J},
  journal={Algorithmic Operations Research},
  volume={3},
  number={2},
  year={2008}
}

@article{date2021qubo,
  title={QUBO formulations for training machine learning models},
  author={Date, Prasanna and Arthur, Davis and Pusey-Nazzaro, Lauren},
  journal={Scientific reports},
  volume={11},
  number={1},
  pages={10029},
  year={2021},
  publisher={Nature Publishing Group UK London}
}

@article{papailiopoulos2010maximum,
  title={Maximum-likelihood noncoherent OSTBC detection with polynomial complexity},
  author={Papailiopoulos, Dimitris S and Karystinos, George N},
  journal={IEEE transactions on wireless communications},
  volume={9},
  number={6},
  pages={1935--1945},
  year={2010},
  publisher={IEEE}
}

\end{document}